\theoremstyle{plain}
\newtheorem{thm}{Theorem}[section]
\newtheorem{proposition}[thm]{Proposition}
\newtheorem{lem}[thm]{Lemma}
\theoremstyle{definition}
\newtheorem{definition}[thm]{Definition}
\newtheorem{algorithm}[thm]{Algorithm}
\newtheorem{remark}[thm]{Remark}
\newtheorem{example}[thm]{Example}
\newcommand{\cN}{\mathcal{N}}
\newcommand{\cS}{\mathcal{S}}
\newcommand{\cC}{\mathcal{C}}
\newcommand{\cR}{\mathcal{R}}
\newcommand{\cY}{\mathcal{Y}}
\newcommand{\cX}{\mathcal{X}}
\newcommand{\R}{\mathbb{R}}
\newcommand{\Z}{{\mathbb Z}}
\renewcommand{\k}{\kappa}
\newcommand{\e}{n}
\DeclareMathOperator{\diag}{diag}
\DeclareMathOperator{\Rem}{rem}
\DeclareMathOperator{\rank}{rank}
\DeclareMathOperator{\Mult}{Mult}
\DeclareMathOperator{\Circuits}{Circuits}
 \newcommand\blue[1]{\textcolor{black}{#1}}
\newcommand\red[1]{\textcolor{black}{#1}}
\begin{document}

\author{AmirHosein Sadeghimanesh$^1$, Elisenda Feliu$^{1,2}$}
\title{The multistationarity structure of networks with intermediates and a binomial core network}
\date{\today}

\footnotetext[1]{Department of Mathematical Sciences, University of Copenhagen, Universitetsparken 5, 2100 Copenhagen, Denmark}
\footnotetext[2]{Corresponding author: efeliu@math.ku.dk}

	\maketitle
	\begin{abstract}
This work addresses whether a reaction network, taken with mass-action kinetics, is multistationary, that is, admits more than one positive steady state in some stoichiometric compatibility class. 
We build on previous work on the effect that removing or adding intermediates has on multistationarity,  and also on methods to detect multistationarity for networks with a binomial steady state ideal. In particular, we provide a new determinant criterion to decide whether a network is multistationary, which  applies when the network obtained by removing intermediates has a binomial steady state ideal. We apply this method to easily characterize which subsets of complexes are responsible for multistationarity; this is what we call the \emph{multistationarity structure} of the network. We use our approach to compute the multistationarity structure of the $n$-site sequential  distributive  phosphorylation cycle for arbitrary $n$.

\medskip
\textbf{Keywords: } binomial ideal, phosphorylation cycle, multistationarity, model reduction, determinant criterion, toric ideal
	\end{abstract}
	
	\section*{Introduction}
Given a reaction network, an \emph{intermediate} is a species that does not interact with any other species, is produced by at least one reaction, and consumed by at least one reaction. Typical intermediates $Y$ arise in  Michaelis-Menten type mechanisms as
\[ c \ce{<=>} Y \ce{->} c',\]
where $c,c'$ are arbitrary complexes. Removal of the intermediates of  a network yields a new network, called the \emph{core network}, as introduced in \cite{Feliu-Simplifying} (and further generalized in \cite{Saez:reduction}). 
For example, removal of $Y$ from the mechanism above gives the reaction $c\rightarrow c'$. 

We consider mass-action kinetics, such that the system of ordinary differential equations (ODEs) modeling the evolution of the concentrations of the species  in time is polynomial. 
As shown in \cite[Theorem 5.1]{Feliu-Simplifying},  multistationarity of the core network implies multistationarity of the original (extended) network, provided  a technical realization condition is satisfied. Further, whether an extended network is multistationary  depends only on  the set of complexes of the core network that react to an intermediate. These complexes are called \emph{inputs}. 
The subsets of complexes that give rise to multistationarity define the \emph{multistationarity structure} of the core network. 

In this work we present an approach to find the multistationarity structure of core networks, and thereby provide a fast way to decide whether a given extended network is multistationary or not; indeed, it suffices to find the set of inputs of our network and check whether it belongs to the multistationarity structure. 

The method applies to core networks that are \emph{binomial} (the ideal generated by the steady state polynomials is binomial). For these networks, a method to decide on multistationarity was introduced in \cite{Dickenstein-Toric}, based on the computation of sign vectors. Under some extra assumptions, another method relying on the computation of a symbolic determinant and inspection of the sign of its coefficients is presented in  \cite{Feliu-Sign}  (see Theorem \ref{Theorem:Multistationarity_and_Determinant_Condition}).

The first main result of this paper is Theorem \ref{Theorem:Usage_of_Canonical_networks_for_multistationarity}, where we combine the results in \cite{Feliu-Simplifying,Feliu-Sign}  into a new determinant criterion for multistationarity. The criterion applies to extended networks with binomial core network, even though the original network might not be binomial. \blue{One of the crucial hypothesis of this theorem is the fulfillment of the technical realization condition. We  show in Proposition~\ref{Lemma:Realization_more_classes} that these conditions hold  for typical types of intermediates like the Michaelis-Menten mechanism above, bypassing thereby the need to perform costly computations and making Theorem \ref{Theorem:Usage_of_Canonical_networks_for_multistationarity} useful in practice. }

The second main result   is Theorem \ref{Proposition:Multistationarity_and_binomial_extensions}, which removes the technical realization condition in \cite{Feliu-Simplifying} for concluding that an extended network is multistationary provided the core network is. Instead, we require that both the core and the extended networks are binomial  (in a compatible way). 
\blue{This result is appealing since it might not be straightforward to verify that the realization conditions  are satisfied for networks not covered in our study in Section~\ref{sec:Realization_conditions}. }

The third main contribution  is Algorithm \ref{Algorith:Circuits_Main}, which returns the multistationarity structure of a binomial core network  based on the determinant criterion. The algorithm relies on the study of the signs of a polynomial obtained after the computation of the determinant of a symbolic matrix. Our approach is more direct than testing if the  extended network is multistationary for all subsets of input complexes. 
\blue{We apply our method to completely determine the multistationarity structure of a biologically relevant network, namely the $\e$-site  distributive  sequential phosphorylation cycle, for arbitrary $\e$. This results in our last main contribution, namely Theorem~\ref{Theorem:Circuits_of_n_site_phosphorylation}, which characterizes multistationarity for all variants of the $\e$-site phosphorylation cycle obtained by altering the configuration of intermediates. }
With this example we further illustrate how our results enable the study of a whole family of networks at once.

\blue{Finally, in this work we clarify and connect previous results on multistationarity, determinant conditions, and intermediates \cite{Dickenstein-Toric,Feliu-Sign,Feliu-Simplifying}. For example, this leads to the elaboration on the realization conditions in Section~\ref{sec:Realization_conditions},  and to highlighting Theorem~\ref{Theorem:Multistationarity_and_Determinant_Condition}, which is presented in a more general form in  \cite{Feliu-Sign}, but where the important application to binomial networks might not be appreciated. }

\medskip
  The structure of the paper is as follows. In \S\ref{sec:Reaction_networks} we introduce basic concepts on  reaction networks and multistationarity. In \S \ref{sec:Binomial_networks_and_multistationarity} we introduce (complete) binomial  networks and the determinant criterion for determining  multistationarity.  In \S\ref{sec:Intermediates_and_multistationarity} we  focus on intermediates and give the determinant criterion for multistationarity applicable to  extended networks with a binomial core network. In \S\ref{sec:Lifting_multistationarity_and_the_multistationarity_structure} we link multistationarity of the core and extended networks, and in particular study the multistationary structure. Finally, in \S \ref{sec:Realization_conditions} we expand on how to check the realization conditions and determine network structures that satisfy them.

	\medskip
	\textbf{Notation. }
	Subscripts $\geq 0$, $>0$ for $\R$ refer to the non-negative and positive real num\-bers. The sets $\{1,\dots,m\}$ and $\{m_1,\dots,m_2\}$ are respectively denoted by $[m]$ and $[m_1,m_2]$. In particular $[m]=[1,m]$.
The cardinality of a set $A$ is denoted by $|A|$.

Consider two vectors $u,v\in\R^n$.  The scalar product of $u$ and $v$ is denoted by $u\cdot v$.
The vector $v^{u}$ is defined as $\prod_{j=1}^n v_j^{u_j}$, and for a matrix $M\in\R^{n\times m}$ with column vectors $u^{(i)}$, $i\in[m]$, the vector $v^M$ is the vector whose $i$-th entry is $v^{u^{(i)}}$. 
	We let $\diag(v)$ be the diagonal matrix with diagonal $v$ and 
\begin{equation}\label{eq:Mv} M_v=M\diag(v).
\end{equation}
The sign vector of $v$, $\sigma(v)\in\{-1,0,1\}^n$, is defined for $ i\in[n]$ as
\begin{equation}\label{eq:sigma}\sigma(v)_i=\left\{\begin{array}{lll}
	1 & \text{if} & v_i>0\\
	0 & \text{if} & v_i=0\\
	-1 & \text{if} & v_i<0.
	\end{array}\right.
\end{equation}

	\section{Reaction networks}\label{sec:Reaction_networks}
In this section we briefly introduce the ingredients from chemical reaction network theory needed in the sequel. See for example \cite{gunawardena-notes,feinbergnotes}.
A reaction network, also called a \emph{network},  is a triplet of finite sets $\cN=(\cS,\cC,\cR)$. 
The three sets are called respectively the set of \emph{species}, \emph{complexes} and \emph{reactions}.
The elements of $\cC$ are finite linear combinations of the species with non-negative integer coefficients. The set of reactions consists of ordered pairs $(c,c')$ of complexes, denoted $c\rightarrow c'$. \blue{We let $r$ be the cardinality of $\cR$. }

After fixing an order on $\cS$, write $\cS=\{X_1,\dots,X_n\}$. We identify a complex $c\in\cC$ with  the vector in $\R^n$ whose $i$-th entry is the coefficient of $X_i$ in $c$.  Therefore a complex $c$ is either given as  $\sum_{X\in\cS}c_XX$ or by the corresponding vector
 (again denoted $c$). 

There is a natural digraph associated with a network, with vertex set $\cC$ and edge set $\cR$. We often identify the reaction network with the digraph for simplicity. 
	The stoichiometric matrix $N$ is an \blue{$n \times r$} matrix whose column vectors are $c'-c$ for each reaction $c\rightarrow c'\in\cR$. This matrix depends on a fixed order of the set of reactions. The (real) column space of $N$ is called the \emph{stoichiometric subspace} and is denoted by $S$. Its dimension, that is, the rank of $N$,  is  the \emph{rank} of the network.

In this work we consider so-called \emph{mass-action kinetics}. Under this assumption, the evolution of the concentration of the species in time is modeled by means of   a polynomial ODE system as follows.  
First, a positive real number $k_{c\rightarrow c'}$ is assigned to each reaction $c\rightarrow c'$. This number is called the \emph{reaction rate constant} and often written as a label of the reaction in the associated digraph.  We interchangeably write $k_i=k_{c\rightarrow c'}$ if $c\rightarrow c'$ is the $i$-th reaction and write the vector of reaction rate constants   $k\in \R_{>0}^{r}$. 

Next, we let $x=(x_1,\dots,x_n)$ denote the vector of the concentrations of $X_1,\dots,X_n$; note that in examples we simply use corresponding  lower-case letters to denote concentrations. 
Given $x\in \R^{n}$, we define the vector $\psi(x)\in  \R^{r}$ as 
\[\psi(x)_i=x^c,\qquad  \textrm{if }c\rightarrow c' \textrm{ is the $i$-th reaction}.\]
Now  the ODE system associated with  the network and $k\in \R^{r}_{>0}$ is
\begin{equation}\label{eq:ODE}
\frac{dx}{dt}=F_{k}(x), \qquad \textrm{where } F_{k}(x)=N_k\psi(x),\quad x\in \R^{n}_{\geq 0}.
\end{equation}
Recall that $N_k=N\diag(k)$, c.f.~\eqref{eq:Mv}.

	The solution to \eqref{eq:ODE} with an initial condition $x_0\in \R^n_{\geq 0}$ is confined to  the \emph{stoichiometric compatibility class} of $x_0$:  $(x_0 + S)\cap \R^n_{\geq 0}$ \cite[Remark 3.4]{FeinbergExistenceAndUniquness}. 
Equations for these classes are found as follows. Let $d$ be the \emph{corank} of the network, that is, $d=n-\rank(N)$.
	A matrix $Z\in \R^{d\times n}$ whose rows form a basis of the orthogonal complement  $S^\perp$ of $S$ is called a \emph{matrix of conservation laws}. Then the set $(x_0 + S)\cap \R^n_{\geq 0}$ agrees with the set
	\[\{u\in\R_{\geq 0}^n \mid Z u=Z x_0\}.\]

	A positive \emph{steady state} is a solution to the system $F_k(x)=0$ in $\R^{n}_{>0}$. Since we would like to treat the values of $k$ as unknown, we view the polynomials $F_{k,i}(x)$ as polynomials in  the ring $\mathbb{R}(k)[x]$ by regarding $k$ as parameters  instead of positive real numbers. When we do this, we write $F_i(x)$.
Then $F_1(x),\dots,F_n(x)$ are called the \emph{steady state polynomials} of the network, and the ideal $I$ they generate is  the \emph{steady state ideal}:
\[ I= \big\langle F_1(x),\dots,F_n(x)\big\rangle \subseteq \mathbb{R}(k)[x].\]
Throughout this work, given an element or subset $B$ of  $\mathbb{R}(k)[x]$, we  denote by $B_k$ the specialization of $B$ to a given value of $k$. 
	
	It is always possible to find a basis (a set of generators) of the steady state ideal of a network with cardinality equal to the rank of the network.  Indeed, $d$ of the steady state polynomials are redundant since they can be expressed as a linear combination of the $n-d$ remaining polynomials.
	
	\begin{definition}\label{Definition:Multistationarity}
		We say that a reaction network   is \emph{multistationary} if there exists a strictly positive vector $k\in\R^{r}_{>0}$ such that the system $F_{k,i}(x)=0$, $i\in[n]$, has more than one positive solution in a stoichiometric compatibility class. Alternatively, given a matrix of conservation laws $Z$, the system 
\[N_k\psi(x)=0\quad \text{ and }\quad Zx=\alpha\] 
has at least two positive solutions for some positive $k$ and $\alpha\in \R^d$.
	\end{definition}

	\section{Binomial networks and multistationarity}\label{sec:Binomial_networks_and_multistationarity}
	In this section we discuss and expand known results on determining whether a network is multistationary when the steady state ideal is binomial. The main references are \cite{Dickenstein-Toric,Feliu-Sign}. 
	An ideal is binomial if it admits a binomial basis, that is, a basis with all polynomials having at most two terms. It is   well known  that an ideal is binomial if and only if the reduced Gr\"obner basis in an arbitrary  monomial order consists only of  binomials \cite[Corollary 1.2]{Eisenbud-Binomial}.

We start with an observation on changing bases in $\mathbb{R}(k)[x]$ for $k=(k_1,\dots,k_r)$ and $x=(x_1,\dots,x_n)$.
	For a set of polynomials $A$ in a polynomial ring $K[x]$, we let $V(A)\subseteq K^n$ denote its solution set, which agrees with $V(\langle A \rangle)$.
	If $B$ and $B'$ are bases of the same ideal $I$ in $\mathbb{R}(k)[x]$, then $V(B)=V(B')\subseteq\big(\mathbb{R}(k)\big)^n$. However, this does not imply that the specializations to real values $k\in\mathbb{R}^r$ agree, that is, it can happen that  $V(B_{k})\neq V(B'_{k})\subseteq\mathbb{R}^n$. Since we want to study the steady state ideal in $\mathbb{R}(k)[x]$ but obtain results for specific values of $k$, we introduce the following definition.

	\begin{definition}\label{Definition:Binomial_networks}
		Let $\cN$ be a reaction network and $B\subseteq\R(k)[x]$ the set of steady state polynomials. A basis $B'$ of the steady state ideal of $\cN$ is called \emph{admissible} if for every $k\in\R_{>0}^{r}$ it holds that \blue{$V(B_k)\cap \R^n_{\geq 0}=V(B_k')\cap \R^n_{\geq 0}$}. The network $\cN$ is a \emph{binomial network} if the steady state ideal has an admissible binomial basis.
	\end{definition}

	We consider a sufficient condition to decide whether the solution sets of two parametric systems agree for any specialization of the parameters, and in particular, for when a basis of the steady state ideal  is admissible. 	Let $B=\{f_1,\dots,f_\ell\}$ and $B'=\{f_1',\dots,f'_{\ell'}\}$. We consider representations of $B$ in terms of $B'$ and \emph{vice versa}, that is, we write
\begin{equation}\label{eq:representation}
f_i=\sum_{j\in [\ell']}\tfrac{h_{ij}}{h_i}f_j',\quad \textrm{for }i\in[\ell],\qquad\textrm{and}\qquad  f_i'=\sum_{j\in [\ell]}\tfrac{h_{ij}'}{h_i'}f_j, \quad\textrm{for }i\in[\ell'],
\end{equation}
with 
 $h_i,h_i'\in\mathbb{R}[k]$ and $h_{i1},\dots,h_{i\ell'},h_{i1}',\dots,h_{i\ell}'\in\mathbb{R}[k][x]$. Note that these representations might not be unique.

	\begin{lem}\label{Lemma:Bases_having_same_solution_sets}
With the notation above, given two bases $B$ and $B'$ of an ideal in $\mathbb{R}(k)[x]$, if $k^\star$ is not in the zero set of $(\prod_{i=1}^\ell h_i)(\prod_{i=1}^{\ell'} h_i')$, then  $\langle B_{k^\star}\rangle=\langle B'_{k^\star}\rangle$.
	\end{lem}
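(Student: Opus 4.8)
The plan is to reduce everything to honest polynomial identities by clearing the denominators $h_i$ and $h_i'$, and then to specialize at $k^\star$, where the nonvanishing hypothesis guarantees that these denominators become invertible scalars. Since $\langle B_{k^\star}\rangle=\langle B'_{k^\star}\rangle$ is an equality of ideals in $\mathbb{R}[x]$, I would establish it by proving the two inclusions separately, each one obtained from one of the two representations in \eqref{eq:representation}.

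First I would take the representation $f_i=\sum_{j\in[\ell']}\tfrac{h_{ij}}{h_i}f_j'$ in $\mathbb{R}(k)[x]$ and multiply through by $h_i\in\mathbb{R}[k]$ to obtain the denominator-free identity $h_i f_i=\sum_{j\in[\ell']} h_{ij} f_j'$. Specializing at $k=k^\star$ (which is unproblematic, the relevant basis elements being polynomial in $k$) then gives $h_i(k^\star)\,f_i(k^\star)=\sum_{j\in[\ell']} h_{ij}(k^\star)\,f_j'(k^\star)$ as an identity in $\mathbb{R}[x]$. Because $k^\star$ does not lie in the zero set of $\prod_i h_i$, each $h_i(k^\star)$ is a nonzero real number, so I may divide by it to write $f_i(k^\star)=\sum_{j}\tfrac{h_{ij}(k^\star)}{h_i(k^\star)}f_j'(k^\star)$. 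The coefficients $\tfrac{h_{ij}(k^\star)}{h_i(k^\star)}$ are genuine elements of $\mathbb{R}[x]$, so each $f_i(k^\star)$ lies in $\langle B'_{k^\star}\rangle$, yielding $\langle B_{k^\star}\rangle\subseteq\langle B'_{k^\star}\rangle$. Running the identical argument with the second representation $f_i'=\sum_{j\in[\ell]}\tfrac{h_{ij}'}{h_i'}f_j$, and using that $k^\star$ also avoids the zero set of $\prod_i h_i'$, gives the reverse inclusion and hence the claimed equality.

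The step that requires care is the passage from the rational-function identity to its specialization: only after clearing the denominator $h_i$ does the resulting equation become a polynomial identity that can be evaluated term by term at $k^\star$, and only the nonvanishing of $h_i(k^\star)$ lets me divide back to recover $f_i(k^\star)$ as an $\mathbb{R}[x]$-combination of the $f_j'(k^\star)$. This is precisely where the hypothesis on the product $(\prod_i h_i)(\prod_i h_i')$ enters, and it is the only genuinely delicate point; everything else is the symmetric bookkeeping of the two inclusions.
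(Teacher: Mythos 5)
Your proof is correct and follows essentially the same route as the paper's: both use the two representations in \eqref{eq:representation}, the non-vanishing of $(\prod_i h_i)(\prod_i h_i')$ at $k^\star$ to ensure the coefficients specialize to elements of $\mathbb{R}[x]$, and then conclude the two ideal inclusions $B_{k^\star}\subseteq\langle B'_{k^\star}\rangle$ and $B'_{k^\star}\subseteq\langle B_{k^\star}\rangle$. Your explicit clearing of denominators is just a more detailed justification of the paper's terse claim that "the equalities in \eqref{eq:representation} specialize to $k^\star$."
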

	\begin{proof}
For all $i\in[\ell]$ and $j\in[\ell']$ we have that  $\tfrac{h_{k^\star,ij}}{h_{k^\star,i}},\tfrac{h'_{k^{\star},ji}}{h'_{k^\star,j}}\in\mathbb{R}[x]$ and the equalities in \eqref{eq:representation} specialize to $k^\star$.
Hence $B_{k^\star}\subseteq\langle B_{k^\star}'\rangle$ and  $B'_{k^\star}\subseteq\langle B_{k^\star}\rangle$ and so $\langle B_{k^\star}\rangle=\langle B'_{k^\star}\rangle$. 
	\end{proof}

In particular, if $(\prod_{i=1}^\ell h_i)(\prod_{i=1}^{\ell'} h_i')$ has no positive solution,  then $V(B_k)=V(B_k')$ for all positive $k$.

	\begin{example}\label{Example Surjecctivity dependence}
		Consider the following reaction network
\[ 
X_1 \ce{->[k_1]} 2X_1 \ce{<-[k_3]} X_2 \qquad X_1 \ce{->[k_2]} 2X_2 \ce{<-[k_4]} X_2.
\]
		The set of steady state polynomials is
		\[B=\big\{(k_1-k_2)x_1+(2k_3)x_2,(2k_2)x_1+(-k_3+k_4)x_2\big\},\]
and the set 		$B'=\{x_1-x_2,x_1-2x_2\}$
		is another basis of the steady state ideal in $\R(k)[x]$.
\blue{To see this, we note that
		\begin{align*}
		(k_1-k_2)x_1+(2k_3)x_2 &= (2k_1-2k_2+2k_3)(x_1-x_2)+(-k_1+k_2-2k_3)(x_1-2x_2)\\
		(2k_2)x_1+(-k_3+k_4)x_2 &= (4k_2-k_3+k_4)(x_1-x_2)+(-2k_2+k_3-k_4)(x_1-2x_2),
		\end{align*}
which gives that $B\subseteq\langle B'\rangle$. Similarly, we write
\begin{equation} \label{eq:ex1}
\begin{aligned}
x_1-x_2 &= \tfrac{(-2k_2+k_3-k_4) ((k_1-k_2)x_1+2k_3\, x_2)}{k_1k_3-k_1k_4+3k_2k_3+k_2k_4}+\tfrac{2k_2\, x_1+(-k_3+k_4)x_2}{k_1k_3-k_1k_4+3k_2k_3+k_2k_4}  \\
x_1-2x_2 &= \tfrac{(k_1-k_2)x_1+2k_3\, x_2}{k_1k_3-k_1k_4+3k_2k_3+k_2k_4}+\tfrac{2k_2\, x_1+(-k_3+k_4)x_2}{k_1k_3-k_1k_4+3k_2k_3+k_2k_4}.
\end{aligned}
\end{equation}
Therefore $B'\subseteq\langle B\rangle$ in $\R(k)[x_1,x_2]$, showing that both sets are bases of the same ideal. 
However, we have that $(a,b)\in \mathbb{R}_{>0}^2$ belongs to $V(B_k)$ with $k=(2\tfrac{b}{a}+1,1,1,2\tfrac{a}{b}+1)$. But $V(B'_k)\cap\R_{>0}^2=\emptyset$  for every choice of $k$.
		Hence $B'$ is not an admissible basis. Note that the denominators in \eqref{eq:ex1} vanish for this choice of $k$, and hence the assumptions of Lemma~\ref{Lemma:Bases_having_same_solution_sets} do not hold.}
	\end{example}

\color{black}
\begin{remark}\label{rk:binomial}
By computing a Gr\"obner basis of the steady state ideal and afterwards  verifying whether Lemma~\ref{Lemma:Bases_having_same_solution_sets} applies, one can determine whether a network is binomial. However, this approach will easily fail even for relatively small networks, due to the computational cost. 
Fortunately, the structure of realistic networks allows us often to conclude the existence of an admissible binomial basis without the need of finding a Gr\"obner basis.  Two strategies can be employed. The first consists of identifying \emph{intermediates} and using the results in \cite{Grobner}. This approach is explained in Remark~\ref{remark:algind} below. The second strategy exploits the frequent linearity in the steady state equations; indeed, admissible binomial bases are often found by simple row operations on $N$ or by performing linear combinations of the steady state polynomials. These can be detected by visual inspection or by Gaussian elimination. See \cite{Dickenstein-Structured} for an elaboration on this special case. Additionally, in \cite{Dickenstein-MESSI}, the authors introduce structural conditions on a class of networks called MESSI systems,  that guarantee that the steady state ideal is binomial, thereby bypassing the use of expensive computations. These systems are called \emph{s-toric}. 

A recommended initial check is to assign random values to the parameters and compute a Gr\"obner basis for this specialization. If the ideal is not binomial for this random choice, then the ideal cannot admit an admissible binomial basis. 
\end{remark}

\color{black}

The connection between binomial ideals and multistationarity is as follows. 
	 Consider a system of binomial equations in $\mathbb{R}(k)[x]$, say
\begin{equation}\label{eq:binomial_equations}
	p_1(k)x^{c_1}-p_1'(k)x^{c_1'}=0,\quad
	\dots \quad
	p_s(k)x^{c_s}-p_s'(k)x^{c_s'}=0	.
\end{equation}
If one of the equations has only one term, or the two terms of a binomial have the same sign, then the system does not admit positive solutions. If $p_i\neq 0$ and $p_i' \neq 0$ in $\R(k)$, the positive solutions to \eqref{eq:binomial_equations} are the positive solutions of the following system
	\[
	x^{c_1-c_1'}=\tfrac{p_1'(k)}{p_1(k)},\quad
	\dots\quad
	x^{c_s-c_s'}=\tfrac{p_s'(k)}{p_s(k)}
	.\]
\blue{The right-hand side of these expressions specialize at least to all values $k^\star\in \R^r_{>0}$ for which the denominators $p_i(k)$ do not vanish. }
	Letting
	\begin{equation}\label{Equation Gamma and M}
	\gamma(k):=\begin{bmatrix}
	\frac{p_1'(k)}{p_1(k)}\\
	\vdots\\
	\frac{p_s'(k)}{p_s(k)}
	\end{bmatrix}
	\qquad \text{ and } \qquad M :=\begin{bmatrix}
	c_1-c_1'\\
	\vdots\\
	c_s-c_s'
	\end{bmatrix}^T \in\R^{n\times s},
	\end{equation}
	the set of positive solutions of \eqref{eq:binomial_equations} for a positive vector $k$ such that $p_i(k)\neq 0$ for all $i\in [s]$ is
	\begin{equation}\label{Equation Binomial parametrization}
	\big\{x\in\mathbb{R}_{>0}^n\mid x^M=\gamma(k) \big\}.
	\end{equation}
 Let $M'$ be a matrix whose rows form a basis of the orthogonal complement of the row space of $M$. The solution set of $x^M=\gamma(k)$ is non-empty if and only if $M'\ln\hspace{-2pt}\big(\gamma(k)\big)=0$, where $\ln\hspace{-2pt}\big(\gamma(k)\big)$ is defined component-wise. To see why, take the logarithm of both sides, which gives $M^T \ln(x)=\ln\hspace{-2pt}\big(\gamma(k)\big)$, and impose that $\ln\hspace{-2pt}\big(\gamma(k)\big)$ belongs to the image \blue{of the transpose $M^T$ of $M$.}
	
	The parametrization of positive solutions of a binomial system \eqref{eq:binomial_equations}  in \eqref{Equation Binomial parametrization}, makes it possible to use results of \cite{Dickenstein-Toric,Feliu-Sign} for detecting multistationarity of binomial networks. \blue{Recall the sign vector $\sigma(\cdot)$ defined in \eqref{eq:sigma}.}
	
	\begin{thm}[\cite{Feliu-Sign}, Proposition 3.9 and Corollary 3.11]\label{Theorem Sign Condition}
		Let $\cN$ be a binomial network. Let $M$ and $\gamma(k)$ be as in (\ref{Equation Gamma and M}), obtained from an admissible binomial basis $B$ of $I$, and $Z$ be a matrix of conservation laws. Consider the following conditions:
		
\smallskip
\noindent		\emph{(surj) Surjectivity condition}:
		for all $x\in\mathbb{R}^{n}_{>0}$ there exists $k\in\mathbb{R}^r_{>0}$ such that $x^M=\gamma(k)$ (equivalently, $x\in V(B_k)$).

\smallskip		
\noindent		\emph{(sign) Sign Condition}:
		there exist $u,v\in\mathbb{R}^n\setminus\{0\}$ such that $M^T u=Zv=0$ and $\sigma(u)=\sigma(v)$.

\smallskip		
Then we have:
		\begin{itemize}
			\item[(i)] Assume (surj) is satisfied. Then $\cN$ is multistationary if and only if (sign) holds.
			\item[(ii)] If (sign) does not hold, then $\cN$ is not multistationary.
		\end{itemize}
	\end{thm}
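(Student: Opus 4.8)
The plan is to translate multistationarity into a statement about two distinct positive points that share both a stoichiometric class and a common value of the monomial map $x\mapsto x^M$, and then to characterize the existence of such a pair by the sign condition. Since $\cN$ is binomial with admissible basis $B$, for every positive $k$ the positive points of $V(B_k)$ are exactly the positive steady states, and by \eqref{Equation Binomial parametrization} they are the positive solutions of $x^M=\gamma(k)$. Hence $\cN$ is multistationary precisely when there is a positive $k$ together with two \emph{distinct} points $x^{(1)},x^{(2)}\in\R^n_{>0}$ satisfying $(x^{(1)})^M=\gamma(k)=(x^{(2)})^M$ and $Zx^{(1)}=Zx^{(2)}$ (same compatibility class).

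First I would prove the necessity of (sign), which simultaneously yields (ii) by contraposition and the forward implication of (i). Given such a pair, set $u=\ln x^{(1)}-\ln x^{(2)}$ and $v=x^{(1)}-x^{(2)}$. Taking logarithms in $(x^{(1)})^M=(x^{(2)})^M$ gives $M^T\ln x^{(1)}=M^T\ln x^{(2)}$, so $M^Tu=0$; and $Zx^{(1)}=Zx^{(2)}$ gives $Zv=0$. Both vectors are nonzero because $x^{(1)}\neq x^{(2)}$, and since $t\mapsto\ln t$ is strictly increasing we get $\sigma(u)=\sigma(v)$, cf.~\eqref{eq:sigma}. Thus (sign) holds. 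This argument uses neither (surj) nor any construction, which is exactly what (ii) requires.

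The core of the proof is the converse in (i): assuming (surj) and (sign), I would build the two points. Take $u,v\in\R^n\setminus\{0\}$ with $M^Tu=0$, $Zv=0$ and $\sigma(u)=\sigma(v)$, and define $x^{(1)},x^{(2)}\in\R^n_{>0}$ componentwise by
\[
x^{(2)}_i=\frac{v_i}{e^{u_i}-1},\quad x^{(1)}_i=x^{(2)}_i\,e^{u_i}\ \ (u_i\neq 0),\qquad x^{(1)}_i=x^{(2)}_i=1\ \ (u_i=0).
\]
The sign matching $\sigma(u)=\sigma(v)$ is precisely what guarantees positivity: when $u_i>0$ both $e^{u_i}-1$ and $v_i$ are positive, when $u_i<0$ both are negative, so $x^{(2)}_i>0$ in either case, while $u_i=0$ forces $v_i=0$ and is handled separately. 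By construction $x^{(1)}-x^{(2)}=v\in\ker Z$ and $\ln x^{(1)}-\ln x^{(2)}=u\in\ker M^T$, so $Zx^{(1)}=Zx^{(2)}$ and $(x^{(1)})^M=(x^{(2)})^M$, with $x^{(1)}\neq x^{(2)}$ since $v\neq 0$. Finally I would invoke (surj) on $x^{(1)}$ to obtain $k\in\R^r_{>0}$ with $x^{(1)}\in V(B_k)$, i.e.\ $(x^{(1)})^M=\gamma(k)$; then $(x^{(2)})^M=\gamma(k)$ as well, so $x^{(2)}\in V(B_k)$, and the two distinct positive points are steady states in the same class, establishing multistationarity.

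The one step requiring care — and the main obstacle — is the passage from the sign data $(u,v)$ to an \emph{actual common} parameter $k$: the componentwise construction only guarantees two points with equal $x^M$-value, and it is (surj) that converts this shared value into a single $k$ realizing both as steady states. I expect the only genuinely delicate bookkeeping to be the verification that the admissible binomial basis legitimately identifies positive steady states with positive solutions of $x^M=\gamma(k)$ for every positive $k$, so that no spurious factor $p_i(k)$ in \eqref{Equation Gamma and M} vanishes; this is exactly where admissibility of $B$, via Lemma~\ref{Lemma:Bases_having_same_solution_sets}, enters.
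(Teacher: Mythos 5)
Your proof is correct, and it is essentially the standard argument for this result: note that the paper itself does not prove Theorem~\ref{Theorem Sign Condition} but imports it from \cite{Feliu-Sign} (Proposition 3.9 and Corollary 3.11; the sign-vector criterion goes back to \cite{Dickenstein-Toric}), and that argument is exactly your two steps --- taking logarithms of two distinct positive steady states in a common compatibility class to produce the pair $(u,v)$ witnessing (sign), and conversely the componentwise construction $x^{(2)}_i=v_i/(e^{u_i}-1)$, $x^{(1)}_i=e^{u_i}x^{(2)}_i$, followed by an application of (surj) to realize the common value of $x^M$ as some $\gamma(k)$, thereby witnessing multistationarity.

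One correction to your closing remark, which is the only place where your reasoning is imprecise. Admissibility (Definition~\ref{Definition:Binomial_networks}) says that the positive solution set of the specialized binomial basis agrees with that of the steady state polynomials for every positive $k$; it does \emph{not} by itself prevent a coefficient pair $p_i(k^\star)$, $p_i'(k^\star)$ of the binomial basis from vanishing simultaneously at some positive $k^\star$, and Lemma~\ref{Lemma:Bases_having_same_solution_sets} is not what rules this out. At such a $k^\star$ the $i$-th binomial degenerates to $0=0$, $\gamma(k^\star)$ is undefined, and your necessity argument (which underlies (ii) and the ``only if'' part of (i)) loses the corresponding row of $M^Tu=0$: two distinct steady states at $k^\star$ need only annihilate the non-degenerate columns of $M$. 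So the identification ``positive steady states at $k$ equal positive solutions of $x^M=\gamma(k)$'' rests on the non-vanishing of the $p_i,p_i'$ on the positive orthant, which is the same implicit assumption the paper builds into the parenthetical ``(equivalently, $x\in V(B_k)$)'' in (surj) and into the discussion preceding \eqref{Equation Binomial parametrization}. Your proof therefore sits at the paper's own level of rigor; just attribute that step to this non-degeneracy rather than to admissibility.
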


	\begin{lem}[\cite{Dickenstein-Toric}]\label{Lemma Surjectivity Complete}
		A binomial network $\cN$ with stoichiometric matrix $N$ satisfies (surj) if and only if $\ker N\cap\mathbb{R}_{>0}^{r}\neq\emptyset$.
	\end{lem}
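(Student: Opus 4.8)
The plan is to translate the surjectivity condition, which is phrased via the binomial parametrization $x^M=\gamma(k)$, back into a statement about the stoichiometric matrix $N$, and then to observe that the resulting condition is independent of $x$. First I would use admissibility of the binomial basis $B$ to identify the two relevant positive solution sets. For $x\in\R^n_{>0}$ and $k\in\R^r_{>0}$, the text already records that $x^M=\gamma(k)$ is equivalent to $x\in V(B_k)$; and by Definition~\ref{Definition:Binomial_networks} the positive points of $V(B_k)$ coincide with the positive steady states, i.e. with the positive solutions of $N\diag(k)\psi(x)=0$ (recall $F_k(x)=N_k\psi(x)$ from \eqref{eq:ODE}). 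Hence (surj) is equivalent to the assertion that for every $x\in\R^n_{>0}$ there exists $k\in\R^r_{>0}$ with $N\diag(k)\psi(x)=0$.

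Next I would exploit the key elementary observation: for a fixed $x\in\R^n_{>0}$ the vector $\diag(k)\psi(x)$ has $i$-th entry $k_i x^{c_i}$, and since $\psi(x)\in\R^r_{>0}$, the map $k\mapsto\diag(k)\psi(x)$ is a bijection from $\R^r_{>0}$ onto $\R^r_{>0}$. Consequently ``there exists $k\in\R^r_{>0}$ with $\diag(k)\psi(x)\in\ker N$'' holds if and only if $\ker N\cap\R^r_{>0}\neq\emptyset$, a condition that does not involve $x$. This is the crux of the argument: it collapses the universally quantified statement over $x$ to the single intersection condition.

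I would then verify both implications explicitly. For ($\Leftarrow$), given $w\in\ker N\cap\R^r_{>0}$ and an arbitrary $x\in\R^n_{>0}$, setting $k_i=w_i/x^{c_i}>0$ yields $\diag(k)\psi(x)=w$, hence $N\diag(k)\psi(x)=0$; so $x$ is a positive steady state and, by admissibility, $x\in V(B_k)$, establishing (surj). For ($\Rightarrow$), I would apply (surj) at a single convenient point, say $x=\mathbf{1}$, obtaining $k\in\R^r_{>0}$ with $x\in V(B_k)$, that is $\diag(k)\psi(\mathbf 1)\in\ker N$; since this vector lies in $\R^r_{>0}$, the intersection is nonempty.

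I do not expect a genuine obstacle, as the statement unwinds to a direct computation once the solution sets are correctly identified. The only point demanding care is the first step: the passage between the binomial description $x^M=\gamma(k)$ and the steady state equations $N\diag(k)\psi(x)=0$ rests essentially on admissibility of $B$. Without it the two positive solution sets may differ for particular values of $k$, as Example~\ref{Example Surjecctivity dependence} illustrates, and the reduction to a single condition on $\ker N$ would fail. I would therefore flag admissibility as exactly the hypothesis that licenses the whole argument.
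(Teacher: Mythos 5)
Your proof is correct and follows essentially the same route as the paper's: both use admissibility of the binomial basis to rewrite (surj) as solvability of $N\diag(k)\psi(x)=0$ for every positive $x$, and then exploit that $k\mapsto\diag(k)\psi(x)$ maps $\R^r_{>0}$ onto $\R^r_{>0}$, with your choice $k_i=w_i/x^{c_i}$ being exactly the paper's $k=v/\psi(x)$. The only cosmetic difference is that you establish the forward direction by evaluating (surj) at the single point $x=\mathbf{1}$, where the paper argues by contraposition.
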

	\begin{proof}
The solution set of an admissible binomial basis agrees with the solution set of the set of steady state polynomials for every $k\in\R_{>0}^{r}$. Therefore (surj) is equivalent to the statement
		\[\text{for all }x\in\mathbb{R}_{>0}^{n}\text{, there exists }k\in\mathbb{R}_{>0}^{r}\text{ such that }N_k\psi(x)=0.\]
		Now $N_k\psi(x)=0$ is equivalent to $\diag(k)\psi(x)\in\ker N\cap \mathbb{R}_{>0}^{r}$. Thus  if $\ker N\cap\mathbb{R}_{>0}^{r}=\emptyset$, then (surj) fails. Conversely given $v\in\ker N\cap\mathbb{R}_{>0}^{r}$ and any $x\in\mathbb{R}_{>0}^{n}$, by taking $k=\tfrac{v}{\psi(x)}$ (defined component-wise), we see that (surj) holds.
	\end{proof}

Networks fulfilling the condition of Lemma~\ref{Lemma Surjectivity Complete} above are often called \emph{consistent}. 
Using Lemma \ref{Lemma Surjectivity Complete} we can check  (surj) algorithmically. For a matrix $N$, $U=\ker(N)\cap\R_{\geq 0}^{r}$ is a convex set. A set of vectors in $\R_{\geq 0}^{r}$ is an  \emph{extremal generating set} for $U$ if their non-negative linear combinations generate $U$ and none of them is a non-negative combination of the rest. Then $U$ contains a strictly positive vector   if and only if the sum of the vectors in an extremal generating set is positive. 
To find an extremal generating set for a convex set one can use  existing algorithms, e.g.~\cite[Appendix B]{ExtremePathways}.

	Let $\cN$ be a binomial reaction network with an admissible binomial basis $B$. Let $M$ be the exponent matrix in the parametrization of its positive steady states as in \eqref{Equation Binomial parametrization}, and $Z$ be a matrix of conservation laws. For $\lambda=(\lambda_1,\dots,\lambda_n)$ a vector of indeterminates, define 
\begin{equation}\label{eq:Gamma}
\Gamma=\begin{bmatrix}
	\blue{(M^T)_\lambda} \, \\ Z
	\end{bmatrix} \quad  \in\  (\R[\lambda])^{n\times n}.
\end{equation}
\blue{Recall the definition of $(M^T)_\lambda$ in \eqref{eq:Mv}.}
Consider the following conditions:
	
\smallskip
\noindent
	\emph{(rank) Rank Condition \blue{on $B$}}:  \blue{The number of elements of $B$, equivalently  the number of columns of the exponent matrix $M$ derived from $B$, is equal to the rank of the network.}

\smallskip
\noindent	
	\emph{(det) Determinant Condition}: 
\blue{For a complete binomial network, pick any admissible binomial basis that satisfies (surj) and (rank), a matrix $Z$ of conservation laws, and consider the corresponding matrix $\Gamma$ in \eqref{eq:Gamma}. Then viewed as a polynomial in $\lambda$, $\det(\Gamma)$ is either zero or  has at least one positive and at least one negative coefficient.
	}

Note that by  \cite[Lemma 2.11]{Feliu-Sign}, $\det(\Gamma)$ is a polynomial in $\lambda$ that is linear or constant in each $\lambda_i$. 	Theorem 2.13 of \cite{Feliu-Sign} states that provided (rank) is fulfilled, then (sign) holds if and only if (det) holds. \blue{In particular (det) is well formulated, since it does not depend on the choice of basis. } Combining this with Theorem \ref{Theorem Sign Condition}, $\cN$ is multistationary if and only if (det) holds. This \blue{yields} the following definition and theorem.
	
	\begin{definition}\label{Definition:Complete_binomial_networks}
		A binomial network $\cN$ is \emph{complete} if (surj) and (rank) hold for an admissible binomial basis.
	\end{definition}
	
	\begin{thm}[Determinant criterion for complete binomial networks]\label{Theorem:Multistationarity_and_Determinant_Condition}
\blue{Consider a complete binomial network. 
Then the network is multistationary  if and only if (det) holds. }
	\end{thm}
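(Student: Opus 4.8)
The plan is simply to compose the two equivalences that have already been recorded, being careful about which hypothesis each one consumes. Since the network is complete, Definition~\ref{Definition:Complete_binomial_networks} supplies an admissible binomial basis $B$ of the steady state ideal for which both (surj) and (rank) hold. I would fix such a $B$ once and for all, and from it form the exponent matrix $M$ as in \eqref{Equation Gamma and M}, choose a matrix $Z$ of conservation laws, and build the square matrix $\Gamma$ of \eqref{eq:Gamma}.

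First I would handle the equivalence between multistationarity and the sign condition. Because (surj) holds for $B$, part (i) of Theorem~\ref{Theorem Sign Condition} applies verbatim and yields that $\cN$ is multistationary if and only if (sign) holds for this $M$ and $Z$. Next I would handle the equivalence between (sign) and the determinant condition: since (rank) holds for $B$, I would invoke \cite[Theorem 2.13]{Feliu-Sign}, which states that under (rank) the condition (sign) holds if and only if (det) holds, that is, if and only if $\det(\Gamma)$, viewed as a polynomial in $\lambda$, either vanishes identically or has coefficients of both signs. Chaining the two biconditionals then gives that $\cN$ is multistationary if and only if (det) holds, as claimed.

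I do not expect a genuine obstacle here, since the substantive work already lives in the cited results; the real content of this proof is the bookkeeping of hypotheses. The one point that needs to be made explicit is that completeness is precisely what allows a single basis to satisfy (surj) and (rank) simultaneously, so that both equivalences can be run on the same $M$, $Z$, and $\Gamma$ rather than on incompatible choices. A final subtlety worth flagging is well-definedness: although (det) is phrased through a chosen basis, multistationarity is an intrinsic property of $\cN$, and the two equivalences hold for \emph{every} admissible basis meeting (surj) and (rank); hence the truth value of (det) is independent of the basis, and the statement of the theorem is unambiguous.
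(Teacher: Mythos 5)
Your proposal is correct and follows essentially the same route as the paper, which likewise obtains the theorem by combining Theorem~\ref{Theorem Sign Condition}(i) (using (surj)) with Theorem~2.13 of \cite{Feliu-Sign} (using (rank)), and also remarks, as you do, that (det) is well formulated because it is independent of the choice of admissible basis satisfying both conditions. No gaps.
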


	\begin{example}\label{Example:A_Biological_Circuit_Determinant}
		Consider the following network modeling a simple biological circuit:
        \[X_1\ce{<-[\k_1]}0\ce{<-[\k_7]}X_2\qquad X_1+E\ce{<=>[\k_2][\k_3]}Y_1\ce{->[\k_4]}X_2+E\qquad 2X_1+E\ce{<=>[\k_5][\k_6]}Y_2.\] 
\blue{The steady state polynomials for $Y_1$ and $Y_2$ are binomial. By using these polynomials equated to zero to write $y_1$ and $y_2$ in terms of $x_1,x_2,e$ and substituting in the steady state polynomials of $X_1$ and $X_2$, we obtain the following  admissible binomial basis of the steady state ideal:}
		\[B=\Big\{\k_1-\tfrac{\k_2\k_4}{\k_3+\k_4}x_1e,\,\tfrac{\k_2\k_4}{\k_3+\k_4}x_1e-\k_7x_2,\,\k_2x_1e-(\k_3+\k_4)y_1,\,\k_5x_1^2e-\k_6y_2\Big\}.\]
		The  matrices $M$, $Z$ and $N$ are
{\small		\[M=
		\left[\begin{array}{rrrr}
		1 & 1 & -1 & -2\\
		0 & -1 & 0 & 0\\
		1 & 1 & -1 & -1\\
		0 & 0 & 1 & 0\\
		0 & 0 & 0 & 1
		\end{array}\right],\,Z=\begin{bmatrix}
		0 & 0 & 1 & 1 & 1
		\end{bmatrix},\,N=\left[\begin{array}{rrrrrrr}
		1 & -1 & 1 & 0 & -2 & 2 & 0\\
		0 & 0 & 0 & 1 & 0 & 0 & -1\\
		0 & -1 & 1 & 1 & -1 & 1 & 0\\
		0 & 1 & -1 & -1 & 0 & 0 & 0\\
		0 & 0 & 0 & 0 & 1 & -1 & 0
		\end{array}\right].\]}The set $P=\{(1,1,0,1,0,0,1),(0,1,1,0,0,0,0),(0,0,0,0,1,1,0)\}$ is an extremal generating set for  $\ker(N)\cap\R_{\geq 0}^7$. 		Since the sum of the vectors in $P$ has all entries positive,  (surj) holds by Lemma \ref{Lemma Surjectivity Complete}.
		Now taking $u=v=(-1,0,1,0,-1)$, the condition (sign) holds. Therefore this network is multistationary by Theorem \ref{Theorem Sign Condition}. 
	
Alternatively, we have  $\rank(N)=4$ and $B$ has 4 elements; hence (rank) holds and this binomial network is complete. We have
		\[\det(\Gamma)=\left|\begin{array}{ccccc}
		\lambda_1 & 0 & \lambda_3 & 0 & 0\\
		\lambda_1 & -\lambda_2 & \lambda_3 & 0 & 0\\
		-\lambda_1 & 0 & -\lambda_3 & \lambda_4 & 0\\
		-2\lambda_1 & 0 & -\lambda_3 & 0 & \lambda_5\\
		\hdashline[2pt/2pt]
		0 & 0 & 1 & 1 & 1
		\end{array}\right|=\lambda_1\lambda_2\lambda_3\lambda_4-\lambda_1\lambda_2\lambda_4\lambda_5.\]
The network is multistationary  by Theorem \ref{Theorem:Multistationarity_and_Determinant_Condition} since  (det) holds.
	\end{example}

\blue{
\begin{remark}
A determinant condition for multistationarity appears in Lemma 4.4 and Theorem 4.6(A) in \cite{Dickenstein-Structured}, which applies to a special type of binomial networks. The polynomial $B(x)$ in \cite{Dickenstein-Structured} is up to sign the polynomial $\det(\Gamma)$ here. So the results  in \cite{Dickenstein-Structured} actually hold with full generality for complete  binomial networks, as was already shown in \cite{Feliu-Sign}, and as recalled here  in Theorem~\ref{Theorem:Multistationarity_and_Determinant_Condition}. The conditions imposed on binomial networks in  \cite{Dickenstein-Structured} imply that the network is binomial and satisfies (rank) for an admissible basis (c.f. \cite[Remark 3.5]{Dickenstein-Structured}). 
\end{remark}
}

	\section{Intermediates and multistationarity}\label{sec:Intermediates_and_multistationarity}
In this section we introduce a particular type of species, \emph{intermediates}, and extended and core networks obtained by adding or removing intermediates. We proceed to present results on multistationarity of extended and core networks from \cite{Feliu-Simplifying} and specifically for binomial networks from \cite{Grobner}.

\subsection{Intermediates}

	A species $Y$ is an \emph{intermediate} if it is also a complex, that is belongs to $\cC$, only appears in the complex $Y$, and further both  the outdegree and indegree of $Y$ are at least one in the digraph of the network \cite{Feliu-Simplifying}. Given a set of intermediates $\cY=\{Y_1,\dots,Y_m\}$, let  $\cX=\cS\setminus \cY=\{X_1,\dots,X_n\}$ be the set of  non-intermediates. Then $\cS$ is the disjoint union of $\cX$ and $\cY$.  From now on, the species are ordered such that intermediates are after non-intermediates. Then by $(x,y)$ we mean the vector $(x_1,\dots,x_n,y_1,\dots,y_m)$.
A complex that is not an intermediate is called a non-intermediate complex.

Given an intermediate $Y$, an \emph{input} for $Y$ is a non-intermediate complex $c$ such that there exists a directed reaction path from $c$ to $Y$ with all vertices other than $c$ being intermediates. The intermediate $Y$ is an $\ell$-input intermediate if it has $\ell$ inputs \cite{Grobner}.
	
	\begin{definition}[\cite{Feliu-Simplifying}]\label{Definition:Notation_Extension_via_intermediates} 
		Let $\cN=(\cS,\cC,\cR)$ and $\widetilde{\cN}=(\widetilde{\cS},\widetilde{\cC},\widetilde{\cR})$ be two reaction networks. We say that $\widetilde{\cN}$ is an \emph{extension} of $\cN$ via the addition of   intermediates $Y_1,\dots,Y_m$ if
		\begin{enumerate}[(i)]
			\item $\cY=\{Y_1,\dots,Y_m\}$ is a set of intermediates of $\widetilde{\cN}$.
			\item $\cS\cup \cY = \widetilde{\cS}$ and  $\cC\cup \cY = \widetilde{\cC}$.
			\item $c\rightarrow c'\in\cR$ if and only if there is a directed path from $c$ to $c'$ in the  digraph associated with $\widetilde{\cN}$, such that all vertices other than $c$ and $c'$ belong to $\cY$ (there might be none).
		\end{enumerate}
		In this case $\cN$ is called the \emph{core network} of $\widetilde{\cN}$.
	\end{definition}

\blue{We use $\kappa$ to denote the vector of reaction rate constants of the extended network and, in general, symbols with a tilde $\sim$ refer to the extended network. }
		Let $\cN$ be a reaction network and $\widetilde{\cN}$ an extension of it via the addition of $m$ intermediates, $Y_1,\dots,Y_m$. Choose an input complex $c_i$ for each intermediate $Y_i$ and let $\begin{bmatrix}
	c_1 & \dots & c_m
	\end{bmatrix}\in\R^{n\times m}$ be the matrix whose columns are $c_1,\dots,c_m$.
 It follows from Theorem 2.1 of \cite{Feliu-Simplifying} that  if $Z$ is a matrix of conservation laws for $\cN$, then a matrix of conservation laws for $\widetilde{\cN}$ is
	\begin{equation}\label{Eq:Matrix_of_Conservation_Laws_for_extended-networks}
	\widetilde{Z}=\left[\begin{array}{c;{2pt/2pt}c}
	Z & Z\begin{bmatrix}
	c_1 & \dots & c_m
	\end{bmatrix}
	\end{array}\right]\in\R^{d\times(n+m)}.
	\end{equation}
	In particular the corank of $\cN$ and $\widetilde{\cN}$ agree and the rank of $\widetilde{\cN}$ is the rank of $\cN$ plus $m$.

We next introduce a simple type of extended networks via the addition of intermediates that are useful in the study of multistationarity, see Theorem~\ref{Theorem:Usage_of_Canonical_networks_for_multistationarity}. 

	\begin{definition}\label{Definition Canonical Networks}
		Let $\cN$ be a network and $C=\{c_1,\dots,c_m\}\subseteq\cC$.
 The \emph{canonical extension} of $\cN$ associated with $C$,  denoted by $\widetilde{\cN}_C=(\widetilde{\cS}_C,\widetilde{\cC}_C,\widetilde{\cR}_C)$, is the extension of $\cN$ via the addition of 1-input intermediates $Y_1,\dots,Y_m$  such that
		\[\widetilde{\cR}_C=\cR\cup \big\{c_i\rightleftharpoons Y_i \mid i\in [m]\big\}.\]
The canonical extension associated with $C=\cC$ is called the \emph{largest canonical extension}. 
	\end{definition}

	We now review the key results in  \cite{Feliu-Simplifying} that relate the steady states of extended and core networks.
We start by studying the steady state polynomials  of the two networks.
	Let $\widetilde{\cN}$ be an extension of $\cN$ via the addition of intermediates $Y_1,\dots,Y_m$ and $C\subseteq\cC$ be the set of input complexes. The steady state polynomials associated with the intermediates yield a system
	$F_{n+1}(x,y)=\dots=F_{n+m}(x,y)=0$
	that is linear in $y$ and square, \blue{that is, viewed as a system in $y_1,\dots,y_m$, it is linear and has $m$ equations and $m$ variables}. As shown in \cite{Feliu-Simplifying}, the solution to this linear system is of the form 
\begin{equation}\label{eq:defmu}
y_i=\sum_{c\in C}\mu_{i,c}x^c, \quad i\in[m]
\end{equation}
	where $\mu_{i,c}$ is a rational function in $\R(\k)$ with all non-zero coefficients positive (see also \eqref{eq:muic}).
\blue{Recall that using the fixed orders on $\cR$ and on $\widetilde{\cR}$ respectively, we identify   
$$k=(k_1,\dots,k_r)=(k_{c\rightarrow c'})_{c\rightarrow c'\in \cR}\quad\textrm{and}\quad \kappa=(\kappa_1,\dots,\kappa_{\widetilde{r}})= (\kappa_{\widetilde{c} \rightarrow \widetilde{c}'})_{\widetilde{c} \rightarrow \widetilde{c}' \in \widetilde{\cR} }.$$
}
Consider the following map 
	\begin{equation}\label{Equation:phi}
	\begin{array}{lrll}
	\phi \colon & \mathbb{R}[k] & \longrightarrow & \mathbb{R}(\kappa)\\
	& k_{c\rightarrow c'} & \longmapsto & \blue{\phi_{c\rightarrow c'}}=  \kappa_{c\rightarrow c'}+\sum_{i=1}^m\kappa_{Y_i\rightarrow c'}\, \mu_{i,c},\end{array}
	\end{equation}
	where it is understood that $\kappa_{c\rightarrow c'} =0$ and $\kappa_{Y_i\rightarrow c'}=0$ if $c\rightarrow c'$ and $Y_i\rightarrow c'$ do not belong to $\widetilde{\cR}$  respectively. 
\blue{That is, the image of $k_{c\rightarrow c'}$ is a rational function $\phi_{c\rightarrow c'}$ in $\kappa$, which specializes to all positive vectors $\kappa$. Given 
a vector $\kappa\in \R^{\widetilde{r}}_{>0}$, we let $\phi^*(\kappa)\in \R^r_{>0}$ be defined component-wise  by evaluating $\phi_{c\rightarrow c'}$ at $\kappa$, that is $\phi^*(\kappa)_{c\rightarrow c'}= \phi_{c\rightarrow c'}(\kappa)$ for all $c\rightarrow c'\in \mathcal{R}$.}
Then the steady state polynomials  $F,\widetilde{F}$ of $\cN$ and $\widetilde{\cN}$ for non-intermediate species relate in the following way  (see \cite{Feliu-Simplifying}):
\[\widetilde{F}_{\k,i}\Big(x,\sum\nolimits_{c\in C}\mu_{1,c} x^c,\dots,\sum\nolimits_{c\in C}\mu_{m,c} x^c\Big)=F_{\blue{\phi^*(\k)},i}(x),\qquad i\in[n].\]

Given $\sfrac{f}{g}\in \R(k)$ such that $\phi(g)\neq 0$, then $\phi(\sfrac{f}{g})$  is well defined in $\mathbb{R}(\kappa)$. If \blue{$G\in \R(k)[x]$} is a polynomial in $x$ such that all coefficients are rational functions with  non-vanishing denominator upon applying $\phi$, then we consider the polynomial $\Phi(G)$ in $\R(\k)[x,y]$ obtained by applying $\phi$ on the coefficients of $G$. 
In particular, if the rational functions $\phi_{c\rightarrow c'}$ are algebraically independent over $\R$, then there is no polynomial with coefficients in $\R$ that identically vanishes when evaluated on the image of $\phi$. Then the map $\phi$ extends to a map of polynomial rings 
$\Phi\colon \mathbb{R}(k)[x]\rightarrow\mathbb{R}(\kappa)[x,y].$
Strategies to check this algebraic independence condition as well as classes of intermediates that satisfy it  are described in  \cite[\S 4]{Grobner}.   In particular the rational functions $\phi_{c\rightarrow c'}$ are algebraically independent over $\R$ for all canonical extensions by \cite[Corollary 4.6]{Grobner}.
	
\medskip
In order to introduce Theorem~\ref{Theorem Multistationartity and Realization} below, we need to consider the following conditions. 
Let $\omega_1,\dots,\omega_d$ be a basis of $S^\perp$ and 
$C'\subseteq C$ consist of the complexes $c$ such that $\omega_j\cdot c\neq 0$ for some $j\in[d]$ (i.e.~$c\notin S$).
We define two realization conditions on the reaction rate constants of $\cN$ and   $\widetilde{\cN}$:
	\begin{enumerate}[(i)]
		\item 	\emph{Realization condition:}\\
		For all $k\in\R^{r}_{>0}$, there exists $\kappa\in\R^{\widetilde{r}}_{>0}$ such that \blue{$k=\phi^*(\k)$.}
		\item \emph{Generalized realization condition: }\\
		For all $k\in\R^{r}_{>0}$ and $\alpha\in\R^{C'}_{>0}$, there exists $\kappa\in\mathbb{R}^{\widetilde{r}}_{>0}$ such that 
		\[ \blue{k=\phi^*(\kappa)} \qquad \textrm{and}\qquad 	      \alpha_{c}=\sum_{i\in [m]} \mu_{i,c} \quad \text{for all } c\in C'.\]
	\end{enumerate}
Note that $\mu_{i,c}$ depends as well on $\kappa$ in the last statement.
In \S\ref{sec:Realization_conditions} we \blue{mention strategies to check whether these realization conditions are satisfied,  and determine types of intermediates that satisfy them}.
		The proof of the next theorem is found in \cite{Feliu-Simplifying}. The first part is Theorem 5.1, and the second part is discussed in the text.
	
	\begin{thm}[\cite{Feliu-Simplifying}]\label{Theorem Multistationartity and Realization} 
		Let $\widetilde{\cN}$ be an extension of $\cN$ via the addition of intermediates $Y_1,\dots,Y_m$.
		\begin{enumerate}[(i)]
			\item 
			If the realization condition holds, then multistationarity of $\cN$ implies multistationarity of $\widetilde{\cN}$.
			\item Let $C\subseteq\cC$ be the set of inputs of $Y_1,\dots,Y_m$. If the generalized realization condition holds for $\widetilde{\cN}$, then $\widetilde{\cN}$ is multistationary if and only if the canonical extension $\widetilde{\cN}_C$ is multistationary.
		\end{enumerate}
	\end{thm}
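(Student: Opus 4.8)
The plan is to eliminate the intermediates and recast multistationarity of $\widetilde{\cN}$ as a multistationarity question for an \emph{augmented} core system, governed by two pieces of data extracted from $\kappa$: the effective rate vector $k=\phi^*(\kappa)$ and the \emph{input loads} $\beta_c=\sum_{i\in[m]}\mu_{i,c}$ for $c\in C'$. The two realization conditions then translate into statements about which pairs $(k,\beta)$ are attainable. First I would solve the square, $y$-linear intermediate equations to get $y_i(x)=\sum_{c\in C}\mu_{i,c}x^c$ (which is positive for $x>0$, since each $Y_i$ has at least one input and the $\mu_{i,c}$ are non-negative). Using the recalled identity $\widetilde{F}_{\kappa,i}(x,y(x))=F_{\phi^*(\kappa),i}(x)$, for fixed $\kappa$ this yields a bijection between the positive steady states $(x,y)$ of $\widetilde{\cN}$ and the positive steady states $x$ of $\cN$ at $k=\phi^*(\kappa)$.

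The key step is to restrict the extended conservation law $\widetilde{Z}=[\,Z\mid Z[c_1\ \cdots\ c_m]\,]$ to this manifold. Since a directed path witnessing that $c$ is an input of $Y_i$ has reaction vectors telescoping to $Y_i-c$, we get $Y_i-c\in\widetilde{S}$ for every input $c$; as $\widetilde{S}=\ker\widetilde{Z}$ this forces $Zc=Zc_i$ for all inputs $c$ of $Y_i$. Combined with $Zc=0$ for $c\in C\setminus C'$ (as such $c$ lie in $S$), this gives
\[\widetilde{Z}\big(x,y(x)\big)=Zx+\sum_{c\in C'}(Zc)\,\beta_c\,x^c,\qquad \beta_c=\sum_{i\in[m]}\mu_{i,c}.\]
Hence $\widetilde{\cN}$ is multistationary exactly when, for some attainable $(k,\beta)$ and some $\widetilde{\alpha}$, the augmented system $F_k(x)=0$, $Zx+\sum_{c\in C'}(Zc)\beta_c x^c=\widetilde{\alpha}$ has at least two positive solutions.

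For part (ii) this reformulation almost closes the argument. The generalized realization condition for $\widetilde{\cN}$ says precisely that $\kappa\mapsto\big(\phi^*(\kappa),(\sum_i\mu_{i,c})_{c\in C'}\big)$ is onto $\R^r_{>0}\times\R^{C'}_{>0}$. For the canonical extension $\widetilde{\cN}_C$ one checks directly that $\phi^*(\kappa)=\kappa|_{\cR}$ and $\beta_c=\kappa_{c\to Y_c}/\kappa_{Y_c\to c}$, so its attainable set is also all of $\R^r_{>0}\times\R^{C'}_{>0}$, and its augmented conservation law has the identical form above (same $Z$, same $C'$). Thus $\widetilde{\cN}$ and $\widetilde{\cN}_C$ are multistationary if and only if the \emph{same} augmented-core system admits two positive solutions for some $(k,\beta,\widetilde{\alpha})$, which gives the equivalence.

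For part (i) only $k$ is controlled, and this is where the main obstacle lies: realizing a multistationary $k^\star$ of $\cN$ fixes \emph{some} positive $\beta$, and two core steady states $x^{(1)},x^{(2)}$ sharing a class $Zx=\alpha$ need not share an augmented class, because the nonlinear terms $\sum_{c\in C'}(Zc)\beta_c x^c$ differ at the two points. I would overcome this by scaling the intermediate rate constants $\kappa_{c\to Y_i}\mapsto\epsilon\,\kappa_{c\to Y_i}$ while readjusting the core rates so that $\phi^*(\kappa)=k^\star$ remains fixed; this drives $\beta\to 0$, so the augmented map converges to $x\mapsto(F_{k^\star}(x),Zx)$, and a continuity argument transports the two solutions into a common augmented class. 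The delicate point, and the crux of a rigorous proof, is exactly this class-matching: one needs the multistationary configuration of $\cN$ to be non-degenerate so that its two transversal preimages persist under the small perturbation and can be forced into a single compatibility class of $\widetilde{\cN}$.
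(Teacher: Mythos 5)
Before comparing, note a point of context: the paper does not prove this theorem at all; it imports it, part (i) being Theorem 5.1 of \cite{Feliu-Simplifying} and part (ii) the discussion in the text of that reference. So your attempt must be judged against the statement itself. Your general framework is correct and is essentially the one underlying the cited source: eliminating the intermediates via $y_i(x)=\sum_{c\in C}\mu_{i,c}x^c$ gives a bijection between positive steady states of $\widetilde{\cN}$ at $\kappa$ and positive steady states of $\cN$ at $k=\phi^*(\kappa)$; the telescoping argument correctly shows $Zc=Zc_i$ for every input $c$ of $Y_i$, so the extended classes restrict to $Zx+\sum_{c\in C'}(Zc)\beta_c x^c$ with $\beta_c=\sum_{i\in[m]}\mu_{i,c}$; and multistationarity of $\widetilde{\cN}$ therefore depends on $\kappa$ only through the attainable pair $(k,\beta)$. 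Your proof of part (ii) is sound: under the generalized realization condition the attainable set of $\widetilde{\cN}$ is all of $\R^r_{>0}\times\R^{C'}_{>0}$, which (by the explicit computation $\phi^*(\kappa)=\kappa|_{\cR}$, $\beta_{c_i}=\kappa_{c_i\rightarrow Y_i}/\kappa_{Y_i\rightarrow c_i}$, i.e.\ Proposition \ref{Proposition Canonical Networks and Generalized Realization Condition}) is also the attainable set of $\widetilde{\cN}_C$, so the two networks realize exactly the same family of augmented systems and are simultaneously multistationary.

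Part (i), however, contains a genuine gap, which you flag but do not close, and the statement carries no hypothesis that would close it for you. There are two problems. First, the scaling you propose is not available in general: in a Michaelis-Menten type extension $c\rightleftharpoons Y\rightarrow c'$ the core reaction $c\rightarrow c'$ is \emph{not} a reaction of $\widetilde{\cN}$, so there are no ``core rates'' in $\kappa$ to readjust, and the realization condition asserts only surjectivity of $\phi^*$, not the existence of a continuous family $\kappa^\epsilon$ with $\phi^*(\kappa^\epsilon)=k^\star$ and $\beta(\kappa^\epsilon)\rightarrow 0$. (The correct scaling multiplies the rate constants of all reactions whose \emph{source} is an intermediate by $1/\epsilon$: by the tree formula \eqref{eq:muic} every $\mu_{i,c}$ then scales by $\epsilon$ while $\phi^*$ is left exactly unchanged.) Second, and more fundamentally: even with the correct scaling, the positive steady states of $\widetilde{\cN}$ at $\kappa^\epsilon$ are precisely $(x,\epsilon\bar y(x))$ with $x$ ranging over the \emph{fixed} set $V_{k^\star}$ of positive core steady states; they do not move as $\epsilon$ varies. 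Hence the two given states $x^{(1)},x^{(2)}$ lie in distinct extended classes for every $\epsilon>0$ whenever $g(x^{(1)})\neq g(x^{(2)})$, where $g(x)=\sum_{c\in C'}(Zc)\beta_c x^c$, and one must produce \emph{different} solutions of $F_{k^\star}(x)=0$ whose augmented class values collide. Your implicit-function/continuity argument produces such solutions only when $x^{(1)},x^{(2)}$ are non-degenerate, but non-degeneracy is not among the hypotheses of the theorem. As written, your argument establishes only the weaker statement that non-degenerate multistationarity lifts; the degenerate case (for instance, a positive-dimensional set of core steady states inside a single class, which the perturbation $Z+\epsilon g$ could a priori separate into distinct classes) is untouched, and handling it is exactly the content that the citation to Theorem 5.1 of \cite{Feliu-Simplifying} is standing in for.
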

	
	\begin{definition}\label{Definition Canonical Classes}
		Let $\cN$ be a reaction network and $C\subseteq\cC$. The \emph{canonical class} associated with $C$ is the set of all extensions of $\cN$ via the addition of intermediates with input set $C$  that satisfy the generalized realization condition.
	\end{definition}

	\begin{proposition}\label{Proposition Canonical Networks and Generalized Realization Condition}
		The generalized realization condition holds for canonical extensions. Therefore, a canonical class is not empty.
	\end{proposition}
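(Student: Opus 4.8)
The plan is to verify the generalized realization condition of Section~\ref{sec:Intermediates_and_multistationarity} directly, by computing the rational functions $\mu_{i,c}$ of \eqref{eq:defmu} and the map $\phi$ of \eqref{Equation:phi} explicitly for the canonical extension $\widetilde{\cN}_C$, and then exhibiting, for arbitrary $k\in\R^r_{>0}$ and $\alpha\in\R^{C'}_{>0}$, a concrete choice of $\kappa\in\R^{\widetilde{r}}_{>0}$ realizing the required identities. The key simplification, which makes everything explicit, is that in a canonical extension each intermediate $Y_i$ participates only in the two reactions $c_i\rightleftharpoons Y_i$.

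First I would compute the $\mu_{i,c}$. Since the only reactions touching $Y_i$ are $c_i\to Y_i$ and $Y_i\to c_i$, the steady state equation $F_{n+i}(x,y)=0$ reads $\kappa_{c_i\to Y_i}\,x^{c_i}-\kappa_{Y_i\to c_i}\,y_i=0$, whose solution is of the form \eqref{eq:defmu} with
\[ \mu_{i,c_i}=\frac{\kappa_{c_i\to Y_i}}{\kappa_{Y_i\to c_i}},\qquad \mu_{i,c}=0 \ \text{ for } c\neq c_i. \]
In particular each $\mu_{i,c}$ is either $0$ or a ratio of two of the added rate constants, and the $\mu_{i,c}$ specialize to all positive $\kappa$.

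Next I would evaluate $\phi$. For a reaction $c\to c'\in\cR$, the correction term $\sum_i \kappa_{Y_i\to c'}\,\mu_{i,c}$ has a nonzero summand only when both $\kappa_{Y_i\to c'}\neq 0$ and $\mu_{i,c}\neq 0$; the former forces $c'=c_i$ (the unique reaction out of $Y_i$) and the latter forces $c=c_i$, so a nonzero contribution would require $c=c'=c_i$, which is not a reaction. Hence $\phi_{c\to c'}=\kappa_{c\to c'}$, i.e.\ $\phi^*$ acts as the identity on the core rate constants, and the first part of the generalized realization condition, $k=\phi^*(\kappa)$, is satisfied simply by setting $\kappa_{c\to c'}=k_{c\to c'}$ for every $c\to c'\in\cR$.

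It then remains to match the second family of identities. Because the complexes $c_1,\dots,c_m$ are distinct (they form the set $C$) and $\mu_{i,c}=0$ unless $c=c_i$, for each $c_j\in C'$ the sum collapses to a single term, $\sum_{i\in[m]}\mu_{i,c_j}=\mu_{j,c_j}=\kappa_{c_j\to Y_j}/\kappa_{Y_j\to c_j}$, so the constraint becomes $\alpha_{c_j}=\kappa_{c_j\to Y_j}/\kappa_{Y_j\to c_j}$; I would set $\kappa_{Y_j\to c_j}=1$ and $\kappa_{c_j\to Y_j}=\alpha_{c_j}>0$, and assign any positive value (say $1$) to the remaining added rate constants. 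This $\kappa$ lies in $\R^{\widetilde{r}}_{>0}$ and satisfies both parts, establishing the generalized realization condition. Finally, since the inputs of $\widetilde{\cN}_C$ are exactly $C$, the canonical extension belongs to the canonical class of Definition~\ref{Definition Canonical Classes}, which is therefore nonempty. I do not anticipate a serious obstacle here; the only points requiring care are the bookkeeping that the correction terms in $\phi$ vanish and that $\sum_i\mu_{i,c_j}$ reduces to one term, both of which rely precisely on the $1$-input structure of canonical intermediates and the distinctness of the $c_i$.
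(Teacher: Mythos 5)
Your proposal is correct and follows essentially the same route as the paper's proof: compute $\mu_{i,c_i}=\kappa_{c_i\to Y_i}/\kappa_{Y_i\to c_i}$ (zero otherwise), observe that the generalized realization condition then reduces to the system $k_{c\to c'}=\kappa_{c\to c'}$ and $\alpha_{c_i}=\kappa_{c_i\to Y_i}/\kappa_{Y_i\to c_i}$, and solve it by an explicit choice of $\kappa$. The only difference is that you spell out explicitly why the correction terms $\sum_i \kappa_{Y_i\to c'}\mu_{i,c}$ in $\phi$ vanish (the paper leaves this implicit), which is a welcome but not substantively different addition.
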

	\begin{proof} Let $\widetilde{\cN}$ be the canonical extension of a network $\cN$  associated with  $C=\{c_1,\dots,c_m\}\subseteq\cC$. For every $i\in[m]$, we have $\mu_{i, c} =\tfrac{\k_{c_i\rightarrow Y_i}}{\k_{Y_i\rightarrow c_i}}$ if  $ c=c_i$ and zero otherwise.
		Since no two intermediates have a common input, the generalized realization condition holds if for every $k\in\mathbb{R}_{>0}^r$ and $\alpha \in\mathbb{R}_{>0}^m$, there exists $\kappa\in\mathbb{R}_{>0}^{\widetilde{r}}$ such that
\[
		k_{c\rightarrow c'} =\k_{c\rightarrow c'} \quad\textrm{for all }  c\rightarrow c'\in\cR \quad\textrm{and}\quad
		\alpha_i =\tfrac{\k_{c_i\rightarrow Y_i}}{\k_{Y_i\rightarrow c_i}}  \quad\textrm{for all }   i\in[m].
\]
		This condition clearly holds.
	\end{proof}
	
	Theorem \ref{Theorem Multistationartity and Realization} implies that multistationarity of an extended network $\widetilde{\cN}$ satisfying the generalized realization condition is equivalent to multistationarity of any network in the same canonical class of $\widetilde{\cN}$, in particular of the canonical extensions in the class.
	Canonical extensions have a simple structure and preserve some important properties of $\cN$ as we will see below. Hence they are chosen as representatives of the class. 
	
	\begin{example}\label{Example:Canonical_Representatives_MPAK}
\label{Example:Graph_MPAK}
		The following digraph defines  a reaction network   corresponding to the Mitogen-Activated Protein Kinase cascade \cite{MAPK-Multistationarity}:
\begin{equation}\label{eq:MAPK}
\begin{aligned}
		X_0+E & \ce{<=>[\kappa_1][\kappa_2]} Y_1\ce{->[\kappa_3]} X_1+E\ce{<=>[\kappa_4][\kappa_5]} Y_2\ce{->[\kappa_6]} X_2+E\\
		X_2+F & \ce{<=>[\kappa_7][\kappa_8]} Y_3\ce{->[\kappa_9]} Y_4\ce{<=>[\kappa_{10}][\kappa_{11}]} X_1+F\ce{<=>[\kappa_{12}][\kappa_{13}]} Y_5\ce{->[\kappa_{14}]} Y_6\ce{<=>[\kappa_{15}][\kappa_{16}]} X_0+F.	
	\end{aligned}
\end{equation}
If we consider $Y_1,\dots,Y_6$ as intermediates, then the associated core network is
		\[X_0+E\ce{->[k_1]} X_1+E\ce{->[k_2]} X_2+E \qquad 
		X_2+F\ce{->[k_3]} X_1+F\ce{->[k_4]} X_0+F.\]
		The generalized realization condition holds  by Example~\ref{Example:Generalized_Realization_Condition_MAPK_Nsite} in \S\ref{sec:Realization_conditions}. 
		The canonical extension in the canonical class of  \eqref{eq:MAPK} is the following network:
		\[\xymatrix @C=1.25pc @R=1pc{ Y_1\ar@<+2pt>@{-^>}[d]\ar@<-2pt>@{_<-}[d] & Y_2\ar@<+2pt>@{-^>}[d]\ar@<-2pt>@{_<-}[d] & \\
			X_0+E\ar[r] & X_1+E\ar[r] & X_2+E}
		\qquad
		\xymatrix @C=1.25pc @R=1pc{
			Y_3\ar@<+2pt>@{-^>}[d]\ar@<-2pt>@{_<-}[d] & Y_4\ar@<+2pt>@{-^>}[d]\ar@<-2pt>@{_<-}[d] &
			Y_5\ar@<+2pt>@{-^>}[d]\ar@<-2pt>@{_<-}[d] \\
			X_2+F\ar[r] & X_1+F\ar[r] & X_0+F.}\]
	\end{example}
	
\subsection{Binomial networks and intermediates}
	In this subsection, building on results from \cite{Grobner}, we relate the condition (det) for the core and extended network. This leads to a determinant criterion for multistationarity of extended networks with a complete binomial core network.
	
	Let  $\widetilde{\cN}$ be an extension of a binomial reaction network $\cN$   via the addition of intermediates $Y_1,\dots,Y_m$. \blue{For a fix monomial order and an ordered binomial basis $B$ of the steady state ideal  $I\subseteq\R(k)[x]$ of $\cN$}, let $\Rem(f,B)$ denote the remainder of the division of a polynomial $f$ by $B$.
	Assume $\Phi(B)$ is defined and consider 
	\begin{align}
\widetilde{B} & =  \Phi(B)\cup\Big\{y_i-\sum_{c\in\cC}\mu_{i,c}x^c , i\in [m] \Big\},   \label{eq:Btilde} \\
 \widetilde{B}' & = \Phi(B)\cup\Big\{y_i-\Rem\Big(\sum_{c\in\cC}\mu_{i,c}x^c,\Phi(B)\Big), i\in [m]\Big\}.   \label{eq:Btildeprime}
\end{align}
If the set on the right-hand side of the union in either $\widetilde{B}$ or $ \widetilde{B}' $ consists of binomials, then they have the form	\begin{equation}\label{Equation:Binomial_Basis_for_extended_networks}
\Phi(B)\cup \big\{y_i-p_i(\k)x^{\alpha_i},\  i\in [m]\big\}
		\end{equation}
where $p_i(\k)\in \R(\k)$ and $\alpha_i$ is a vector of non-negative integers.
If all intermediates are 1-input,  then a binomial basis of $\widetilde{I}$ is
	\begin{equation}\label{Equation:Binomial_Basis_for_extended_networks_1_input}
	\widetilde{B}=\Phi(B)\cup\{y_i-\mu_{i,c}x^{c_i},\ i\in [m]\},
	\end{equation}
where $c_i$ is the only input of $Y_i$, and this basis is  admissible provided $B$ is an admissible binomial basis of $I$ and $\Phi(B)$ is defined. This applies in particular to canonical extensions.
	
	\begin{definition}
		Let $\cN$ be a binomial reaction network and $\widetilde{\cN}$ an extension of it via the addition of intermediates. $\widetilde{\cN}$ is a \emph{binomial extension} of $\cN$ if
		 there exists an admissible binomial basis $B$ of $\cN$ such that  $\Phi(B)$ is well defined, no coefficient of $B$ becomes zero under $\Phi$, and further, either $\widetilde{B}$  or $ \widetilde{B}' $ is an admissible binomial basis of $\widetilde{\cN}$. \blue{In this case we say that $B$ and $\widetilde{B}$ are compatible binomial bases.}
	\end{definition}

\begin{remark}\label{remark:algind}
If the functions $\phi_{c\rightarrow c'}$ are algebraically independent over $\R$, then $\Phi(B)$ is well defined and no coefficient of $B$ vanishes. 	By \cite[Lemma 3.3]{Grobner}, \eqref{eq:Btilde} and \eqref{eq:Btildeprime} are bases of the steady state ideal of the extended network $\widetilde{I}\subseteq\R(\k)[x,y]$, and if $B$ is admissible, then so is $\widetilde{B}$.  To decide whether $\widetilde{B}'$ is also admissible, it suffices to check that the representations of $y_i-\Rem\big(\sum_{c\in\cC}\mu_{i,c}x^c,\Phi(G)\big)$ in terms of $\widetilde{B}$ and that of $y_i-\sum_{c\in\cC}\mu_{i,c}x^c$ in terms of $\widetilde{B}'$ are well defined for all $\k$ (c.f.~Lemma \ref{Lemma:Bases_having_same_solution_sets}).

Further, by  \cite[Theorem 3.10]{Grobner}, the steady state ideal $\widetilde{I}$ of  $\widetilde{\cN}$ is binomial if and only if the steady state ideal $I$ of $\cN$ is binomial and for any reduced Gr\"obner basis $G$ of $I$, $\Rem\big(\sum_{c\in\cC}\mu_{i,c}x^c,\Phi(G)\big)$ has at most one term for all $i\in[m]$. 
\end{remark}

\begin{lem}\label{Lemma:Rank_for_extentions}
		Let $\widetilde{\cN}$ be a binomial extension of  a binomial network $\cN$. Condition (rank) holds for $\widetilde{\cN}$ if and only if it holds for $\cN$.
	\end{lem}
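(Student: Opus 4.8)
The plan is to reduce the statement to a cardinality count, combining the explicit shape of a compatible binomial basis with the known additive behaviour of the rank under the addition of intermediates. Since $\widetilde{\cN}$ is a binomial extension of $\cN$, by definition there is an admissible binomial basis $B$ of the steady state ideal of $\cN$, with $\Phi(B)$ well defined and no coefficient vanishing, such that $\widetilde{B}$ or $\widetilde{B}'$ is an admissible binomial basis of $\widetilde{\cN}$; I would check (rank) for $\cN$ and for $\widetilde{\cN}$ against these compatible bases. In either case the relevant basis of $\widetilde{\cN}$ has the form \eqref{Equation:Binomial_Basis_for_extended_networks}, that is $\Phi(B)\cup\{y_i-p_i(\k)x^{\alpha_i},\ i\in[m]\}$.

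First I would argue that this extended basis has exactly $|B|+m$ elements. The map $\Phi$ acts only on the coefficients and leaves the $x$-monomials of each element of $B$ unchanged; since no coefficient vanishes under $\Phi$, every binomial of $B$ is sent to a binomial with the same pair of monomials, so $|\Phi(B)|=|B|$. The $m$ generators $y_i-p_i(\k)x^{\alpha_i}$ are pairwise distinct, as each involves a distinct intermediate variable $y_i$, and none of them lies in $\Phi(B)$, which involves only the variables $x$. Hence the extended basis, and therefore the exponent matrix derived from it, has $|B|+m$ columns.

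It then suffices to invoke the rank relation recorded after \eqref{Eq:Matrix_of_Conservation_Laws_for_extended-networks}, namely $\rank(\widetilde{\cN})=\rank(\cN)+m$. By definition, (rank) for $\cN$ reads $|B|=\rank(\cN)$, whereas (rank) for $\widetilde{\cN}$ reads $|B|+m=\rank(\widetilde{\cN})$. Substituting the rank relation into the latter and cancelling $m$ on both sides shows that the two conditions coincide. The argument is essentially bookkeeping, so I do not expect a genuine obstacle; the only points deserving care are the verification that $\Phi$ preserves the number of binomials, which rests on the binomial-extension hypothesis that no coefficient vanishes together with the fact that $\Phi$ fixes the $x$-monomials, and the correct use of the additivity of the rank in the number $m$ of intermediates.
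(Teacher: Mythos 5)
Your proposal is correct and follows essentially the same route as the paper's proof: both count that the compatible basis of $\widetilde{\cN}$ has $|B|+m$ elements and combine this with the rank additivity $\rank(\widetilde{\cN})=\rank(\cN)+m$ recorded after \eqref{Eq:Matrix_of_Conservation_Laws_for_extended-networks} to see that the two instances of (rank) are the same equation. Your explicit verification that $\Phi$ preserves the cardinality of $B$ is a small refinement the paper leaves implicit, but it does not change the argument.
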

	\begin{proof}
		Let $B$ and $\overline{B}$ be admissible binomial bases of the steady state ideals of $\cN$ and $\widetilde{\cN}$ respectively, such that $\overline{B}$ is either $\widetilde{B}$ in \eqref{eq:Btilde} or $\widetilde{B}'$ in \eqref{eq:Btildeprime}. 
Then   $|\overline{B}|=|B|+m$. If $n-d$ is the rank of $\cN$, then by \eqref{Eq:Matrix_of_Conservation_Laws_for_extended-networks} and the text below it, 
the condition (rank) for $\widetilde{\cN}$ is $n+m-d=|\overline{B}| =|B|+m $, which is the rank condition for $\cN$, 
 $|B|=n-d$.
	\end{proof}
It follows from the lemma  above that  a binomial extension of a complete binomial network satisfying (surj) is a complete binomial network.  In general, for an arbitrary extended network $\widetilde{\cN}$, we cannot guarantee that (surj) holds provided it holds for $\cN$. However, it does for canonical extensions.

\begin{proposition}\label{Lemma:Surj_for_Canonical_networks}
Let $\cN$ be a binomial network. Any canonical extension $\widetilde{\cN}_C$ of $\cN$ is a binomial extension. 
Further, (surj) holds for $\cN$ with an admissible binomial basis $B$ if and only if (surj) holds for $\widetilde{\cN}_C$ with $\widetilde{B}$ as in \eqref{eq:Btilde}.
Therefore,  a canonical extension of a complete binomial network is also complete.
\end{proposition}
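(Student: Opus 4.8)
The plan is to treat the three assertions in turn, with the middle one carrying the real content.

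First, to see that $\widetilde{\cN}_C$ is a binomial extension, I would use that a canonical extension adds only $1$-input intermediates and that, by \cite[Corollary 4.6]{Grobner}, the rational functions $\phi_{c\to c'}$ are algebraically independent over $\R$ for every canonical extension. By Remark~\ref{remark:algind} this guarantees that $\Phi(B)$ is well defined and that no coefficient of $B$ vanishes under $\Phi$. Since the coefficients $\mu_{i,c}$ of a canonical extension vanish for all $c\neq c_i$ (as computed in the proof of Proposition~\ref{Proposition Canonical Networks and Generalized Realization Condition}), the set adjoined in \eqref{eq:Btilde} reduces to the binomials $y_i-\mu_{i,c_i}x^{c_i}$, so $\widetilde{B}$ coincides with the basis in \eqref{Equation:Binomial_Basis_for_extended_networks_1_input}. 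That basis is admissible whenever $B$ is an admissible binomial basis and $\Phi(B)$ is defined; hence $\widetilde{B}$ is an admissible binomial basis of $\widetilde{I}$ and $\widetilde{\cN}_C$ is a binomial extension of $\cN$.

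Second, for the (surj) equivalence I would reduce both sides to the consistency condition via Lemma~\ref{Lemma Surjectivity Complete}, so the task becomes comparing $\ker\widetilde{N}\cap\R_{>0}^{\widetilde{r}}$ with $\ker N\cap\R_{>0}^{r}$. Writing $C=\begin{bmatrix}c_1 & \dots & c_m\end{bmatrix}$ and ordering the reactions of $\widetilde{\cN}_C$ as the original reactions followed by the forward arrows $c_i\to Y_i$ and then the backward arrows $Y_i\to c_i$, the stoichiometric matrix takes the block form
\[\widetilde{N}=\begin{bmatrix} N & -C & C\\ 0 & I_m & -I_m\end{bmatrix}.\]
A vector $(v_0,v_+,v_-)$ lies in $\ker\widetilde{N}$ precisely when $v_+=v_-$ (from the lower block) and $Nv_0=0$ (the upper block then collapses because $C(v_--v_+)=0$). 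Thus $\ker\widetilde{N}$ is the set of $(v_0,v_+,v_+)$ with $v_0\in\ker N$, and it meets $\R_{>0}^{\widetilde{r}}$ if and only if $\ker N$ meets $\R_{>0}^{r}$, since the flux components $v_+$ of the reversible pairs can always be chosen strictly positive and impose no further constraint. Applying Lemma~\ref{Lemma Surjectivity Complete} to the binomial networks $\cN$ and $\widetilde{\cN}_C$ yields exactly the claimed equivalence of (surj).

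Third, the completeness statement follows by assembling the pieces: if $\cN$ is complete, then (surj) and (rank) hold for an admissible binomial basis $B$. By the second part (surj) is inherited by $\widetilde{\cN}_C$, and since $\widetilde{\cN}_C$ is a binomial extension, Lemma~\ref{Lemma:Rank_for_extentions} transfers (rank) as well; hence $\widetilde{\cN}_C$ is complete. I expect the block computation of $\ker\widetilde{N}$ to be the crux, although it is short. What makes it clean is the reversibility of the adjoined reactions $c_i\rightleftharpoons Y_i$: the lower block $\begin{bmatrix} I_m & -I_m\end{bmatrix}$ forces equal forward and backward fluxes, which simultaneously cancels the contribution of the intermediates in the upper block and guarantees that a strictly positive flux through these pairs is always available. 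The points requiring care are the bookkeeping of signs and the reaction ordering in $\widetilde{N}$, together with the implicit use that (surj) does not depend on the chosen admissible binomial basis, which is precisely what Lemma~\ref{Lemma Surjectivity Complete} supplies.
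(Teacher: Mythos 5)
Your proof is correct, but its central step takes a genuinely different route from the paper's. The first part (algebraic independence via \cite[Corollary 4.6]{Grobner} plus Remark~\ref{remark:algind}, with $\widetilde{B}$ collapsing to the basis in \eqref{Equation:Binomial_Basis_for_extended_networks_1_input}) and the third part (combining the (surj) transfer with Lemma~\ref{Lemma:Rank_for_extentions}) coincide with the paper. For the (surj) equivalence, however, the paper argues directly on the parametrized system coming from the bases: labelling the new rate constants by $c_i\rightleftharpoons Y_i$ as $\kappa_{r+2i-1},\kappa_{r+2i}$, the condition (surj) for $\widetilde{\cN}_C$ splits into $x^M=\gamma(\kappa_1,\dots,\kappa_r)$ together with $y_i/x^{c_i}=\kappa_{r+2i-1}/\kappa_{r+2i}$; the second block is always solvable (take $\kappa_{r+2i-1}=y_i$, $\kappa_{r+2i}=x^{c_i}$) and decouples from the first, which is literally (surj) for $\cN$. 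You instead translate both (surj) conditions into consistency of the stoichiometric matrices via Lemma~\ref{Lemma Surjectivity Complete} and then compute
\[
\ker\begin{bmatrix} N & -C & C\\ 0 & I_m & -I_m\end{bmatrix}
=\big\{(v_0,v_+,v_+)\mid Nv_0=0,\ v_+\in\R^m\big\},
\]
so that positivity of a kernel vector of $\widetilde{N}$ is equivalent to positivity of one of $N$. Both arguments are short and valid; yours is purely linear-algebraic, never touches $\mu_{i,c_i}$ or the shape of $\widetilde{\gamma}$, and makes transparent that reversibility of the adjoined reactions is exactly what makes consistency (hence (surj)) hereditary, while the paper's stays closer to the definition of (surj), exhibiting explicitly which $\kappa$ realizes a given positive point, and does not route through the consistency criterion of \cite{Dickenstein-Toric}. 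One cosmetic caveat: you overload $C$ to denote both the set of complexes and the matrix $\begin{bmatrix}c_1&\dots&c_m\end{bmatrix}$; this is harmless but worth flagging, as the paper reserves $C$ for the set.
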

	\begin{proof}
By \cite[Corollary 4.6]{Grobner},  the functions $\phi_{c\rightarrow c'}$ are algebraically independent for canonical extensions.  
By \eqref{Equation:Binomial_Basis_for_extended_networks_1_input} and Remark~\ref{remark:algind},  $\widetilde{\cN}_C$  is a binomial extension.

For the second part of the proposition, write $C=\{c_1,\dots,c_m\}$. 
 Denote the reaction rate constants of $\cN$ by $k_1,k_2,\dots,k_r$ following the order of the reaction set. The network $\widetilde{\cN}_C$ has  $r+2m$ reactions. We denote the reaction rate constants of the reactions of $\widetilde{\cN}_C$ that are also in $\cN$ with $\k_1,\dots,\k_r$ and of the other reactions by
 $c_i\ce{<=>[\k_{r+2i-1}][\k_{r+2i}]}Y_i$ for $i\in[m].$
Then $\mu_{i,c_i}= \tfrac{\kappa_{r+2i-1}}{\kappa_{r+2i}}$.
Let $M$ be the matrix constructed  in \eqref{Equation Gamma and M} for the basis $B$. 
Now (surj) holds  for $\widetilde{\cN}_C$ if for every $(x,y)\in\R^{n+m}_{>0}$ there exists $(\k_1,\dots,\k_{r+2m})\in\R^{r+2m}_{>0}$ such that
$x^M=\gamma(\kappa)$ and $\tfrac{y_i}{x^{c_i}}=\tfrac{\kappa_{r+2i-1}}{\kappa_{r+2i}}$ for all $ i\in[m]$.
		Since the first part of the system does not depend on $\kappa_{r+1},\dots,\kappa_{r+2m}$, the second part is always satisfied independently from the first part by letting $\kappa_{r+2i}=x^{c_i}$ and $\kappa_{r+2i-1}=y_i$. The first part of the system is exactly the same as (surj) for $\cN$ after replacing $k_i$ by $\kappa_i$. Hence (surj) holds for $\widetilde{\cN}_C$ if and only if it holds for $\cN$.
The last statement follows from Lemma \ref{Lemma:Rank_for_extentions}.
	\end{proof}

 Recall that criterion (det)  can be used to determine multistationarity for complete \blue{binomial} networks, c.f.~Theorem \ref{Theorem:Multistationarity_and_Determinant_Condition}.
	Consider  a binomial extension $\widetilde{\cN}$ of a  complete binomial network $\cN$.  The steady state ideal of $\widetilde{\cN}$ has an admissible binomial basis $\widetilde{B}$ of the form \eqref{Equation:Binomial_Basis_for_extended_networks}, with $B$ an admissible binomial basis of the steady state ideal of $\cN$. Let $M\in\R^{n\times s}$ be the exponent matrix associated with the binomials in $B$, c.f.~\eqref{Equation Gamma and M}. Since no coefficient of $B$ becomes zero under $\phi$, the exponents of the monomials in $B$ and $\Phi(B)$ agree. Hence the exponent matrix $\widetilde{M}$ associated with the binomials in $\widetilde{B}$ has the following form:
\[
(\widetilde{M})^T=\left[\begin{array}{c;{2pt/2pt}c}
	M^T & 0\\
	\hdashline[2pt/2pt]
	\begin{array}{c}
	-\alpha_1\\
	\vdots\\
	-\alpha_m
	\end{array} & I_m
	\end{array}\right]\in\R^{(s+m)\times(n+m)}.\]
	Using  \eqref{Eq:Matrix_of_Conservation_Laws_for_extended-networks}, the corresponding matrices $\Gamma$ and $\widetilde{\Gamma}$ in \eqref{eq:Gamma}, for $\cN$ and  $\widetilde{\cN}$ respectively, are:
\begin{equation}\label{eq:Gamma_tilde}
\Gamma=\begin{bmatrix}
	\blue{(M^T)_\lambda}\, \\ Z
	\end{bmatrix}\in\R^{n\times n}, \qquad \widetilde{\Gamma}=\left[\begin{array}{c;{2pt/2pt}c}
	\blue{(M^T)_\lambda} & 0\\
	\hdashline[2pt/2pt]
	\begin{array}{c}
	-\alpha_1\\
	\vdots\\
	-\alpha_m
	\end{array} & \begin{array}{ccc}
	\lambda_{n+1} & & 0\\
	& \ddots & \\
	0 & & \lambda_{n+m}
	\end{array}\\ 
	\hdashline[2pt/2pt]
	Z & \begin{array}{ccc} 
	Zc_1^T & \dots & Zc_m^T
	\end{array}
	\end{array}\right]\in\R^{(n+m)\times(n+m)},
\end{equation}
 where $Z$ is a matrix of conservation laws for $\cN$, $c_1,\dots,c_m$ are chosen inputs for $Y_1,\dots,Y_m$  and $\lambda=(\lambda_1,\dots,\lambda_n)$. 
	If $\widetilde{\cN}_C$ is a canonical extension, then
$\alpha_i=c_i$ in $\widetilde{\Gamma}$,  c.f.~\eqref{Equation:Binomial_Basis_for_extended_networks_1_input}, in which case we denote the matrix by $\widetilde{\Gamma}_C$.

The results in this subsection combined with Theorem \ref{Theorem Multistationartity and Realization}(ii) imply that we can use  a determinant condition to detect multistationarity of networks that are not necessarily binomial, but such that the core network is binomial and complete. 
	
	\begin{thm}[Determinant criterion for extensions of complete binomial networks]\label{Theorem:Usage_of_Canonical_networks_for_multistationarity}
		Let $\cN$ be a complete binomial network and $\widetilde{\cN}$ an extended network in the canonical class associated with $C\subseteq \cC$. Then $\widetilde{\cN}$ is multistationary if and only if \blue{(det) holds for the canonical extension $\widetilde{\cN}_C$.}
	\end{thm}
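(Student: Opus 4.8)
The plan is to reduce the statement to a short chain of results already established in the excerpt: Theorem~\ref{Theorem Multistationartity and Realization}(ii) to replace the arbitrary extension $\widetilde{\cN}$ by the canonical extension $\widetilde{\cN}_C$, Proposition~\ref{Lemma:Surj_for_Canonical_networks} to guarantee that $\widetilde{\cN}_C$ is itself a complete binomial network, and finally Theorem~\ref{Theorem:Multistationarity_and_Determinant_Condition} to translate multistationarity of $\widetilde{\cN}_C$ into the determinant condition (det). No new computation is needed; the content of the proof is in lining up the hypotheses of each invoked result with the correct network.

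First I would unpack the hypothesis that $\widetilde{\cN}$ lies in the canonical class associated with $C$. By Definition~\ref{Definition Canonical Classes}, this means precisely that $\widetilde{\cN}$ is an extension of $\cN$ via intermediates whose set of inputs is $C$ and for which the generalized realization condition holds. With these two facts in hand, Theorem~\ref{Theorem Multistationartity and Realization}(ii) applies directly and yields that $\widetilde{\cN}$ is multistationary if and only if the canonical extension $\widetilde{\cN}_C$ is multistationary. This is the step that eliminates the arbitrary configuration of intermediates and lets us work with the simpler, structurally explicit network $\widetilde{\cN}_C$.

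Next I would verify that the determinant criterion is even applicable to $\widetilde{\cN}_C$, that is, that $\widetilde{\cN}_C$ is a complete binomial network. Since $\cN$ is complete binomial by assumption, Proposition~\ref{Lemma:Surj_for_Canonical_networks} supplies exactly this: any canonical extension of a complete binomial network is again complete binomial (it is a binomial extension, (surj) transfers from $\cN$, and (rank) is preserved by Lemma~\ref{Lemma:Rank_for_extentions}). Having established completeness, Theorem~\ref{Theorem:Multistationarity_and_Determinant_Condition} applies to $\widetilde{\cN}_C$ and shows that $\widetilde{\cN}_C$ is multistationary if and only if (det) holds for $\widetilde{\cN}_C$. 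Concatenating the two equivalences gives the claim.

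Because the argument is essentially bookkeeping of hypotheses, I do not expect a genuine obstacle; the only point requiring care is making sure each invoked result is applied to the right network. Concretely, I must confirm that Theorem~\ref{Theorem Multistationartity and Realization}(ii) is applied to $\widetilde{\cN}$ (using that membership in the canonical class supplies both the input set $C$ and the generalized realization condition), whereas the condition (det) is evaluated for $\widetilde{\cN}_C$ (using its completeness from Proposition~\ref{Lemma:Surj_for_Canonical_networks}). For emphasis one may also record that $\widetilde{\cN}_C$ itself belongs to the canonical class by Proposition~\ref{Proposition Canonical Networks and Generalized Realization Condition}; this makes the equivalence symmetric across the whole class and confirms that (det) for $\widetilde{\cN}_C$ is indeed a well-defined invariant characterizing multistationarity of every network in the canonical class associated with $C$.
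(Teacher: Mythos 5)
Your proof is correct and follows essentially the same route as the paper: Theorem~\ref{Theorem Multistationartity and Realization}(ii) reduces multistationarity of $\widetilde{\cN}$ to that of $\widetilde{\cN}_C$, and Theorem~\ref{Theorem:Multistationarity_and_Determinant_Condition} converts the latter into (det), with Proposition~\ref{Lemma:Surj_for_Canonical_networks} supplying the completeness of $\widetilde{\cN}_C$. Your write-up is in fact slightly more explicit than the paper's, which invokes completeness of the canonical extension without re-citing the proposition.
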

\begin{proof}
By Theorem \ref{Theorem Multistationartity and Realization}(ii), $\widetilde{\cN}$ is multistationary if and only if $\widetilde{\cN}_C$ is. \blue{Since $\widetilde{\cN}_C$ is a complete binomial network, it is multistationary if and only if (det) holds by Theorem \ref{Theorem:Multistationarity_and_Determinant_Condition}.}
\end{proof}
	
	\begin{example}\label{Example MAPK Multistationarity Structure}(Continued from Example \ref{Example:Canonical_Representatives_MPAK})
The network in Example \ref{Example:Canonical_Representatives_MPAK} is not binomial by \cite[Example 3.13]{Grobner},
but the core network is a complete binomial network. The extended network belongs to the canonical class associated with  $C=\{X_0+E,X_1+E,X_2+F,X_1+F,X_0+F\}$. With a suitable choice of basis $B$, the matrix $\widetilde{\Gamma}_C$ is as follows:
		{\small \[\left[\begin{array}{ccccc;{2pt/2pt}ccccc}
		-\lambda_1 & \lambda_2 & 0 & -\lambda_4 & \lambda_5 &  &  & 0 &  & \\
		0 & -\lambda_2 & \lambda_3 & -\lambda_4 & \lambda_5 &  &  &  &  & \\
		\hdashline[2pt/2pt]
		-\lambda_1 & 0 & 0 & -\lambda_4 & 0 & \lambda_6 &  &  &  & 0\\
		0 & -\lambda_2 & 0 & -\lambda_4 & 0 &  & \lambda_7 &  &  & \\
		0 & 0 & -\lambda_3 & 0 & -\lambda_5 &  &  & \lambda_8 &  & \\
		0 & -\lambda_2 & 0 & 0 & -\lambda_5 &  &  &  & \lambda_9 & \\
		-\lambda_1 & 0 & 0 & 0 & -\lambda_5 & 0 &  &  &  & \lambda_{10}\\
		\hdashline[2pt/2pt]
		1 & 1 & 1 & 0 & 0 & 1 & 1 & 1 & 1 & 1\\
		0 & 0 & 0 & 1 & 0 & 1 & 1 & 0 & 0 & 0\\
		0 & 0 & 0 & 0 & 1 & 0 & 0 & 1 & 1 & 1
		\end{array}\right].\]}The polynomial $\det(\widetilde{\Gamma}_C)$ has terms with different signs. Therefore by Theorem  \ref{Theorem:Usage_of_Canonical_networks_for_multistationarity}, the network in \eqref{eq:MAPK} is multistationary.
	\end{example}

	\section{Lifting multistationarity and the multistationarity structure}\label{sec:Lifting_multistationarity_and_the_multistationarity_structure}
	
In this section we use the following notation: for $J\subseteq [n]$ and $\lambda$  an  $n$-tuple of indeterminates/numbers, we define $\lambda_J=\prod_{i\in J}\lambda_i$.

	\subsection{Lifting multistationarity}
Theorem  \ref{Theorem Multistationartity and Realization} tells us that multistationarity of the core network implies multistationarity of the extended network if the realization condition is satisfied. In this scenario we informally say that multistationarity is lifted. 
In this subsection we show that multistationarity is   lifted for binomial extensions of complete binomial networks, even if the realization condition is not satisfied \blue{or we might not be able to verify that it holds}.
Before that, we start with a lemma on the structure of $\widetilde{\Gamma}$. 

	\begin{lem}\label{Lemma:Determinant_of_Gamma_Tilde}
		Let  $\widetilde{\cN}$ be a binomial extension  of a complete binomial network $\cN$ via the addition of  intermediates $Y_1,\dots,Y_m$,  \blue{let $B$ and $\widetilde{B}$ be compatible binomial bases that satisfy (rank) and let $\Gamma,\widetilde{\Gamma}$ be derived as in \eqref{eq:Gamma_tilde} for this choice of binomial bases.} Then
		\[\det(\widetilde{\Gamma})=\lambda_{[n+1,n+m]}\det(\Gamma)+p'(\lambda),\]
		where $p'(\lambda)$ is a polynomial in $\lambda$ such that none of its terms is divisible by $\lambda_{[n+1,n+m]}$.
	\end{lem}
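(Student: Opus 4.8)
The plan is to view $\det(\widetilde\Gamma)$ as a polynomial in the $m$ variables $\lambda_{n+1},\dots,\lambda_{n+m}$ and to isolate the coefficient of the top monomial $\lambda_{[n+1,n+m]}=\lambda_{n+1}\cdots\lambda_{n+m}$. The key structural observation, read off directly from \eqref{eq:Gamma_tilde}, is that each variable $\lambda_{n+i}$ occurs in exactly one entry of $\widetilde\Gamma$, namely the diagonal entry in row $s+i$ and column $n+i$ of the middle block, where $s$ denotes the number of rows of $(M^T)_\lambda$ (so $s+d=n$ by (rank)). Consequently $\det(\widetilde\Gamma)$ has degree at most one in each $\lambda_{n+i}$ --- a special case of \cite[Lemma 2.11]{Feliu-Sign} --- hence is multilinear in these variables, and I may group its monomials according to the subset $T\subseteq[m]$ of indices $i$ for which the factor $\lambda_{n+i}$ is present.

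Every monomial with $T\subsetneq[m]$ omits at least one factor $\lambda_{n+i}$ and is therefore not divisible by $\lambda_{[n+1,n+m]}$; collecting exactly these monomials defines $p'(\lambda)$, which by construction has no term divisible by $\lambda_{[n+1,n+m]}$. It then remains to show that the coefficient of $\lambda_{[n+1,n+m]}$, a polynomial in $\lambda_1,\dots,\lambda_n$ alone, equals $\det(\Gamma)$.

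To compute that coefficient I would use the generalized Laplace expansion of $\det(\widetilde\Gamma)$ along the $m$ middle rows $s+1,\dots,s+m$. Since the complementary rows --- those of $(M^T)_\lambda$ and of $Z$ --- carry no variable $\lambda_{n+i}$, a monomial collects all $m$ factors $\lambda_{n+1},\dots,\lambda_{n+m}$ only when, in each middle row, the chosen column is the diagonal one $n+i$; any other column there contributes a $\lambda$-free entry $-\alpha_i$ and loses a factor. Thus the unique surviving degree-$m$ term in the expansion is the product of $\det\big(\diag(\lambda_{n+1},\dots,\lambda_{n+m})\big)=\lambda_{[n+1,n+m]}$ with its complementary minor. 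Deleting rows $s+1,\dots,s+m$ and columns $n+1,\dots,n+m$ leaves precisely the $s$ rows of $(M^T)_\lambda$ together with the $d$ rows of $Z$, restricted to the first $n$ columns (here I use $s+d=n$), which is exactly the matrix $\Gamma$ of \eqref{eq:Gamma_tilde}. Hence the coefficient equals $\det(\Gamma)$ up to the sign generated by the expansion, which completes the decomposition.

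I expect the sign to be the only delicate point. The Laplace sign attached to the row set $\{s+1,\dots,s+m\}$ and column set $\{n+1,\dots,n+m\}$ is $(-1)^{m(s+n)}=(-1)^{md}$, so literally the coefficient is $(-1)^{md}\det(\Gamma)$. Since $\Gamma$ and $\widetilde\Gamma$ are determined only up to sign by the arbitrary orderings of the basis $B$, of the conservation laws in $Z$, and of the intermediates, this global factor is immaterial; equivalently, moving the middle block below the $Z$-block (a reordering of $m$ rows past $d$ rows, contributing exactly $(-1)^{md}$) makes the expansion sign $+1$ and yields the identity in the stated form $\det(\widetilde\Gamma)=\lambda_{[n+1,n+m]}\det(\Gamma)+p'(\lambda)$.
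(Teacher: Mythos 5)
Your argument is correct and is essentially the paper's own proof: both expand $\det(\widetilde{\Gamma})$ by the generalized Laplace expansion along the rows $s+1,\dots,s+m$, observe that the variables $\lambda_{n+1},\dots,\lambda_{n+m}$ occur only on the diagonal of the middle block, so the unique column set contributing the monomial $\lambda_{[n+1,n+m]}$ is $\{n+1,\dots,n+m\}$, and identify the complementary minor (the rows of $(M^T)_\lambda$ and of $Z$, restricted to the first $n$ columns) with $\Gamma$, using $s+d=n$ from (rank). The one point where you differ is the sign bookkeeping, and there you are in fact more careful than the paper: the paper's proof writes the expansion with sign $+1$, whereas the Laplace sign attached to this row/column pair is, as you correctly compute, $(-1)^{md}$ (for instance, with $n=2$, $s=d=m=1$, the coefficient of $\lambda_3$ in $\det(\widetilde{\Gamma})$ is $-\det(\Gamma)$), so the displayed identity taken literally holds only up to this global sign. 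As you note, the discrepancy is immaterial wherever the lemma is applied (Theorem \ref{Proposition:Multistationarity_and_binomial_extensions}, Lemmas \ref{Lemma:Mult_from_D} and \ref{Lemma:Mult_from_D_fixed_sign}, Algorithm \ref{Algorith:Circuits_Main}), since those arguments only use whether the relevant coefficient is zero or has terms of both signs, properties invariant under a global sign; alternatively, the sign is absorbed by reordering the rows of $\widetilde{\Gamma}$, exactly as you suggest.
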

	\begin{proof}
		Let $s=\rank(N)$ and $\widetilde{\Gamma}_{[s+1,s+m],[n+1,n+m]}$ be the submatrix of $\widetilde{\Gamma}$ obtained by removing the rows with index in $[s+1,s+m]$ and the columns with index in $[n+1,n+m]$. By the generalized Laplacian expansion of $\det(\widetilde{\Gamma})$ along rows $s+1,\dots,s+m$ we have that
		\[\begin{array}{ll}
		\det(\widetilde{\Gamma}) & =\lambda_{[n+1,n+m]}\det\big(\widetilde{\Gamma}_{[s+1,s+m],[n+1,n+m]}\big)+p'(\lambda)  =\lambda_{[n+1,n+m]}\det(\Gamma)+p'(\lambda),
		\end{array}\]
		where $p'(\lambda)$ is a polynomial in $\lambda$. By construction, $p'(\lambda)$ does not have any monomial multiple of $\lambda_{[n+1,n+m]}$. 
	\end{proof}

\begin{thm}[Lifting multistationarity]\label{Proposition:Multistationarity_and_binomial_extensions}
Let  $\widetilde{\cN}$ be a binomial extension  of a complete binomial network $\cN$ via the addition of $m$ intermediates $Y_1,\dots,Y_m$,   \blue{let $B$ and $\widetilde{B}$ be compatible  binomial bases such that (rank) and (surj) hold, and let $\Gamma$ be as in \eqref{eq:Gamma} for $B$. } If $\cN$ is multistationary and $\det(\Gamma)\neq 0$, then $\widetilde{\cN}$ is multistationary.
	\end{thm}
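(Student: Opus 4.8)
The plan is to show that $\widetilde{\cN}$ is itself a complete binomial network and then apply the determinant criterion of Theorem~\ref{Theorem:Multistationarity_and_Determinant_Condition} to it, transferring the sign information from $\det(\Gamma)$ to $\det(\widetilde{\Gamma})$ by means of the expansion in Lemma~\ref{Lemma:Determinant_of_Gamma_Tilde}.

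First I would check that $\widetilde{\cN}$ is complete binomial. Since $B$ and $\widetilde{B}$ are compatible binomial bases by hypothesis, $\widetilde{\cN}$ is a binomial extension of $\cN$; because $\cN$ is complete it satisfies (rank), hence so does $\widetilde{\cN}$ by Lemma~\ref{Lemma:Rank_for_extentions}, while (surj) holds for $\widetilde{\cN}$ by assumption. Thus $\widetilde{\cN}$ is a complete binomial network, and by Theorem~\ref{Theorem:Multistationarity_and_Determinant_Condition} it is multistationary if and only if (det) holds for it. This reduces the goal to showing that $\det(\widetilde{\Gamma})$ has at least one positive and at least one negative coefficient.

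Next I would extract the sign information carried by $\det(\Gamma)$. As $\cN$ is complete binomial and multistationary, Theorem~\ref{Theorem:Multistationarity_and_Determinant_Condition} gives that (det) holds for $\cN$, so $\det(\Gamma)$ is either zero or has both a positive and a negative coefficient; the hypothesis $\det(\Gamma)\neq 0$ rules out the zero case, so $\det(\Gamma)$ genuinely has coefficients of both signs. (This is precisely where $\det(\Gamma)\neq 0$ enters: it discards the degenerate possibility allowed by (det).) Now invoke Lemma~\ref{Lemma:Determinant_of_Gamma_Tilde}, which yields
\[\det(\widetilde{\Gamma})=\lambda_{[n+1,n+m]}\det(\Gamma)+p'(\lambda),\]
where no monomial of $p'(\lambda)$ is divisible by $\lambda_{[n+1,n+m]}$. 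Every monomial of $\lambda_{[n+1,n+m]}\det(\Gamma)$ is divisible by $\lambda_{[n+1,n+m]}$, so the monomial supports of the two summands are disjoint and no cancellation can occur between them. Consequently each coefficient of $\det(\Gamma)$ reappears with the same sign as the coefficient of the corresponding monomial $\lambda_{[n+1,n+m]}\cdot(\text{monomial})$ in $\det(\widetilde{\Gamma})$. Hence $\det(\widetilde{\Gamma})$ inherits at least one positive and at least one negative coefficient, so (det) holds for $\widetilde{\cN}$, and therefore $\widetilde{\cN}$ is multistationary.

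The two reductions are routine; the only step requiring care is the no-cancellation argument in the last paragraph. Its validity rests entirely on the support-disjointness asserted in Lemma~\ref{Lemma:Determinant_of_Gamma_Tilde}, namely that $p'(\lambda)$ contains no term divisible by $\lambda_{[n+1,n+m]}$, which guarantees that adding $p'(\lambda)$ cannot destroy a sign already present in the block $\lambda_{[n+1,n+m]}\det(\Gamma)$. I expect this to be the main conceptual obstacle, whereas verifying completeness of $\widetilde{\cN}$ and invoking the determinant criterion are immediate from the earlier results.
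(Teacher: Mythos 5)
Your proposal is correct and follows essentially the same route as the paper: both use Theorem~\ref{Theorem:Multistationarity_and_Determinant_Condition} together with the hypothesis $\det(\Gamma)\neq 0$ to extract two terms of opposite sign in $\det(\Gamma)$, then transfer them to $\det(\widetilde{\Gamma})$ via the decomposition of Lemma~\ref{Lemma:Determinant_of_Gamma_Tilde}, and conclude by applying the determinant criterion to the complete binomial network $\widetilde{\cN}$. Your explicit verification of completeness of $\widetilde{\cN}$ (via Lemma~\ref{Lemma:Rank_for_extentions}) and of the no-cancellation step are points the paper leaves implicit, but they are the same underlying argument.
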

	\begin{proof} Let  $\lambda=(\lambda_1,\dots,\lambda_{n+m})$ and $\bar{\lambda}=(\lambda_1,\dots,\lambda_{n})$. 
		Since $\cN$ is multistationary   and by hypothesis $\det(\Gamma)\neq 0$, by Theorem \ref{Theorem:Multistationarity_and_Determinant_Condition}  we have that $\det(\Gamma)$ is a polynomial in $\lambda_1,\dots,\lambda_n$ with two terms of different non-zero sign, namely $\alpha\bar{\lambda}^u$ and $-\beta\bar{\lambda}^{v}$ with $\alpha,\beta>0$ and $u,v\in \Z_{\geq 0}^n$. 
\blue{Now consider the matrix $\widetilde{\Gamma}$ from \eqref{eq:Gamma_tilde} using the basis $\widetilde{B}$.}
By Lemma \ref{Lemma:Determinant_of_Gamma_Tilde}, $\det(\widetilde{\Gamma})$ has two terms with different non-zero sign $\alpha\bar{\lambda}^u\lambda_{[n+1,n+m]}$ and $-\beta\bar{\lambda}^v\lambda_{[n+1,n+m]}$. Since $\widetilde{\cN}$ is a complete binomial network,  $\widetilde{\cN}$ is multistationary by Theorem \ref{Theorem:Multistationarity_and_Determinant_Condition}.
	\end{proof}
	
We finish this subsection with \blue{a couple of examples}
 where Theorem \ref{Proposition:Multistationarity_and_binomial_extensions} allows us to conclude that an extended network is multistationary, while the realization condition is \blue{either} not satisfied \blue{or not easy to verify} (and hence Theorem \ref{Theorem Multistationartity and Realization} cannot be applied).
	
	\begin{example}\label{Example:Surjectivity_Yes_Realisation_No} 
		Consider the following network $\cN$:
		\[
\begin{array}{cc}
		\adjustbox{valign=c}{$\begin{array}{c}
			X_0+E\ce{->[k_1]}X_1+E\ce{->[k_2]}X_2+E\\[5pt]
			X_2+F\ce{->[k_3]}Y_1\ce{->[k_4]}X_1+F\ce{->[k_5]}X_0+F
			\end{array}$} & \qquad
		\adjustbox{valign=c}{\xymatrix @C=1.75pc @R=0.75pc { & 2E & \\
				5E\ar[ur]^{k_6}\ar[r]^{k_7}\ar[dr]_{k_8} & 3E & E.\ar[ul]_{k_9}\ar[l]_{k_{10}}\ar[dl]^{k_{11}}\\
				& 4E & }}
		\end{array}
		\]
$\cN$ has rank $4$ and the steady state ideal has the following admissible binomial basis with $4$ elements: $B=\{  -k_1x_0e+k_5x_1f, -k_2x_1e+k_3k_4x_2f, k_4y_1-k_3x_2f, (k_{9}+2k_{10}+3k_{11})e-(3k_6+2k_7+k_8)e^5\}$.  
		Since $(1,1,1,1,1,4,5,6,6,5,4)\in\ker(N)\cap\R_{>0}^{11}$, (surj) holds. Therefore $\cN$  is a complete binomial network. Using (det), we see that $\cN$ is multistationary and $\det(\Gamma)\neq 0$.
		Now consider the following extension $\widetilde{\cN}$ of $\cN$ via the addition of one intermediate $Y_2$:
		\[
\begin{array}{cc}
		\adjustbox{valign=c}{$\begin{array}{c}
			X_0+E\ce{->[\k_1]}X_1+E\ce{->[\k_2]}X_2+E\\[5pt]
			X_2+F\ce{->[\k_3]}Y_1\ce{->[\k_4]}X_1+F\ce{->[\k_5]}X_0+F
			\end{array}$} & \qquad
		\adjustbox{valign=c}{\xymatrix @C=1.5pc @R=0.5pc { 5E\ar[dr]^{\k_6} & & 2E\\
				& Y_2\ar[ur]^{\k_8}\ar[r]^{\k_9}\ar[dr]_{\k_{10}} & 3E \\
				E\ar[ur]^{\k_7} & & 4E.}}
		\end{array}
		\]
This network does not satisfy the realization condition (see \S 2.2 in the electronic supplementary material of \cite{Feliu-Simplifying}).
The  well-defined set $\widetilde{B}'=\Phi(B) \cup \{y_2-\tfrac{4\k_7e}{3\k_8+2\k_9+\k_{10}}\}$ in \eqref{eq:Btildeprime} makes $\widetilde{\cN}$  a binomial extension.
Further $(1,1,1,1,1,5,5,4,2,4)\in\ker(\widetilde{N})\cap\R_{>0}^{10}$, and hence (surj) holds for $\widetilde{N}$.
By  Theorem \ref{Proposition:Multistationarity_and_binomial_extensions}, we conclude that $\widetilde{\cN}$ is also multistationary. 
	\end{example}

\color{black}
\begin{example}\label{Example:XX}
In this example we illustrate that for some examples, the determinant of $\Gamma$ can be much smaller than the determinant of $\widetilde{\Gamma}$. 
Consider the following reaction network $\widetilde{\cN}$:
\[\adjustbox{valign=c}{\xymatrix @C=2pc @R=1.5pc { X_1+X_6\ar[r]^{\quad \kappa_1} & Y_1\ar[r]^{\k_2}\ar@<+1pt>@{-^>}[dr]^{\kappa_7}\ar@<-1pt>@{_<-}[dr]_{\kappa_8} & Y_2\ar[r]^{\k_3\quad} & X_2+X_6\ar@<+1pt>@{-^>}[r]^{\quad \kappa_4}\ar@<-1pt>@{_<-}[r]_{\quad \kappa_5} & Y_3\ar[r]^{\k_6\quad} & X_3+X_6 \\
		X_3+X_7\ar@<+1pt>@{-^>}[r]^{\quad \kappa_9}\ar@<-1pt>@{_<-}[r]_{\quad \kappa_{10}} & Y_4\ar[r]_{\k_{11}} & Y_5\ar[ur]^{\k_{12}}\ar[r]_{\k_{13}\quad} & X_2+X_7\ar@<+1pt>@{-^>}[r]^{\quad \kappa_{14}}\ar@<-1pt>@{_<-}[r]_{\quad \kappa_{15}} & Y_6\ar[r]^{\k_{16}\quad} & X_1+X_7 }
                    }\]

\[        \begin{array}{cc}
        X_2\ce{<=>[\k_{17}][\k_{18}]} Y_7\ce{<=>[\k_{19}][\k_{20}]} X_5 & \qquad 
\adjustbox{valign=c}{$\begin{array}{c}
                            			X_4+X_5\ce{<=>[\k_{21}][\k_{22}]} Y_8\ce{->[\k_{23}]}0\\[5pt]
                            			2X_4\ce{<=>[\k_{24}][\k_{25}]} Y_9\ce{->[\k_{26}]}3X_4+X_5.
                            			\end{array}$}
        \end{array}
			\] 
The species $Y_1,\dots,Y_9$ are intermediates, which, upon removal, lead to the following core network $\cN$:
\begin{align*}
X_1+X_6  & \ce{->[k_1]}  X_2+X_6 \ce{->[k_2]}  X_3+X_6  & X_1+X_6  & \ce{->[k_3]} X_2+X_7 &     \\ 
X_3+X_7 & \ce{->[k_5]}  X_2+X_7 \ce{->[k_6]}  X_1+X_7  & X_3+X_7 & \ce{->[k_4]}X_2+X_6   \\
  X_4+X_5 & \ce{->[k_9]}0  \qquad  2X_4  \ce{->[k_{10}]}3X_4+X_5  & X_2 & \ce{<=>[k_7][k_8]} X_5.
           \end{align*}
The realization condition for $\widetilde{\cN}$ requires, for any given $k\in \R_{>0}^4$, the existence of $\k\in \R_{>0}^{10}$ that satisfies the following identities:
\begin{align*}
k_1 & =\tfrac{\k_2(\k_8+\k_{12}+\k_{13})+\k_1\k_7}{(\k_2+\k_7)(\k_8+\k_{12}+\k_{13})-\k_7\k_8} & 
k_3& =\tfrac{\k_1\k_7\k_{13}}{(\k_2+\k_7)(\k_8+\k_{12}+\k_{13})-\k_7\k_8}\\[6pt]
k_4& =\tfrac{\k_2\k_8\k_9\k_{11}+(\k_2+\k_7)\k_9\k_{11}\k_{12}}{(\k_{10}+k_{11})\big((\k_2+\k_7)(\k_8+\k_{12}+\k_{13})-\k_7\k_8\big)} & 
k_5 & =\tfrac{(\k_2+\k_7)\k_9\k_{11}\k_{13}}{(\k_{10}+k_{11})\big((\k_2+\k_7)(\k_8+\k_{12}+\k_{13})-\k_7\k_8\big)}.
\end{align*}
This condition cannot be verified by means of  Proposition \ref{Lemma:Realization_more_classes}. 
We turn instead to Theorem~\ref{Proposition:Multistationarity_and_binomial_extensions}, and to this end, we argue that $\cN$ is a complete binomial network, $\widetilde{\cN}$ a binomial extension, and that there exist compatible bases $B$ and $\widetilde{B}$ that satisfy (rank) and (surj). 
The rank of $\cN$ is $5$. An admissible binomial basis $B$ for $\cN$ that satisfies (rank) is found by performing linear combinations of the steady state polynomials:
\begin{multline*}
\big\{-k_3x_1x_6+k_4x_3x_7,\,-(k_1+k_3)x_1x_6+k_6x_2x_7,\,k_2x_2x_6-(k_4+k_5)x_3x_7,\\ \,k_7x_2-k_8x_5,-k_9x_4x_5+k_{10}x_4^2\big\}.
\end{multline*}
Since $(1,2,1,1,1,2,1,1,1,1)\in\ker(N)\cap\R_{>0}^{10}$,   (surj) holds. Therefore, $\cN$ is a complete binomial network.
The matrix $\Gamma$ constructed from $B$ is 
\[\Gamma= \begin{bmatrix}
\lambda_1 & 0 & -\lambda_3 & 0 & 0 & \lambda_6 & -\lambda_7\\
\lambda_1 & -\lambda_2 & 0 & 0 & 0 & \lambda_6 & -\lambda_7\\
0 & \lambda_2 & -\lambda_3 & 0 & 0 & \lambda_6 & -\lambda_7\\
0 & \lambda_2 & 0 & 0 & -\lambda_5 & 0 & 0\\
0 & 0 & 0 & -\lambda_4 & \lambda_5 & 0 & 0\\
0 & 0 & 0 & 0 & 0 & 1 & 1\\
1 & 1 & 1 & -1 & 1 & 0 & 0
\end{bmatrix}, \]
and its determinant is $ (\lambda_1\lambda_2\lambda_3\lambda_4-\lambda_1\lambda_2\lambda_3\lambda_5+\lambda_1\lambda_2\lambda_4\lambda_5+\lambda_1\lambda_3\lambda_4\lambda_5+\lambda_2\lambda_3\lambda_4\lambda_5)(\lambda_6+\lambda_7).$
Since it has terms of both signs, $\cN$ is multistationary by Theorem~\ref{Theorem:Multistationarity_and_Determinant_Condition}.

We check the conditions on $\widetilde{\cN}$ now. It is straightforward to see that $\Phi(B)$ is well defined. 
Let $g=\k_2\k_8+\k_2\k_{12}+\k_2\k_{13}+\k_7\k_{12}+\k_7\k_{13}$. Then, an admissible basis $\widetilde{B}$ for the steady state ideal of the extended network, found as in \eqref{eq:Btilde}, consists of the following polynomials
\begin{align*}
& \quad -\phi(k_3)x_1x_6+\phi(k_4)x_3x_7,\quad -\big(\phi(k_1)+\phi(k_3)\big)x_1x_6+\phi(k_6)x_2x_7,\\
& \phi(k_2)x_2x_6-\big(\phi(k_4)+\phi(k_5)\big)x_3x_7,\quad \phi(k_7)x_2-\phi(k_8)x_5, \quad -\phi(k_9)x_4x_5+\phi(k_{10})x_4^2,\\
& y_1-\tfrac{\k_8+\k_{12}+\k_{13}}{g}x_1x_6-\tfrac{\k_8\k_9\k_{11}}{(k_{10}+\k_{11})g}x_3x_7, \quad y_2-\tfrac{\k_2(\k_8+\k_{12}+\k_{13})}{\k_3g}x_1x_6-\tfrac{\k_2\k_8\k_9\k_{11}}{\k_3(k_{10}+\k_{11})g}x_3x_7,\\
& y_3-\tfrac{\k_4}{\k_5+\k_6}x_2x_6, \quad y_4-\tfrac{\k_9}{\k_{10}+\k_{11}}x_3x_7,\quad y_5-\tfrac{\k_1\k_7}{g}x_1x_6-\tfrac{\k_9\k_{11}(\k_2+\k_7)}{(\k_{10}+\k_{11})g}x_3x_7,\\
& y_6-\tfrac{\k_{14}}{\k_{15}+\k_{16}}x_2x_7,\quad y_7-\tfrac{\k_{17}}{\k_{18}+\k_{19}}x_2-\tfrac{\k_{20}}{\k_{18}+\k_{19}}x_5, \quad y_8-\tfrac{\k_{21}}{\k_{22}+\k_{23}}x_4x_5,\quad y_9-\tfrac{\k_{24}}{\k_{25}+\k_{26}}x_4^2.
\end{align*}
Only four of the polynomials are not binomial. We construct now the new admissible basis $\widetilde{B}'$ from \eqref{eq:Btildeprime}, and then the polynomials of the form $y_i-\Rem\Big(\sum_{c\in\cC}\mu_{i,c}x^c,\Phi(B)\Big)$ are all  binomial. 
We conclude that $B$ and $\widetilde{B}'$ are compatible binomial bases. The condition (surj) holds as well for $\widetilde{\cN}$  since 
\[(2,1,1,2,1,1,2,1,2,1,1,1,1,3,1,2,1,1,1,1,2,1,1,2,1,1)\in\ker(\widetilde{N})\cap\R_{>0}^{26}.\]
Therefore, we   apply Theorem \ref{Theorem:Multistationarity_and_Determinant_Condition} and conclude that $\widetilde{\cN}$ is multistationary. If instead, we compute $\det(\widetilde{\Gamma})$, we obtain a polynomial with $68$ terms, which has terms of both signs.
\end{example}
\color{black}
	
	\subsection{Multistationarity structure}
In this subsection we introduce the \emph{multistationarity structure} of a core network, consisting of the  subsets of complexes that  give rise to multistationarity when being the input of some intermediate. 
Note that   if $C_1\subseteq C_2\subseteq\cC$, then $\widetilde{\cN}_{C_2}$ is a binomial extension of $\widetilde{\cN}_{C_1}$. Using this and Lemma \ref{Lemma:Determinant_of_Gamma_Tilde}, we devise a strategy to determine the multistationarity structure of complete binomial networks by computing the determinant of $\widetilde{\Gamma}_{\cC}$ corresponding to the largest canonical extension. 	
We start with an example that illustrates the approach. 

	\begin{example}\label{Example:Explaining_how_to_use_det_to_study_canonical_models}
	 Consider the network in Example \ref{Example:A_Biological_Circuit_Determinant}, where $Y_1$ and $Y_2$ are intermediates. The associated core network is
    \begin{equation}\label{Equation:Core_network_of_simple_circuit}
    X_1\ce{<-[k_1]}0\ce{<-[k_3]}X_2\qquad X_1+E\ce{->[k_2]}X_2+E\qquad 2X_1+E,
    \end{equation}
	which gives $\cC=\{ 0,X_1,X_1+E,X_2+E,2X_1+E,X_2\}$. 
	An admissible binomial basis of the steady state ideal of \eqref{Equation:Core_network_of_simple_circuit} is $\{k_1-k_2x_1e,\,k_2x_1e-k_3x_2\}$, \blue{which is easily found by performing linear combinations of the steady state polynomials}. Since the rank of \eqref{Equation:Core_network_of_simple_circuit} is two and $(1,1,1)\in\ker(N)\cap\R_{>0}^3$, \eqref{Equation:Core_network_of_simple_circuit} is a complete binomial network. 
	The polynomial $\det(\widetilde{\Gamma}_\cC)$ for the largest canonical extension $\widetilde{\cN}_{\cC}$ is as follows:
{\small 	\begin{align*}
	\det(\widetilde{\Gamma}_\cC) &=\left|\begin{array}{ccc;{2pt/2pt}cccccc}
	-\lambda_1 & 0 & -\lambda_3 & & & & 0 & & \\
	\lambda_1 & -\lambda_2 & \lambda_3 & & & & & & \\
	\hdashline[2pt/2pt]
	0 & 0 & 0 & \lambda_4 & & & & & 0 \\
	-\lambda_1 & 0 & 0 & & \lambda_5 & & & & \\
	-\lambda_1 & 0 & -\lambda_3 & & & \lambda_6 & & & \\
	0 & -\lambda_2 & -\lambda_3 & & & & \lambda_7 & & \\
	-2\lambda_1 & 0 & -\lambda_3 & & & & & \lambda_8 & \\
	0 & -\lambda_2 & 0 & 0 & & & & & \lambda_9\\
	\hdashline[2pt/2pt]
	0 & 0 & 1 & 0 & 0 & 1 & 1 & 1 & 0
	\end{array}\right| = \begin{array}[t]{l} -\lambda_1\lambda_2\lambda_3\lambda_4\lambda_5\lambda_6\lambda_7\lambda_9 \\ \quad +\lambda_1\lambda_2\lambda_3\lambda_4\lambda_5\lambda_6\lambda_8\lambda_9 \\ \quad +\lambda_1\lambda_2\lambda_4\lambda_5\lambda_6\lambda_7\lambda_8\lambda_9.
\end{array}
	\end{align*}}%
Since \blue{$\widetilde{\cN}_{\cC}$} is a binomial extension of all canonical extensions, and all canonical extensions are complete, we can use Lemma \ref{Lemma:Determinant_of_Gamma_Tilde}  to find $\det(\widetilde{\Gamma}_C)$ for all $C\red{\subseteq}\cC$. For example, the $3$rd and $5$th complexes form the set $C=\{X_1+E,2X_1+E\}$ and $\det(\widetilde{\Gamma}_C)$ is the coefficient of $\lambda_4\lambda_5\lambda_7\lambda_9$ in $\det(\widetilde{\Gamma}_\cC)$: 
	\[\begin{array}{l}
	\det\hspace{-0.06cm}\big(\widetilde{\Gamma}_{\{X_1+E,2X_1+E\}}\big)=-\lambda_1\lambda_2\lambda_3\lambda_6+\lambda_1\lambda_2\lambda_6\lambda_8.
	\end{array}\]
For any subset $J\subseteq \{4,\dots,9\}$, the coefficient of $\lambda_J$ in $\det(\widetilde{\Gamma}_\cC)$ is $\det(\widetilde{\Gamma}_C)$ for the set of complexes $c_i$ such that $i+3\notin J$.

By Theorem \ref{Theorem:Usage_of_Canonical_networks_for_multistationarity}, all networks in the canonical class associated with $\{X_1+E,2X_1+E\}$ are multistationary.
In particular,  the network in Example \ref{Example:A_Biological_Circuit_Determinant} belongs to this canonical class (and hence is multistationary), since it satisfies the generalized realization condition:
	\[\begin{array}{l}
	\text{for every }(k,r)\in\R_{>0}^{3+2}\text{ there exists }\k\in\R_{>0}^7\text{ such that }\\[5pt] \qquad\qquad 
	k_1=\k_1,\quad k_2=\tfrac{\k_2\k_4}{\k_3+\k_4},\quad k_3=\k_7,\quad r_1=\tfrac{\k_2}{\k_3+\k_4},\quad r_2=\tfrac{\k_5}{\k_6}.
	\end{array}\]
	\end{example}
	
	Motivated by this example, we proceed as follows. 
\blue{Let $ \mathcal{P}(\cC)$ denote the power set of $\cC$, that is, the set of all subsets of $\cC$.}	

	\begin{definition}\label{Definition Multistationary structure}
		Let $\cN$ be a reaction network, with set of complexes $\cC$. Let \blue{$\Mult\subseteq \mathcal{P}(\cC)$} be the set of all subsets of complexes $C\subseteq\cC$ for which the canonical extension $\widetilde{\cN}_C$ of $\cN$ associated with $C$ is multistationary. 
Denote by $\Circuits$ the set of minimal elements of $\Mult$ with respect to inclusion.
The set $\Mult$ is called the \emph{multistationarity structure} of $\cN$ and the elements of $\Circuits$ are called the \emph{circuits of multistationarity} of $\cN$.
	\end{definition}
		
By  Theorem \ref{Theorem Multistationartity and Realization}(i), the sets $\Mult$ and $\Circuits$ associated with a complete binomial network determine each other.
\blue{We will use the following notation throughout the rest of this section. Given a complete binomial network $\cN$, we choose an admissible basis $B$ that satisfies (rank) and (surj). For the largest canonical extension $\cN_\cC$ of $\cN$, we consider the admissible basis $\widetilde{B}$ given in \eqref{Equation:Binomial_Basis_for_extended_networks_1_input}, which satisfies (rank) and (surj) and is compatible with $B$. Then, we let 
\begin{equation}\label{eq:DN}
D_\cN=\det(\widetilde{\Gamma}_{\cC}),
\end{equation} 
with $\widetilde{\Gamma}$ derived from this data as in \eqref{eq:Gamma_tilde}.
}
We assume that  the set of complexes is ordered $\cC=\{c_1,\dots,c_m\}$.
	
	\begin{lem}\label{Lemma:Mult_from_D}
Assume $\cN$ is a complete binomial network \blue{and $D_\cN$ obtained as in \eqref{eq:DN}}. Then 
		$C\in\Mult$ if and only if the coefficient of $\prod_{c_i\in \blue{\cC\setminus C}}\lambda_{n+i}$ in $D_\cN$ is zero or has two terms  with different non-zero sign.
	\end{lem}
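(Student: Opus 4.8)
The plan is to connect the combinatorial statement about coefficients of $D_\cN$ to condition (det) for the canonical extension $\widetilde{\cN}_C$, and then invoke Theorem~\ref{Theorem:Usage_of_Canonical_networks_for_multistationarity}. First I would recall that by Definition~\ref{Definition Multistationary structure}, $C\in\Mult$ means precisely that $\widetilde{\cN}_C$ is multistationary. Since $\cN$ is complete binomial, Proposition~\ref{Lemma:Surj_for_Canonical_networks} guarantees that $\widetilde{\cN}_C$ is also a complete binomial network, so by Theorem~\ref{Theorem:Multistationarity_and_Determinant_Condition} multistationarity of $\widetilde{\cN}_C$ is equivalent to (det) holding for $\widetilde{\cN}_C$, i.e.\ to $\det(\widetilde{\Gamma}_C)$ being either zero or having at least one positive and one negative coefficient. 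Thus the whole lemma reduces to establishing the identity that $\det(\widetilde{\Gamma}_C)$ equals, up to the sign and monomial bookkeeping, the coefficient of $\prod_{c_i\in\cC\setminus C}\lambda_{n+i}$ in $D_\cN=\det(\widetilde{\Gamma}_{\cC})$.

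The key computational step is extracting this coefficient, and the natural tool is Lemma~\ref{Lemma:Determinant_of_Gamma_Tilde}. The idea, already illustrated in Example~\ref{Example:Explaining_how_to_use_det_to_study_canonical_models}, is that the largest canonical extension $\widetilde{\cN}_{\cC}$ is a binomial extension of each canonical extension $\widetilde{\cN}_C$ (for $C\subseteq\cC$), obtained by adding the intermediates attached to the complexes in $\cC\setminus C$. Applying Lemma~\ref{Lemma:Determinant_of_Gamma_Tilde} to this pair, with the extra intermediates indexed by $\{n+i : c_i\in\cC\setminus C\}$, gives a factorization in which $\det(\widetilde{\Gamma}_C)$ appears precisely as the coefficient of the squarefree monomial $\prod_{c_i\in\cC\setminus C}\lambda_{n+i}$ in $\det(\widetilde{\Gamma}_{\cC})$, while all other terms of the expansion are not divisible by this monomial. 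I would make this precise by verifying that relabeling the intermediates so that those corresponding to $\cC\setminus C$ occupy the last block lets me read off the stated coefficient, using that $\det(\widetilde{\Gamma})$ is linear in each $\lambda_{n+i}$ by \cite[Lemma 2.11]{Feliu-Sign}.

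The main obstacle, and the point requiring care, is the bookkeeping of which intermediate-indices correspond to $C$ versus $\cC\setminus C$, together with the interaction between Lemma~\ref{Lemma:Determinant_of_Gamma_Tilde} and the specific structure $\alpha_i=c_i$ of canonical extensions recorded in \eqref{Equation:Binomial_Basis_for_extended_networks_1_input}. Concretely, I must confirm that deleting the rows and columns indexed by the block for $\cC\setminus C$ in $\widetilde{\Gamma}_{\cC}$ returns exactly the matrix $\widetilde{\Gamma}_C$ built for the smaller canonical extension, so that the submatrix determinant in the Laplacian expansion of Lemma~\ref{Lemma:Determinant_of_Gamma_Tilde} genuinely equals $\det(\widetilde{\Gamma}_C)$ and not merely a matrix of the same shape. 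Since both matrices are assembled from the same $(M^T)_\lambda$, the same $Z$, and the same inputs $c_i$, this identification is routine once the indexing is fixed, but it is where the argument must be stated carefully. Finally, I would translate: $\det(\widetilde{\Gamma}_C)$ is zero or has terms of both signs if and only if (det) holds for $\widetilde{\cN}_C$, if and only if $\widetilde{\cN}_C$ is multistationary, if and only if $C\in\Mult$, which is exactly the claimed equivalence.
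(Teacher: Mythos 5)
Your proposal is correct and follows essentially the same route as the paper: identify $C\in\Mult$ with multistationarity of $\widetilde{\cN}_C$, use Proposition~\ref{Lemma:Surj_for_Canonical_networks} to see $\widetilde{\cN}_C$ is complete, apply Lemma~\ref{Lemma:Determinant_of_Gamma_Tilde} to the pair $(\widetilde{\cN}_C,\widetilde{\cN}_{\cC})$ to read off $\det(\widetilde{\Gamma}_C)$ as the coefficient of $\prod_{c_i\in\cC\setminus C}\lambda_{n+i}$ in $D_\cN$, and conclude via Theorem~\ref{Theorem:Multistationarity_and_Determinant_Condition} (your opening reference to Theorem~\ref{Theorem:Usage_of_Canonical_networks_for_multistationarity} is unnecessary and your argument rightly does not use it). The indexing care you flag is exactly what the paper handles by its ``without loss of generality'' relabeling of $C=\{c_1,\dots,c_t\}$, $\cC\setminus C=\{c_{t+1},\dots,c_m\}$.
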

	\begin{proof}
	We assume without loss of generality  that $C=\{c_1,\dots,c_t\}$ and $\blue{\cC\setminus C}=\{c_{t+1},\dots,c_m\}$. 
By Lemma \ref{Lemma:Determinant_of_Gamma_Tilde} applied to the complete binomial network $\widetilde{\cN}_C$ and the binomial extension $\widetilde{\cN}_\cC$, we have
\[D_\cN=\det(\widetilde{\Gamma}_C)\lambda_{[n+t+1,n+m]}+p'(\lambda), \]
  where $\lambda_{[n+t+1,n+m]}$ does not divide any term of $p'$. Hence  $\det(\widetilde{\Gamma}_C)$ is the coefficient of $\lambda_{[n+t+1,n+m]}$ in $D_\cN$. The statement now follows by Theorem \ref{Theorem:Multistationarity_and_Determinant_Condition}.
	\end{proof}

\blue{	
If $\cN$ is not multistationary, then all coefficients of $\det(\Gamma)$ have the same sign, which agrees with the sign of the coefficients of  $\lambda_{[n+1,n+t]}$ in $D_\cN$. This yields the following lemma.
}

	\begin{lem}\label{Lemma:Mult_from_D_fixed_sign}
		Let $\cN$ be a complete binomial network that is not multistationary \blue{and $D_\cN$ obtained as in \eqref{eq:DN}}. Then $C\in\Mult$ if and only if a term of $D_\cN$  is a multiple of $\prod_{c_i\in\blue{\cC\setminus C}}\lambda_{n+i}$ with sign different than the sign of multiples of $\lambda_{[n+1,n+t]}$ in $D_\cN$.
	\end{lem}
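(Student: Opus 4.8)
The plan is to reduce Lemma~\ref{Lemma:Mult_from_D_fixed_sign} to Lemma~\ref{Lemma:Mult_from_D} and then use the extra hypothesis that $\cN$ is not multistationary to collapse the ``zero or two terms of different non-zero sign'' dichotomy into a single-sign test. As in the proof of Lemma~\ref{Lemma:Mult_from_D}, I would assume without loss of generality that $C=\{c_1,\dots,c_t\}$ and $\cC\setminus C=\{c_{t+1},\dots,c_m\}$, so that $\prod_{c_i\in\cC\setminus C}\lambda_{n+i}=\lambda_{[n+t+1,n+m]}$ and, by that lemma, $\det(\widetilde{\Gamma}_C)$ is exactly the coefficient of $\lambda_{[n+t+1,n+m]}$ in $D_\cN$; equivalently, the terms of $D_\cN$ divisible by $\lambda_{[n+t+1,n+m]}$ are precisely $\lambda_{[n+t+1,n+m]}\det(\widetilde{\Gamma}_C)$.

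First I would record what non-multistationarity provides. Since $\cN$ is a complete binomial network that is not multistationary, Theorem~\ref{Theorem:Multistationarity_and_Determinant_Condition} forces (det) to fail for $\cN$; hence $\det(\Gamma)\neq 0$ and all of its nonzero coefficients share a single sign $\epsilon\in\{+1,-1\}$. Next I would apply Lemma~\ref{Lemma:Determinant_of_Gamma_Tilde} a second time, now to the core network $\cN$ and its canonical extension $\widetilde{\cN}_C$ by the $t$ intermediates with inputs in $C$, obtaining
\[\det(\widetilde{\Gamma}_C)=\lambda_{[n+1,n+t]}\det(\Gamma)+p''(\lambda),\]
where no term of $p''$ is divisible by $\lambda_{[n+1,n+t]}$. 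Thus the terms of $\det(\widetilde{\Gamma}_C)$ divisible by $\lambda_{[n+1,n+t]}$ are exactly $\lambda_{[n+1,n+t]}\det(\Gamma)$, a nonzero block all of sign $\epsilon$. Multiplying through by the monomial $\lambda_{[n+t+1,n+m]}$, these lift to the terms of $D_\cN$ divisible by $\lambda_{[n+1,n+m]}$, namely $\lambda_{[n+1,n+m]}\det(\Gamma)$; these all carry sign $\epsilon$, and this is the well-defined reference sign of the $\lambda_{[n+1,n+t]}$-multiples referred to in the statement.

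The crux is then a short sign argument. Because $\lambda_{[n+1,n+t]}\det(\Gamma)$ is a nonzero subsum of $\det(\widetilde{\Gamma}_C)$, we always have $\det(\widetilde{\Gamma}_C)\neq 0$ and it always contains terms of sign $\epsilon$; so the ``zero'' alternative in Lemma~\ref{Lemma:Mult_from_D} cannot occur here, and the ``two terms of different non-zero sign'' alternative is equivalent to the mere existence of a term of sign $-\epsilon$. Hence, by Lemma~\ref{Lemma:Mult_from_D}, $C\in\Mult$ iff $\det(\widetilde{\Gamma}_C)$ has a term of sign $-\epsilon$, and since multiplication by the monomial $\lambda_{[n+t+1,n+m]}$ preserves signs, this is precisely the existence of a term of $D_\cN$ that is a multiple of $\lambda_{[n+t+1,n+m]}=\prod_{c_i\in\cC\setminus C}\lambda_{n+i}$ of sign $-\epsilon$, i.e.\ opposite to the sign of the $\lambda_{[n+1,n+t]}$-multiples. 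This equivalence-preserving rewriting of the condition in Lemma~\ref{Lemma:Mult_from_D} gives the claim. The main obstacle I anticipate is purely bookkeeping: keeping the two nested applications of Lemma~\ref{Lemma:Determinant_of_Gamma_Tilde} straight and making precise which collection of terms of $D_\cN$ realizes the reference sign $\epsilon$, so that ``the sign of multiples of $\lambda_{[n+1,n+t]}$'' is unambiguous and the ``zero'' case is genuinely excluded.
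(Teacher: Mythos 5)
Your proposal is correct and follows essentially the same route as the paper's proof: decompose $D_\cN$ via two nested applications of Lemma~\ref{Lemma:Determinant_of_Gamma_Tilde} (through $\widetilde{\Gamma}_C$ down to $\Gamma$), use non-multistationarity together with Theorem~\ref{Theorem:Multistationarity_and_Determinant_Condition} to fix the reference sign carried by the $\lambda_{[n+1,n+m]}$-multiples, and then conclude with Lemma~\ref{Lemma:Mult_from_D}. Your explicit observation that the ``zero'' alternative of Lemma~\ref{Lemma:Mult_from_D} is ruled out because $\lambda_{[n+1,n+t]}\det(\Gamma)$ is a nonzero same-sign subsum of $\det(\widetilde{\Gamma}_C)$ is a point the paper leaves implicit, and your reading of the reference sign as that of the full $\lambda_{[n+1,n+m]}$-multiples is the correct interpretation of the statement.
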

	\begin{proof}
Assume $C=\{c_1,\dots,c_t\}$ and $\blue{\cC\setminus C}=\{c_{t+1},\dots,c_m\}$. We consider $\cN$, $\widetilde{\cN}_C$ and $\widetilde{\cN}_{\cC}$, and note that each network is a binomial extension of the previous. We apply  Lemma \ref{Lemma:Determinant_of_Gamma_Tilde} twice and obtain
		\begin{align*}
		D_\cN =& \det(\widetilde{\Gamma}_C)\lambda_{[n+t+1,n+m]}+p'= \big(\det(\Gamma)\lambda_{[n+1,n+t]}+p''\big)\lambda_{[n+t+1,n+m]}+p'.
		\end{align*}
		Since $\cN$ is not multistationary, $\det(\Gamma)$  is non-zero  and all terms have the same sign by Theorem \ref{Theorem:Multistationarity_and_Determinant_Condition}. Hence the coefficient of $\lambda_{[n+1,n+m]}$ is a polynomial where all coefficients have the same sign, $\tau$.
		Now by Lemma \ref{Lemma:Mult_from_D}, $C\in\Mult$ if and only if  the coefficient of $\lambda_{[n+t+1,n+m]}$ has  terms with different sign. Therefore $C\in\Mult$ if and only if there is a term in $p''$ with sign $-\tau$. This proves the lemma.
	\end{proof}
	
We are now ready to  introduce an algorithm to determine the multistationarity structure of a complete binomial network. If the  network is multistationary  (\blue{decidable by computing $\det(\Gamma)$}),  then all canonical extensions are multistationary, \blue{thus $\Mult=\mathcal{P}(\cC)$,} and hence $\Circuits=\{\emptyset\}$.  \blue{If the largest canonical extension is not multistationary, then $\Mult=\Circuits=\emptyset$.}
	
	\begin{algorithm}[\blue{Multistationarity structure for  complete binomial networks}]\label{Algorith:Circuits_Main}
		\hfill
		\begin{itemize}
		\setlength\itemsep{0em}
			\item [] \emph{Input:} A complete binomial network $\cN$.
			\item [] \emph{Output:} $\Circuits$ for $\cN$.
			\item [] \emph{Procedure:}
			\begin{itemize}
             \item       \blue{      Compute $\det(\Gamma)$. }
				\item        \blue{If $\det(\Gamma)$ is either zero or has coefficients with different sign, then return  $\Circuits=\{\emptyset\}$.}
                         \item \blue{ Otherwise}
					\begin{itemize}
				\item [] \emph{Initialize:} $\Circuits=\emptyset$.
				\item [] \emph{1.} Compute $D_\cN$ and let $\tau$ be the sign of any term divisible by $\lambda_{[n+1,n+m]}$.
				\item [] \emph{2.} For every term $T=\alpha \lambda^u$ of $D_\cN$ with sign $-\tau$:
				\begin{itemize}
						 \item [] \emph{2a.} Define $C=\{c_i\mid u_{n+i}=0,\ i=1,\dots,m\}$. 
						 \item [] \emph{2b.} Add $C$ to $\Circuits$ if no subset of $C$ is already in $\Circuits$,  and subsequently remove the supersets of $C$ from $\Circuits$, if any.
				\end{itemize}
			\end{itemize}
	\end{itemize}
		\end{itemize}
	\end{algorithm}

Step 2 can be analyzed directly on the exponents $u$ of the terms with sign $-\tau$: restrict  $u$ to the components $n+1,\dots,n+m$ and choose the corresponding sets $C$ yielding to vectors with maximal support.

	\begin{example}\label{Example:Applying_Algorithm_on_2_site_phosphorylation}
		Consider the following complete binomial networks:
				\begin{align*}
\cN_1\colon & \begin{array}{l}
		X_0+E\ce{->}X_1+E\ce{->}X_2+E\\
		X_2+F\ce{->}X_1+F\ce{->}X_0+F,
		\end{array}
\\ 
\cN_2  \colon & \begin{array}{ll}
		S_0+E\ce{->}S_1+E & \qquad S_1+F\ce{->}S_0+F\\
		P_0+E\ce{->}P_1+E & \qquad P_1+F\ce{->}P_0+F,
		\end{array}
\\ 
\cN_3  \colon & \begin{array}{ll}
		S_0+E\ce{->}S_1+E & \qquad  S_1+F\ce{->}S_0+F\\
		P_0+S_1\ce{->}P_1+S_1 & \qquad P_1+F\ce{->}P_0+F.
		\end{array}
\end{align*}
We  order the  species of $\cN_1$ as  $X_0,X_1,X_2,E,F$ and for $\cN_2,\cN_3$ we consider $P_0,P_1,S_0,S_1,E,F$. Complexes are ordered as they appear in the reaction network from left to right and from up to down.
For suitable choices of admissible binomial bases, 
we obtain the following exponent matrices $M_1,M_2,M_3$ for the three networks respectively, c.f.~\eqref{Equation Gamma and M}:
{\small \[ \arraycolsep=3pt\def\arraystretch{1} M_1^T = 	\left[\begin{array}{rrccc}
		-1 & 1 & 0 & -1 & 1    \\
		0 & -1 & 1 & -1 & 1 
			\end{array}\right],  \quad M_2^T =\left[\begin{array}{rcrccc}
			0 & 0 & -1 & 1 & -1 & 1  \\
		-1 &1 & 0 & 0 & -1 & 1
				\end{array}\right], \quad  M_3^T =\left[\begin{array}{rcrrrc}
			0 & 0 & -1 & 1 & -1 & 1  \\
		-1 &1 & 0 & -1 & 0 & 1
				\end{array}\right], \]}and we choose  the following matrices of conservation laws:
{\small \[ Z_1 = \left[\begin{array}{ccccc}
		1 & 1 & 1 & 0 & 0   \\
		0 & 0 & 0 & 1 & 0   \\
		0 & 0 & 0 & 0 & 1 
		\end{array}\right],\qquad Z_2 =Z_3= \left[\begin{array}{cccccc}
		1 & 1 & 0 & 0 & 0 & 0 \\
		0 & 0 & 1 & 1 & 0 & 0 \\
		0 & 0 & 0 & 0 & 1 & 0 \\
		0 & 0 & 0 & 0 & 0 & 1 
		\end{array}\right].
		\]}Using this data, we construct  the  matrices $\widetilde{\Gamma}_1,\widetilde{\Gamma}_2,\widetilde{\Gamma}_3$ defined as in \eqref{eq:Gamma_tilde} for the largest canonical extensions associated with $\cN_1,\cN_2,\cN_3$ and find their determinants $D_{\cN_1} ,D_{\cN_2}$ and $D_{\cN_3}$.
		We have $\tau=1$ for all three cases. The sets of monomials with negative coefficients in the corresponding   determinants $D_{\cN_1} ,D_{\cN_2}$ and $D_{\cN_3}$ are respectively
		\begin{align*}
		A_1 =& \big\{
		\lambda_{\{1,2,3,4,7,8,10,11\}},\; \lambda_{\{1,2,3,5,7,8,10,11\}}, \; \lambda_{\{1,2,4,5,7,8,10,11\}}, \; \lambda_{\{1,2,4,7,8,9,10,11\}}, \; \lambda_{\{1,3,4,5,7,8,10,11\}},  \\
		&
		\lambda_{\{2,3,4,5,7,8,10,11\}}, \; \lambda_{\{2,3,5,6,7,8,10,11\}} \big\}, \\ 
		A_2 =& \big\{ \lambda_{\{1,4,5,6,7,8,10,12,13,14\}}, \;
		\lambda_{\{2,3,5,6,8,9,10,11,12,14\}} \big\},\\
		A_3 =& \big\{
		\lambda_{\{1,3,4,5,6,7,9,12,13,14\}}, \;
		\lambda_{\{1,3,4,5,6,7,10,12,13,14\}}, \;
		\lambda_{\{1,3,4,5,6,8,9,12,13,14\}}, \;
		\lambda_{\{1,3,4,5,6,7,9,11,12,14\}}, \\
		&
		\lambda_{\{1,3,4,5,6,7,10,11,12,14\}}, \;
		\lambda_{\{1,3,4,5,6,8,9,11,12,14\}}, \;
		\lambda_{\{1,3,4,5,7,9,10,11,12,14\}}, \;
		\lambda_{\{1,3,4,5,8,9,10,11,12,14\}}, \\
		&
		\lambda_{\{1,3,4,6,7,8,9,12,13,14\}}, \;
		\lambda_{\{1,3,4,6,7,8,10,12,13,14\}}, \;
		\lambda_{\{1,3,4,6,7,8,9,11,12,14\}}, \;
		\lambda_{\{1,3,4,6,7,8,10,11,12,14\}}, \\
		&
		\lambda_{\{1,3,4,6,8,9,10,11,12,14\}}, \;
		\lambda_{\{2,4,6,7,8,9,10,11,12,14\}}
		\big\}.
		\end{align*} 
According to the algorithm, the monomial $ \lambda_{\{1,2,3,4,7,8,10,11\}}$ in $A_1$ gives rise to the set $C_1=\{c_6,c_9\}$, while the monomial $ \lambda_{\{1,2,4,7,8,9,10,11\}}$ yields $C_2=\{c_6\}$. Thus $C_2$ belongs to $\Circuits_1$ while $C_1$ does not. 
Proceeding in this way for all monomials, we obtain
		\begin{align*}
		\Circuits_1 =& \{\{c_6\},\{c_9\}\} = \{\{X_0+E\},\{X_2+F\}\},\\
		\Circuits_2 =& \{\{c_1,c_7\},\{c_3,c_5\}\}	= \{\{S_0+E,P_1+F\},\{S_1+F,P_0+E\}\},\\
		\Circuits_3 =& \{\{c_7\},\{c_3,c_5\},\{c_4,c_5\}\}= \{\{P_1+F\},\{S_1+F,P_0+S_1\},\{S_0+F,P_0+S_1\}\}.
		\end{align*}
We conclude that  motifs (g), (i) and (k) in \cite{Feliu-EnzymeSharing} are multistationary, since they are 
extensions of  $\cN_1,\cN_2,\cN_3$ respectively, which satisfy the generalized realization condition and the set of inputs 
of their intermediates belong to the respective multistationarity structures.

As illustrated by these three examples, the elements of the set of circuits might not have the same cardinality.
	\end{example}

	\begin{remark}\label{rk:approaches}
Algorithm \ref{Algorith:Circuits_Main} provides a direct way to detect the sets of complexes that contribute to multistationarity. The method is appealing because, for small networks, the multistationarity structure can be found by simple visual inspection of one multivariate polynomial. 
The brute force alternative strategy for finding the multistationarity structure consists in 
searching for the circuits by computing $\det(\widetilde{\Gamma}_C)$ for several canonical extensions. 
One starts from one of the smallest subsets $C$ of $\cC$ and computes $\det(\widetilde{\Gamma}_C)$. If this polynomial has terms with different sign or is zero, then  we add $C$ to $\Circuits$, and remove $C$ and all its supersets from $\mathcal{P}(\cC)$ before proceeding in the same way with the next smallest set. Alternatively, one can start the search with one of the largest subsets $C$ of $\cC$ and compute $\det(\widetilde{\Gamma}_C)$. If the determinant does not have terms with different sign, then we  remove all subsets of $C$ from $\mathcal{P}(\cC)$. If it has terms of both signs or is zero, then we check the subsets of $C$ with one less element. If none of them is multistationary, then we add $C$ to $\Circuits$ and remove all its subsets from the search.

Going from small to large sets has the advantage of involving the computation of smaller determinants. Our algorithm requires the computation of only one  determinant, but it can be large. So, for large networks, it might be advantageous to adopt the search approach starting with small sets described here.
\end{remark}

	\subsection{$\e$-site phosphorylation network}\label{sec:nsite}
	In this section we find the multistationarity structure of the $\e$-site distributive sequential phosphorylation network given as follows (see e.g.~\cite{Dickenstein-Toric,Wang:2008dc}): 
\begin{equation}\label{eq:nsite}
\begin{array}{rcl}
	X_0+E\ce{<=>}Y_1\ce{->}X_1+E\ce{<=>} & \dots &  \ce{->}X_{\e-1}+E\ce{<=>}Y_\e\ce{->}X_\e+E\\
	X_\e+F\ce{<=>}Y_{\e+1}\ce{->}X_{\e-1}+F\ce{<=>} & \dots & \ce{->}X_1+F\ce{<=>}Y_{2\e}\ce{->}X_0+F.
	\end{array}
\end{equation}
By removing the intermediates $Y_1,\dots,Y_{2\e}$, the core network associated with the $\e$-site phosphorylation network is
	\begin{equation}\label{eq:core_nsite} 
	\cN\colon\begin{array}{l}
	X_0+E\ce{->[k_1]}X_1+E\ce{->[k_2]}\dots\ce{->[k_{\e-1}]} X_{\e-1}+E\ce{->[k_{\e}]} X_{\e}+E\\
	X_\e+F\ce{->[k_{\e+1}]} X_{\e-1}+F\ce{->[k_{\e+2}]} \dots\ce{->[k_{2\e-1}]} X_1+F\ce{->[k_{2\e}]} X_0+F.
	\end{array}
	\end{equation}
	Since $(1,\dots,1)$ is in the kernel of the stoichiometric matrix of $\cN$, (surj) holds. Further, the rank of $\cN$ is $\e$ and an admissible binomial basis of the steady state ideal is
	\[B:=\big\{-k_1x_0e+k_{2\e}x_1f,\dots,-k_{\e}x_{\e-1}e+k_{\e+1}x_\e f \big\}.\]
\blue{This basis can be easily obtained by performing linear combinations of the steady state equations. It has been used in several works such as \cite{Dickenstein-Toric,Dickenstein-Bihan-Magali}.}
We conclude that (rank) also holds and $\cN$ is a complete binomial network. 
By Proposition \ref{Lemma:Realization_more_classes} (ii),  the generalized realization condition holds  for the $\e$-site distributive sequential phosphorylation networks given in \eqref{eq:nsite}.

We order the set of species as
$X_0,\,X_1,\,\dots,\,X_\e,\,E, F$, 
and denote the complexes of the core network as 
	\[
	  c_1=X_0+E,\quad \dots \quad c_{\e+1}=X_\e+E,\quad
	  c_{\e+2}=X_\e+F, \quad \dots \quad c_{2\e+2}=X_0+F.
	\]
The largest canonical network consists of the reactions of $\cN$ together with  the reactions
$c_i \ce{<=>} Y_{i}.$
The matrix $\blue{(M^T)_\lambda}\in\R^{\e\times(\e+3)}$  associated with $B$ and a choice of  $Z\in\R^{3\times(\e+3)}$ are
	\begin{align*}
	\blue{(M^T)_\lambda} =& \left[\begin{array}{ccccccc}
	-\lambda_1 & \lambda_2 & & & 0 & -\lambda_{\e+2} & \lambda_{\e+3} \\
	& -\lambda_2 & \lambda_3 & & & -\lambda_{\e+2} & \lambda_{\e+3}\\
	& & \ddots & \ddots & & \vdots & \vdots\\
	0 & & & -\lambda_\e & \lambda_{\e+1} & -\lambda_{\e+2} & \lambda_{\e+3}
	\end{array}\right], \quad 
	Z =& \left[\begin{array}{ccccc}
	1  & \cdots & 1 & 0 & 0\\
	0  & \dots &  0 & 1 & 0\\
	 0 & \dots &  0 & 0 & 1
	\end{array}\right].
	\end{align*}

\begin{proposition}\label{prop:nsite_canonical}
For the $\e$-site phosphorylation network with $\e\geq 2$, we have 
\[\Circuits=\{\{c_i\}\mid i\neq \e,\e+1,2\e+1,2\e+2\} = \big\{X_0+E,\dots,X_{\e-2}+E, \; X_\e+F,\dots,X_2+F \big\} .\]
If $\e=1$, then $\Circuits=\emptyset$, since the largest canonical extension  is not multistationary.
\end{proposition}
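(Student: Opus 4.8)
The plan is to decide multistationarity of each canonical extension through the sign condition rather than by expanding the large determinant $D_\cN$ directly. By Proposition~\ref{Lemma:Surj_for_Canonical_networks} every canonical extension $\widetilde{\cN}_C$ is again a complete binomial network, so (surj) holds and Theorem~\ref{Theorem Sign Condition}(i) says that $\widetilde{\cN}_C$ is multistationary if and only if (sign) holds for it. I would first record two structural facts. The kernel of $M^T$ consists of the vectors $u=(u_0,\dots,u_\e,u_E,u_F)$ whose substrate coordinates form an arithmetic progression $u_i=u_0+i\delta$ with common difference $\delta=u_E-u_F$ (this is exactly what the rows of $M$ encode), and each added intermediate with input $c$ contributes the coordinate $u_Y=c\cdot u$. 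The kernel of $Z$ consists of the vectors with $v_E=v_F=0$ and $\sum_i v_i=0$. Hence (sign) for the bare core $\cN$ forces $u_E=u_F=0$, so $\delta=0$ and $u$ is constant on the substrates, which is incompatible with $\sum_i v_i=0$ unless $u=v=0$; thus $\cN$ is not multistationary. Consequently $\emptyset\notin\Mult$, Lemma~\ref{Lemma:Mult_from_D_fixed_sign} applies with a well-defined sign $\tau$, and its divisibility description of $\Mult$ shows at once that $\Mult$ is upward closed under inclusion.

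Given upward closure, it suffices to (i) determine which singletons $\{c_i\}$ lie in $\Mult$ and (ii) show that the set $C^\star=\{c_\e,c_{\e+1},c_{2\e+1},c_{2\e+2}\}$ of the four remaining complexes is \emph{not} in $\Mult$; minimality of the singletons (the only proper subset of a singleton is $\emptyset\notin\Mult$) together with (ii) and upward closure then force $\Circuits$ to be exactly the singletons found in (i). For step (i) I would run the sign condition for the one-intermediate extension with input $c=X_p+E$. There $\ker Z$ forces $u_F=0$, so $\delta=u_E$, $u_i=u_0+i\delta$ and $u_Y=u_p+u_E=u_0+(p+1)\delta$; after normalising $\delta>0$, matching signs with a vector of $\ker\widetilde Z$ reduces to the two scalar requirements $u_0+(p+1)\delta<0$ and $u_0+\e\delta>0$, jointly solvable exactly when $p\le \e-2$. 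The symmetric computation for $c=X_p+F$ gives solvability exactly when $p\ge 2$. This produces precisely the claimed list $\{X_0+E,\dots,X_{\e-2}+E\}\cup\{X_\e+F,\dots,X_2+F\}$, i.e.\ $\{c_i\mid i\neq \e,\e+1,2\e+1,2\e+2\}$.

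For step (ii) I would analyse the sign condition for $\widetilde{\cN}_{C^\star}$, the canonical extension with four intermediates whose inputs are $c_\e=X_{\e-1}+E$, $c_{\e+1}=X_\e+E$, $c_{2\e+1}=X_1+F$ and $c_{2\e+2}=X_0+F$. Writing $w=u_0+u_F$, the four intermediate coordinates equal $w$, $w+\delta$, $w+\e\delta$, $w+(\e+1)\delta$, sitting at positions $0,1,\e,\e+1$ of a single progression. The conservation rows group the two $E$-intermediates with $v_E$ and the two $F$-intermediates with $v_F$, and the total-substrate row becomes $\sum_i v_i=v_E+v_F$. The key simplification is that, because each block is a sum equated to a single shared value, realisability of a given sign vector by $\ker\widetilde Z$ \emph{decouples} into three independent conditions: one on the sign pair of the two $E$-intermediates against $-\sigma(v_E)$, one on the two $F$-intermediates against $-\sigma(v_F)$, and one matching the attainable sign of $\sum_i v_i$ with that of $v_E+v_F$. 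Parametrising the sign pattern of $\ker M^T$ by the two real thresholds where the substrate progression and the enzyme values $u_E,u_F$ change sign (the intermediate progression then crosses zero at the sum of these thresholds, since its offset is $w$), I would check, range by range in the intermediate threshold, that these three conditions are never simultaneously met; the degenerate case $\delta=0$ is treated separately. This casework — keeping the shared variables $v_E,v_F$ consistent across the three decoupled conditions — is the part I expect to be the main obstacle, since it is where the interplay between the two progressions and the enzyme conservation laws must be controlled.

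Finally, upward closure promotes (ii) to the statement that no nonempty subset of $C^\star$ lies in $\Mult$, so the minimal elements of $\Mult$ are exactly the singletons from (i), giving $\Circuits$ as claimed for $\e\ge 2$. For $\e=1$ all four complexes are the ``bad'' ones, i.e.\ $C^\star=\cC$, so the same sign-condition reduction applied to the largest canonical extension (equivalently, the classical fact that one-site phosphorylation is never multistationary) yields $\Mult=\emptyset$ and hence $\Circuits=\emptyset$.
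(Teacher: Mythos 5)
Your route is genuinely different from the paper's. The paper decides multistationarity of the relevant canonical extensions with the determinant criterion (Theorem~\ref{Theorem:Multistationarity_and_Determinant_Condition}): for singletons it Laplace-expands $\det\big(\widetilde{\Gamma}_{\{c_i\}}\big)$ and exhibits two coefficients of opposite sign, and for $C^\star=\{c_\e,c_{\e+1},c_{2\e+1},c_{2\e+2}\}$ it row-reduces $\widetilde{\Gamma}_{C^\star}$ to block-triangular form and proves a $7\times 7$ determinant is strictly positive (using $\e z_1\geq z_2$). You instead work with (sign) directly, which is legitimate: canonical extensions of the core are complete binomial networks by Proposition~\ref{Lemma:Surj_for_Canonical_networks}, so Theorem~\ref{Theorem Sign Condition}(i) characterizes their multistationarity. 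The parts you carry out are correct: $\ker M^T$ consists of the progressions $u_i=u_0+i\delta$ with $\delta=u_E-u_F$, each added intermediate contributes $u_Y=c\cdot u$; the core is not multistationary; $\Mult$ is upward closed (either via Lemma~\ref{Lemma:Mult_from_D_fixed_sign} as you argue, or directly from Theorem~\ref{Theorem Multistationartity and Realization}(i) plus Proposition~\ref{Proposition Canonical Networks and Generalized Realization Condition}); the reduction to (i) singletons plus (ii) exclusion of $C^\star$ matches the paper's reduction; and the singleton feasibility criterion $u_0+(p+1)\delta<0<u_0+\e\delta$ with $\delta>0$, solvable iff $p\leq\e-2$, is exactly right, as is the identity that for $C^\star$ the four intermediate coordinates are $w,\,w+\delta,\,w+\e\delta,\,w+(\e+1)\delta$ with $w=u_0+u_F$.

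The genuine gap is that step (ii), the exclusion $C^\star\notin\Mult$, is never actually proved: you set up the three feasibility conditions and then say you ``would check, range by range'' that they cannot hold simultaneously, explicitly flagging this as the main obstacle. That casework is the entire mathematical content of step (ii) --- it is what the paper spends most of its proof on --- and the $\e=1$ statement also rests on it (for $\e=1$ one has $C^\star=\cC$). As written, the proposition is therefore not established. The good news is that your approach does close, and quite cleanly, so the gap is one of execution rather than of method. By the symmetry $E\leftrightarrow F$, $X_j\mapsto X_{\e-j}$ (which fixes $C^\star$) one may assume $\delta>0$, the case $\delta=0$ dying immediately; then each sign case for $(u_E,u_F)$ collapses in one line. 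If $u_E>0>u_F$: the $E$-block forces $w+\e\delta<0$ while the $F$-block forces $w+\delta>0$, impossible since $w+\e\delta\geq w+\delta$. If $u_E,u_F>0$: the $E$-block forces $w+\e\delta<0$, while the substrate-sum condition forces $u_0+\e\delta>0$, contradicting $w+\e\delta=(u_0+\e\delta)+u_F>0$. If $u_E,u_F<0$: the blocks force $u_0+u_E=w+\delta>0$, while the sum condition forces $u_0<0$, contradicting $u_E<0$. The remaining cases $u_E>0=u_F$ and $u_E=0>u_F$ are equally short. Note also that your worry about the shared variables $v_E,v_F$ is unfounded: all the achievable-value sets involved are scale-invariant cones, so feasibility of each of the three conditions depends only on the sign pattern of $u$, and the decoupling is exact. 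Had you run these cases, your proof would be complete, and arguably more transparent than the paper's determinant computation; as submitted, the decisive verification is missing.
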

\begin{proof}
The case $\e=1$ follows by computing the determinant of the largest canonical extension and checking that it is non-zero and that all coefficients have the same sign. 

Hence assume that $\e\geq 2$.
It is enough to first show that $\{c_i\}\in \Circuits $ if $i\neq \e,\e+1,2\e+1,2\e+2$ and then that  $\{c_{\e},c_{\e+1},c_{2\e+1},c_{2\e+2}\}\not\in\Mult$. 

So let $i\in[\e+1]$ and define
		{\small \[\Omega(i)=\left[\begin{array}{rrrrrrrrr;{2pt/2pt}r}
		-1 & 1 & & & & & 0 & -1 & 1 & 0 \\
		& \ddots & \ddots & & & & & \vdots & \vdots & \vdots \\
		& & -1 & 1 & & & & -1 & 1 & 0 \\
		& & & -1 & 1 & & & -1 & 1 & 0 \\
		& & & & \ddots & \ddots & & \vdots & \vdots & \vdots \\
		0 & & & & & -1 & 1 & -1 & 1 & 0 \\
		\hdashline[2pt/2pt]
		0 & \cdots & 0 & -1 & 0 & \cdots & 0 & -1 & 0 & 1
		\end{array}\right]\in\R^{(\e+1)\times(\e+4)},\]}where the $-1$ in the last row  is in the $i$-th column. Then  we have that 
\[ \widetilde{\Gamma}_{\{c_i\}}=\left[\begin{array}{c}
	\Omega(i)\diag(\lambda_1,\dots,\lambda_{\e+4}) \\
	\hdashline[2pt/2pt]
\begin{array}{ccc;{2pt/2pt}c}
 & & & 1 \\
 \qquad & Z & \qquad & 0 \\
 & & & 0
\end{array}
	\end{array}\right]\in\R^{(\e+4)\times(\e+4)}.\] 
	For $J\subseteq [\e+4]$ of cardinality $3$, we denote by $\Omega(i)_J$ the $(\e+1)\times (\e+1)$ submatrix of $\Omega(i)$ obtained by deleting the columns with index in $J$. 
We expand the determinant of $\widetilde{\Gamma}_{\{c_i\}}$ along the last three rows and obtain
	\begin{align*}
			\det\big(\widetilde{\Gamma}_{\{c_i\}}\big) =& \sum_{j=1}^{\e+1}(-1)^{5\e+14+j}\begin{vmatrix}
			1 & 0 & 0\\
			0 & 1 & 0\\
			0 & 0 & 1
			\end{vmatrix}\det \hspace{-0.07cm}\big(\Omega(i)_{\{j,\e+2,\e+3\}}\big)\lambda_{[\e+4]\setminus\{j,\e+2,\e+3\}}+\\
			& \sum_{j=1}^{\e+1}(-1)^{5\e+16+j}\begin{vmatrix}
			1 & 0 & 1\\
			0 & 0 & 1\\
			0 & 1 & 0
			\end{vmatrix}\det \hspace{-0.07cm}\big(\Omega(i)_{\{j,\e+3,\e+4\}}\big)\lambda_{[\e+4]\setminus\{j,\e+3,\e+4\}}+\\
			& (-1)^{6\e+18}\begin{vmatrix}
			0 & 0 & 1\\
			1 & 0 & 1\\
			0 & 1 & 0
			\end{vmatrix}\det \hspace{-0.07cm}\big(\Omega(i)_{\{\e+2,\e+3,\e+4\}}\big)\lambda_{[\e+4]\setminus\{\e+2,\e+3,\e+4\}}\\
			=& \sum_{j=1}^{\e+1}(-1)^{\e+j}\det \hspace{-0.07cm}\big(\Omega(i)_{\{j,\e+2,\e+3\}}\big)\lambda_{[\e+4]\setminus\{j,\e+2,\e+3\}}+\\
			& \sum_{j=1}^{\e+1}(-1)^{\e+j+1}\det \hspace{-0.07cm}\big(\Omega(i)_{\{j,\e+3,\e+4\}}\big)\lambda_{[\e+4]\setminus\{j,\e+3,\e+4\}}+\\
			& \det \hspace{-0.07cm}\big(\Omega(i)_{\{\e+2,\e+3,\e+4\}}\big)\lambda_{[\e+4]\setminus\{\e+2,\e+3,\e+4\}}.
			\end{align*}
We see from this expansion that the coefficient of $\lambda_{[\e+4]\setminus\{1,\e+2,\e+3\}}$ 
is $(-1)^{\e+1}\det\hspace{-0.07cm}\big(\Omega(i)_{\{1,\e+2,\e+3\}}\big)$. We have that
{\small 
\begin{equation}\label{eq:subomega}
\Omega(i)_{\{1,\e+2,\e+3\}}=\left[\begin{array}{rrrrr;{2pt/2pt}r}
		1 & & & &  0 & 0  \\
		-1 & \ddots  & & & & \\
		& \ddots & \ddots  & & &  \vdots\\
		& &  \ddots & \ddots & &  \\
		0 & & &  -1 & 1 & 0 \\ \hdashline[2pt/2pt]
		0 & \cdots  & -1  & \cdots & 0 & 1
		\end{array}\right]\in\R^{(\e+1)\times(\e+1)},
		\end{equation}}where the $-1$ in the last row is in position $i-1$ if $i>1$ and is not there  if $i=1$. 
		Clearly, $(-1)^{\e+1}\det\hspace{-0.07cm}\big(\Omega(i)_{\{1,\e+2,\e+3\}}\big)=(-1)^{\e+1}$.
		
Consider now the coefficient of  $\lambda_{[\e+4]\setminus\{\e+1,\e+3,\e+4\}}$, which is $(-1)^{\e+\e+1+1}\det\hspace{-0.07cm}\big(\Omega(i)_{\{\e+1,\e+3,\e+4\}}\big)$. We have that 
{\small \[\Omega(i)_{\{\e+1,\e+3,\e+4\}}=\left[\begin{array}{rrrrr;{2pt/2pt}r}
		-1 & 1 & & & 0 & -1   \\
		& -1 & \ddots & & \\
		& & \ddots & \ddots & & \vdots\\
		& & & \ddots & 1 &  \\
		0 & & & & -1 & -1 \\ \hdashline[2pt/2pt]
		0 & \cdots  & -1 & \cdots & 0 & 1
		\end{array}\right]\in\R^{(\e+1)\times(\e+1)}, \]}where the $-1$ in the  last row is in position $i$ if $i\leq \e$, and there is no $-1$ if $i=\e+1$.
Replacing the last row of  $\Omega(i)_{\{\e+1,\e+3,\e+4\}}$ with minus the sum of the rows from $i$ to $\e$,
we obtain the matrix
		{\small	\[\left[\begin{array}{crcr;{2pt/2pt}c}
			-1 & 1 & & 0 & -1 \\
			& -1 & \ddots & & \vdots \\
			& & \ddots & 1 & -1 \\
			0 & & & -1 & -1 \\
			\hdashline[2pt/2pt]
			& & 0 & & \e-i
			\end{array}\right]\in\R^{(\e+1)\times(\e+1)}.\]}It follows that  the coefficient of  $\lambda_{[\e+4]\setminus\{\e+1,\e+3,\e+4\}}$ is
$(-1)^{\e}(\e-i)$. 

This shows that if $i<\e$, then the coefficients of  $\lambda_{[\e+4]\setminus\{\e+1,\e+3,\e+4\}}$ and 
		$\lambda_{[\e+4]\setminus\{1,\e+2,\e+3\}}$  have opposite non-zero signs, and hence $\{c_i\}$ is a circuit.
		For $\e+1<i\leq 2\e$ the claim  follows by the symmetry of the network after interchanging $E$ and $F$ and sending $X_0,\dots,X_\e$ to $X_\e,\dots,X_0$.

\medskip
All that remains is to show that  $C=\{c_{\e},c_{\e+1},c_{2\e+1},c_{2\e+2}\}\not\in\Mult$. The matrix $\widetilde{\Gamma}_{C}\in\R^{(\e+7)\times(\e+7)}$ is:
{\small  \[\left[\begin{array}{cccccccc;{2pt/2pt}cccc}
	-\lambda_1 & \lambda_2 &  & & & 0 & -\lambda_{\e+2} & \lambda_{\e+3} & & & & \\
	& -\lambda_2 & \lambda_3  & & & &  &  & & & & \\
 	& & \ddots &  \ddots&  & \vdots & \vdots &\vdots & &   \multicolumn{2}{c}{ 0}  & \\
 	& & &  \ddots &   \lambda_{\e}  & 0  & -\lambda_{\e+2}  & \lambda_{\e+3} & & & & \\
	0 & & &  & -\lambda_{\e} & \lambda_{\e+1} & -\lambda_{\e+2} & \lambda_{\e+3} & & & & \\
	\hdashline[2pt/2pt]
	0 &0 & \cdots & \cdots  &  -\lambda_\e & 0 & -\lambda_{\e+2} & 0 & \lambda_a & & & 0 \\
	0 & 0  & \cdots & \cdots  &  0 & -\lambda_{\e+1} & -\lambda_{\e+2} & 0 & & \lambda_b & & \\
	0 & -\lambda_2  & \cdots & \cdots  & 0   & 0 & 0 & -\lambda_{\e+3}  & & &  \lambda_c & \\
	-\lambda_1 & 0 & \cdots  & \cdots  & 0  & 0 & 0 & -\lambda_{\e+3} & 0 & & & \lambda_d \\
	\hdashline[2pt/2pt]
	1 & 1 & \cdots   & \cdots &  1 & 1 & 0 & 0 & 1 & 1 & 1 & 1 \\
      0 	& 0  &\cdots  & \cdots  & 0  & 0 & 1 & 0 & 1 & 1 & 0 & 0 \\
	0 	& 0  &\cdots  & \cdots  & 0  & 0 & 0 & 1 & 0 & 0 & 1 & 1 
	\end{array}\right].\]}By performing row operations, we transform $\widetilde{\Gamma}_{C}$ into a block triangular matrix with diagonal blocks of size $\e$ and $7$,  respectively,  as follows:
\begin{itemize}
\item Subtract  the sum of the rows $1,\dots,\e$ from the $(\e+4)$-th row.
\item Subtract the sum of the rows $2,\dots,\e$ from the $(\e+3)$-th row.
\item Subtract  the $\e$-th row from the $(\e+1)$-th row.
\item Add to the $(\e+5)$-th row the following linear combination of the first $\e$ rows:
\[ \sum_{i=1}^\e \Big( \tfrac{1}{\lambda_1} + \dots + \tfrac{1}{\lambda_i} \Big) f_i,\]
where $f_i$ is the $i$-th row.
\end{itemize}
After these operations, which preserve the determinant, we obtain the following matrix:
{\small \[\left[\begin{array}{cccc;{2pt/2pt}ccc;{2pt/2pt}cccc}
	-\lambda_1 & \lambda_2 &  &   0 & 0 & -\lambda_{\e+2} & \lambda_{\e+3} & & & & \\
 	&  \ddots &  \ddots&  & \vdots & \vdots &\vdots & & \multicolumn{2}{c}{ 0} & \\
 	& &  \ddots &   \lambda_{\e}  &   0  & -\lambda_{\e+2}  & \lambda_{\e+3}  & & & & \\
	0 & &  & -\lambda_{\e} & \lambda_{\e+1} & -\lambda_{\e+2} & \lambda_{\e+3} & & & & \\
	\hdashline[2pt/2pt]
	0  & \cdots & \cdots  &  0 & -\lambda_{\e+1} &0 &  -\lambda_{\e+3} & \lambda_{\e+4} & & & 0 \\
	0   & \cdots & \cdots  &  0 & -\lambda_{\e+1} & -\lambda_{\e+2} & 0 & & \lambda_{\e+5} & & \\
	0  & \cdots & \cdots  & 0   & -\lambda_{\e+1} &  (\e-1) \lambda_{\e+2}  & -\e \lambda_{\e+3}  & & &  \lambda_{\e+6} & \\
	0  & \cdots  & \cdots  & 0  & -\lambda_{\e+1} & \e  \lambda_{\e+2}  & -(\e+1)\lambda_{\e+3} & 0 & & & \lambda_{\e+7} \\
	\hdashline[2pt/2pt]
	0  & \cdots   & \cdots &  0 & 1+z_1\lambda_{\e+1} & -z_2 \lambda_{\e+2} & z_2 \lambda_{\e+3} & 1 & 1 & 1 & 1 \\
      0 	 &\cdots  & \cdots  & 0  & 0 & 1 & 0 & 1 & 1 & 0 & 0 \\
	0   &\cdots  & \cdots  & 0  & 0 & 0 & 1 & 0 & 0 & 1 & 1 
	\end{array}\right],\]}where
\[ z_1 = \sum_{i=1}^\e \tfrac{1}{\lambda_i},\qquad z_2 = \sum_{i=1}^\e \tfrac{\e - i + 1}{\lambda_i}.\]
The determinant of $\widetilde{\Gamma}_{C}$ is therefore equal to $(-1)^\e \lambda_{[\e]}$ times the determinant of the inferior diagonal block of size $7\times 7$ of the matrix above. 
We compute this determinant and obtain  the following expression:
\begin{align*}
& \lambda_{\e+2}\lambda_{\e+4}\lambda_{\e+6}\lambda_{\e+7}(1+ ( z_{1}+z_{2} ) \lambda_{\e+1}) 
+\lambda_{\e+1}\lambda_{\e+4}(\lambda_{\e+5}\lambda_{\e+7}+\lambda_{\e+6}\lambda_{\e+7}
+\lambda_{\e+5}\lambda_{\e+6})
\\ &  +\lambda_{\e+1}\lambda_{\e+5}\lambda_{\e+6}\lambda_{\e+7}(1+z_{1}\lambda_{\e+4}  +z_{2}\lambda_{\e+2})  + \big( ( \e\,z_{1}-z_{2} ) \lambda_{\e+1}    +\e\big )\lambda_{\e+3}\lambda_{\e+4}\lambda_{\e+5}\lambda_{\e+7}
\\ & + \big(( \e\,z_{1}+z_{1}-z_{2} ) \lambda_{\e+1}  + \e+1 \big)\lambda_{\e+3}\lambda_{\e+4}\lambda_{\e+5}\lambda_{\e+6}+\lambda_{\e+4}\lambda_{\e+5}\lambda_{\e+6}\lambda_{\e+7}
\\ &  +  
 \big(z_{1}\lambda_{\e+1}\lambda_{\e+2}\lambda_{\e+3}+\lambda_{\e+1}\lambda_{\e+2} +\lambda_{\e+1}\lambda_{\e+3}  +\lambda_{\e+2}\lambda_{\e+3} \big) \big( ( \e+1 ) \lambda_{\e+4}\lambda_{\e+6}+\e\, \lambda_{\e+4}\lambda_{\e+7} \\ &\qquad  +\e\,\lambda_{\e+5}\lambda_{\e+6}+( \e-1 ) \lambda_{\e+5}\lambda_{\e+7} \big).
\end{align*}
Since  $\e z_1 \geq z_2$ and $\e \geq 2$, this determinant is strictly positive. Hence, the determinant of $\widetilde{\Gamma}_{C}$ has sign $(-1)^\e$. By Theorem \ref{Theorem:Multistationarity_and_Determinant_Condition}, we conclude that $\{c_{\e},c_{\e+1},c_{2\e+1},c_{2\e+2}\}\not\in\Mult$.
\end{proof}

In view of Proposition \ref{prop:nsite_canonical} and  Theorem~\ref{Theorem Multistationartity and Realization}(ii) 
we obtain the following theorem.

	\begin{thm}\label{Theorem:Circuits_of_n_site_phosphorylation}
	Let $\widetilde{\cN}$ be an extension of the core $\e$-site phosphorylation network in \eqref{eq:core_nsite}  via the addition of intermediates that satisfies the generalized realization condition. Then $\widetilde{\cN}$ is multistationary if and only if at least one of $X_0+E,\dots,X_{\e-2}+E,X_\e+F,\dots,X_2+F$ is an input of an intermediate.
	\end{thm}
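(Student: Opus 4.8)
The plan is to deduce the theorem directly from Proposition~\ref{prop:nsite_canonical} by translating its conclusion about canonical extensions into a statement about arbitrary extensions via Theorem~\ref{Theorem Multistationartity and Realization}(ii). Let $C\subseteq\cC$ denote the set of inputs of the intermediates of $\widetilde{\cN}$. Since $\widetilde{\cN}$ satisfies the generalized realization condition by hypothesis, Theorem~\ref{Theorem Multistationartity and Realization}(ii) gives that $\widetilde{\cN}$ is multistationary if and only if the canonical extension $\widetilde{\cN}_C$ is. By Definition~\ref{Definition Multistationary structure} this is equivalent to $C\in\Mult$, so the entire question reduces to deciding membership of the input set $C$ in the multistationarity structure of the core network \eqref{eq:core_nsite}.

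Next I would replace ``$C\in\Mult$'' by ``$C$ contains a circuit.'' The key fact is that $\Mult$ is upward closed: whenever $C_1\subseteq C_2$, the network $\widetilde{\cN}_{C_2}$ is the canonical extension of $\widetilde{\cN}_{C_1}$ obtained by adjoining $1$-input intermediates for the complexes of $C_2\setminus C_1$, which satisfies the realization condition by Proposition~\ref{Proposition Canonical Networks and Generalized Realization Condition}; hence Theorem~\ref{Theorem Multistationartity and Realization}(i) lifts multistationarity from $\widetilde{\cN}_{C_1}$ to $\widetilde{\cN}_{C_2}$. Consequently $C\in\Mult$ if and only if $C$ contains some minimal element of $\Mult$, i.e.\ some circuit. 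By Proposition~\ref{prop:nsite_canonical} (for $\e\geq 2$) every circuit is a singleton $\{c_i\}$ with $i\notin\{\e,\e+1,2\e+1,2\e+2\}$, and these are exactly the complexes $X_0+E,\dots,X_{\e-2}+E,X_\e+F,\dots,X_2+F$. Thus $C$ contains a circuit if and only if at least one of these complexes lies in $C$, that is, is an input of an intermediate; this is precisely the stated criterion.

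Finally I would dispose of the boundary case $\e=1$: there Proposition~\ref{prop:nsite_canonical} yields $\Circuits=\emptyset$, so $\Mult=\emptyset$ and $\widetilde{\cN}$ is never multistationary, while the displayed list of complexes is empty and the right-hand condition is vacuously false, so the equivalence still holds. I do not expect any serious obstacle in this argument, since all the computational difficulty has already been absorbed into Proposition~\ref{prop:nsite_canonical}; the only point that genuinely requires care is the upward closure of $\Mult$, which is exactly what legitimizes passing from the singleton circuits to the ``at least one input'' formulation of the theorem.
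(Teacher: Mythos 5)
Your proposal is correct and follows essentially the same route as the paper, which derives the theorem directly from Proposition~\ref{prop:nsite_canonical} combined with Theorem~\ref{Theorem Multistationartity and Realization}(ii); the upward closure of $\Mult$ that you justify via Theorem~\ref{Theorem Multistationartity and Realization}(i) and Proposition~\ref{Proposition Canonical Networks and Generalized Realization Condition} is exactly the fact the paper invokes when it states that $\Mult$ and $\Circuits$ determine each other. Your explicit treatment of the $\e=1$ case and of the passage from circuits to the ``at least one input'' formulation simply spells out details the paper leaves implicit.
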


Note that the network in Example~\ref{Example:Canonical_Representatives_MPAK} is an extension of the $2$-site phosphorylation network, with set of inputs $C=\{X_0+E,X_1+E,X_2+F,X_1+F,X_0+F\}$. This network satisfies the generalized realization condition by Example \ref{Example:Generalized_Realization_Condition_MAPK_Nsite}. By Theorem~\ref{Theorem:Circuits_of_n_site_phosphorylation}, we conclude that the network is multistationary.

For the $\e$-site phosphorylation network for a fixed $\e$, Algorithm \ref{Algorith:Circuits_Main} requires the computation of one large determinant. The search approach described in Remark~\ref{rk:approaches}, 
stops after computing $2\e+14$ determinants, if we start with the small subsets, while it stops after computing
	\[\sum_{i=1}^3\binom{2\e-2}{i}+\sum_{i=4}^{2\e+2}\binom{2\e+2}{i}\]
determinants if we start with large subsets. 
For example, if $\e=2,3$, the first approach requires the computation of $18$ and $20$ determinants, and the second approach requires the computation of  $25$ and $177$ determinants respectively. In these cases, the computation of the determinants takes negligible time, and therefore our algorithm is the fastest strategy.

	\section{Realization conditions}\label{sec:Realization_conditions}
	
\blue{	In this section we briefly discuss generic algebraic approaches to decide whether the realization conditions are satisfied. 
We proceed to explain how we can break the problem of checking whether the realization conditions are satisfied, into checking the conditions  for a collection of (smaller) subnetworks. We conclude with a list of small networks that satisfy them. These small networks cover typical cases arising in applications. 
}

The two realization conditions concern the surjectivity of a rational map on the positive orthant. 
Specifically, let $\widetilde{\cN}$ be an extension of $\cN$ via the addition of the intermediates $Y_1,\dots,Y_m$ and $C'$ be the set of input complexes that do not belong to the stoichiometric subspace. Consider the following maps from $\R_{>0}^{\widetilde{r}}$:
\begin{align}\label{eq:phi}
\blue{\phi^*(\k)}  &= \big(\phi_{c\rightarrow c'}(\k)\mid c\rightarrow c'\in\cR\big) \in    \R_{>0}^{r}, \\
\phi'(\k) &= \Big( \blue{\phi^*(\k)},\big( \sum\nolimits_{i\in [m]} \mu_{i,c}(\k)\mid c\in C'\big)\Big) \in  \R_{>0}^{r}\times\R_{>0}^{C'}.\label{eq:phi:prime}
\end{align}
The generalized realization condition is equivalent to the surjectivity of $\phi'$ and the realization condition to the surjectivity of \blue{$\phi^*$}. So let  $f=(\tfrac{f_1}{g_1},\dots,\tfrac{f_m}{g_m})$ be an arbitrary map from $\R_{>0}^n$ to $\R_{>0}^m$, defined by rational functions $\tfrac{f_i}{g_i}\in \R(x_1,\dots,x_n)$. Consider the ideal $I=\big\langle g_1y_1-f_1,\dots,g_my_m-f_m,1-z\prod_{i=1}^mg_i\big\rangle\subseteq\R[y_1,\dots,y_m,x_1,\dots,x_n,z]$. 
As discussed in \S 2.3 of the electronic supplementary material of \cite{Feliu-Simplifying}, if 
 $I\cap\R[y_1,\dots,y_m]\neq \{0\}$, then $f$ is not surjective, but the reverse does not necessarily hold.
	
Another approach is to use Cylindrical Algebraic Decomposition (CAD) \cite{basu2007algorithms,SolvingParametricPolynomialSystemsFabrice,MaplePackageCAD}. 
Consider the parametric multivariate system of equations $(x_1,\dots,x_n,z)\in V(I)$ with $y_1,\dots,y_m$ treated as parameters and all variables and parameters constrained to be real and positive. The map $f$ is surjective if and only if 
this system has at least one positive real solution when evaluated at the sample parameter point of all cells obtained after performing CAD.
This approach fully characterizes whether $f$ is surjective, but CAD is computationally expensive. In particular, the number of cells is doubly exponential in  the number of variables and parameters, and depends also on the degree and number of polynomials in the system  \cite[Theorem 5]{NumberOfCellsCAD}. Therefore the use of CAD is impractical already in relatively small examples.

	\begin{example}\label{Example:CAD_vs_Elimination_Realization}
		We consider the following core network and its extension via the addition of one intermediate $Y$:
		\begin{align*}
		\cN& :\adjustbox{valign=c}{\xymatrix @C=1.5pc @R=0pc { & c_3 & \\
				c_1\ar[ur]\ar[dr] &  & c_2\ar[ul]\ar[dl]\\
				& c_4 & }} & 
		\widetilde{\cN} & :\adjustbox{valign=c}{\xymatrix@C=1.5pc @R=0pc { c_1\ar[dr] & & c_3\\
				& Y\ar[ur]\ar[dr] &  \\
				c_2\ar[ur] & & c_4.}}
		\end{align*}
Using CAD on the system of equations describing the realization condition, we obtain three cells. The sample point of each cell yields a system with infinitely many positive solutions. Therefore the realization condition holds.
	\end{example}

In view of the difficulties of checking the realization conditions in practice, 
we start by understanding how the coefficients $\mu_{i,c}$ are found. 
Let $\widetilde{\cN}$ be an extension of $\cN$ via the addition of  intermediates $Y_1,\dots,Y_m$. 
Consider the digraph associated with  $\widetilde{\cN}$ and let $\mathcal{Y}_1,\dots,\mathcal{Y}_{t'}$ denote the vertex sets of the connected components of the subgraph induced by the subset of vertices  $\{Y_1,\dots,Y_m\}$. 
For each non-intermediate complex $c$ and intermediate $Y_i$, consider the labeled digraph $G_{i,c}$ with vertex set 
$\mathcal{Y}_\ell \cup \{\star\}$ if $Y_i\in \mathcal{Y}_\ell$. 
Labeled edges are $Y_i\ce{->[\k_{Y_i\rightarrow Y_j}]}Y_j$ if $Y_i\rightarrow Y_j\in\widetilde{\cR}$, $\star\ce{->[\k_{c\rightarrow Y_i}]}Y_i$ if $c\rightarrow Y_i\in\widetilde{\cR}$ and $Y_i\ce{->[\beta_i]}\star$ with $\beta_i=\sum_{Y_i\rightarrow c'}\k_{Y_i\rightarrow c'}$ if $\beta_i\neq 0$. For each vertex $v$ of $G_{i,c}$, define $\Theta_{i,c}(v)$ to be the set of all spanning trees rooted at $v$, that is,  $v$ is the only vertex with zero outdegree. Given a  tree $\tau$, let $\pi(\tau)$ be the product of all labels of the edges of $\tau$. Then
\begin{equation}\label{eq:muic}
\mu_{i,c}=\tfrac{\sum_{\tau\in\Theta_{i,c}(Y_i)}\pi(\tau)}{\sum_{\tau\in\Theta_{i,c}(\star)}\pi(\tau)}.
\end{equation}
The numerator of $\mu_{i,c}$ is linear in the reaction rate constants of the form $\kappa_{c\rightarrow Y_j}$,  and these reaction rate constants do not appear in the denominator. 
To read more about properties of the $\mu_{i,c}$'s and how to compute them using the Matrix-Tree theorem, see  \cite{Feliu-Simplifying}.
	
The components of \blue{$\phi^*$} and $\phi'$ might not  depend on the reaction rate constants of all reactions in  the network. 
Specifically,  from \eqref{eq:muic} and \eqref{Equation:phi} it follows that $\phi_{c\rightarrow c'}$ depends on $c\rightarrow c'$, if this reaction belongs to $\widetilde{\cR}$, and possibly on the reactions involving intermediates in the sets $\mathcal{Y}_j$ such that there exists a path from $c$ to $c'$ with all intermediates in $\mathcal{Y}_j$.
So for each reaction, we consider the union of these relevant sets of intermediates $\mathcal{Y}_j$. 
Then $\phi_{c_1\rightarrow c_1'}$ and $\phi_{c_2\rightarrow c_2'}$ do not depend on a common reaction rate constant if the sets of intermediates corresponding to $c_1\rightarrow c_1'$ and $c_2\rightarrow c_2'$ are disjoint. In this way we  partition  $\widetilde{\cR}$ into subsets of reactions, that is, subnetworks, for which surjectivity of the map $\phi$ can be checked independently on each smaller network. 

We proceed similarly for $\phi'$, but in this case the relevant sets of intermediates $\mathcal{Y}_j$ are those for which there exists a path from $c$ to at least one $Y_{i}\in \mathcal{Y}_j$ (or equivalently, $\mu_{i,c}\neq 0$).

\begin{example}\label{Example:Generalized_Realization_Condition_MAPK_Nsite}
We consider the generalized realization condition for Example~\ref{Example:Canonical_Representatives_MPAK}. By the discussion above, this condition needs to be checked independently on the following three subnetworks: 
\[\begin{aligned}
	\cN_1\colon	X_0+E & \ce{<=>[\kappa_1][\kappa_2]} Y_1\ce{->[\kappa_3]} X_1+E \qquad \qquad  	\cN_2\colon X_1+E\ce{<=>[\kappa_4][\kappa_5]} Y_2\ce{->[\kappa_6]} X_2+E\\
			\cN_3\colon X_2+F & \ce{<=>[\kappa_7][\kappa_8]} Y_3\ce{->[\kappa_9]} Y_4\ce{<=>[\kappa_{10}][\kappa_{11}]} X_1+F\ce{<=>[\kappa_{12}][\kappa_{13}]} Y_5\ce{->[\kappa_{14}]} Y_6\ce{<=>[\kappa_{15}][\kappa_{16}]} X_0+F.	
	\end{aligned}
\]
For the three subnetworks the generalized realization condition holds due to Proposition~\ref{Lemma:Realization_more_classes}(ii) below.
	\end{example}

	We  next show that the realization condition holds for specific classes of intermediates without the need to  do any extra computations.
	
	\begin{proposition}\label{Lemma:Realization_more_classes}
		The realization condition holds for the following types of \blue{extended}  networks via the addition of  intermediates $Y_1,\dots,Y_m$. 	
		\begin{enumerate}[(i)]
			\item 
			\[
			\xymatrix @C=0.5pc @R=0pc{
				& & & & & & & \\
				& & & & & & &  \\
				& c\ar@/^3pc/[rrrrrr]^{\ell_0} & Y_1\ar@/^2pc/[rrrrr]^{\mkern-36mu\ell_1}  & \dots & Y_m\ar[rrr]^{\quad\ell_m} & & & c'\\
				& & & & & & &  
				\save "2,1"."4,6"*\frm{e}
				\restore
			}\]
			with an arbitrary digraph structure among the complexes $c,Y_1,\dots,Y_m$ such that there is a path from $c$ to all $Y_i$, and where some reactions with label $\ell_i$ might not exist.
			
			\item $c\ce{<->}Y_1 \ce{<->} Y_2\ce{<->} \dots \ce{<->} Y_m \ce{<->} c'$, provided $\{Y_1,\dots,Y_m\}$ is  a set of intermediates, and where $\ce{<->}$ means the reaction can be either irreversible or reversible. These networks satisfy also  the generalized realization condition. Further,  a union of subnetworks of this form such that the sets of intermediates of each subnetwork do not intersect, satisfies also the generalized realization condition.

			\item
			\[
			\xymatrix @C=0.5pc @R=0pc{
				& & &  & & & & c_1\\
				& c_0 & Y_1  & \dots & Y_m \ar[urrr]^{\ell_1}\ar[drrr]_{\ell_p} & & & \vdots\\
				& & &  & & & & c_p
				\save "1,1"."3,6"*\frm{e}
				\restore
			}\]
			with an arbitrary digraph structure among the complexes $c_0,Y_1,\dots,Y_m$ such that there exists a directed path from $c_0$ to $Y_m$, \blue{and where reactions with label $\ell_1,\dots,\ell_p$ have source $Y_m$.}
		\end{enumerate}
	
	\end{proposition}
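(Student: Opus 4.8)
The plan is to use the reformulation stated just before the proposition: the realization condition is exactly the surjectivity of the map $\phi^*$ in \eqref{eq:phi}, and the generalized realization condition the surjectivity of $\phi'$ in \eqref{eq:phi:prime}, both as maps into a positive orthant. For each of the three configurations I would first read off the set of core reactions and write the relevant components $\phi_{c\rightarrow c'}$ explicitly using \eqref{Equation:phi} and the spanning-tree formula \eqref{eq:muic}. The single mechanism driving all three cases combines two features of $\mu_{i,c}$. First, \emph{positivity}: the hypothesis that there is a directed path from the input to the relevant intermediates forces the spanning-tree numerator in \eqref{eq:muic} to be nonzero, so $\mu_{i,c}>0$. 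Second, \emph{homogeneity}: the numerator of $\mu_{i,c}$ is linear in the input constants $\{\k_{c\rightarrow Y_j}\}$ attached to the single complex $c$, and these appear neither in any denominator nor in $\mu_{i,c''}$ for $c''\neq c$. Consequently, scaling all input constants emanating from a fixed $c$ by $t>0$ multiplies every $\mu_{i,c}$ by $t$ while leaving the data attached to the other inputs untouched.

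For case (i) there is a single input $c$ and a single core reaction $c\rightarrow c'$, with $\phi_{c\rightarrow c'}=\k_{c\rightarrow c'}+\sum_i\k_{Y_i\rightarrow c'}\mu_{i,c}$. I would fix all internal and exit constants to arbitrary positive values, so that $B:=\sum_i\k_{Y_i\rightarrow c'}\mu_{i,c}$ becomes a fixed nonnegative number; scaling the inputs by $t$ and choosing $\k_{c\rightarrow c'}$ small then realizes $\k_{c\rightarrow c'}+tB=k_{c\rightarrow c'}$ for any prescribed positive value, giving surjectivity of $\phi^*$. Case (iii) is analogous: the single input is $c_0$, the core reactions are $c_0\rightarrow c_j$ for $j\in[p]$, and each has the form $\phi_{c_0\rightarrow c_j}=\k_{Y_m\rightarrow c_j}\,\mu_{m,c_0}$ with the \emph{common} positive factor $\mu_{m,c_0}$. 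I would set $\k_{Y_m\rightarrow c_j}$ proportional to the target $k_{c_0\rightarrow c_j}$ (fixing the correct ratios), fix the remaining constants, and finally rescale the inputs $\{\k_{c_0\rightarrow Y_j}\}$ by the single factor needed to normalize $\mu_{m,c_0}$, which does not disturb the exit constants. This realizes the whole tuple.

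Case (ii) carries the extra content. Here the inputs are $c$ and $c'$, the core reactions are $c\rightarrow c'$ and, when the backward path exists, $c'\rightarrow c$, with $\phi_{c\rightarrow c'}=\k_{Y_m\rightarrow c'}\mu_{m,c}$ and $\phi_{c'\rightarrow c}=\k_{Y_1\rightarrow c}\mu_{1,c'}$. Since the graphs attached to $c$ and to $c'$ use disjoint input edges, the two sides depend on disjoint groups of constants, so the plain realization condition follows from two independent one-parameter scalings exactly as in case (i). For the generalized realization condition I must additionally match $\alpha_c=\sum_i\mu_{i,c}$ and $\alpha_{c'}=\sum_i\mu_{i,c'}$, i.e.\ on each side hit a two-dimensional target such as $(\phi_{c\rightarrow c'},\alpha_c)$. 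Scaling the input constant moves this point along a ray, so the remaining task is to sweep the ratio $\phi_{c\rightarrow c'}/\alpha_c=\k_{Y_m\rightarrow c'}\mu_{m,c}/\sum_i\mu_{i,c}$ across all of $(0,\infty)$; I would achieve this by varying the exit constant $\k_{Y_m\rightarrow c'}$, which enters both directly and through the denominator $\beta_m$, and inspecting its limits as $\k_{Y_m\rightarrow c'}\to0^+$ and $\to\infty$. The union statement then follows from the subnetwork decomposition recalled before the proposition: when the intermediate sets are pairwise disjoint, $\phi'$ factors as a product over the subnetworks, so surjectivity of each factor yields surjectivity of the whole.

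The main obstacle is precisely this last two-dimensional surjectivity for the chain: unlike the one-dimensional situations of (i) and (iii), a single scaling no longer suffices, and one must show that the map from the (input, exit) constants onto $(\phi_{c\rightarrow c'},\alpha_c)$ covers the open positive quadrant. The cleanest route is probably an explicit handle on the tridiagonal spanning-tree sums of the chain through \eqref{eq:muic}, or an induction on the chain length $m$ that reduces a chain of length $m$ to one of length $m-1$ by absorbing the terminal intermediate. Throughout, the positivity of every $\mu_{i,c}$ guaranteed by the path hypotheses is what keeps the constructed $\k$ inside $\R^{\widetilde{r}}_{>0}$.
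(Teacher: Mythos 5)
Your cases (i) and (iii) are correct and are essentially the paper's own argument: both rest on the two facts about \eqref{eq:muic} that you isolate (positivity of $\mu_{i,c}$ under the path hypothesis, and degree-one homogeneity of its numerator in the constants $\k_{c\rightarrow Y_j}$, which do not occur in the denominator), and your input-scaling is just a reparametrization of the paper's linearity argument. The genuine gap is in (ii), precisely at the point you flag as ``the main obstacle'': the generalized realization condition for the chain is not proved, and the specific sweep you propose would fail. Already for $m=2$, writing $p=\k_{Y_1\rightarrow Y_2}$, $q=\k_{Y_2\rightarrow Y_1}$, $b'=\k_{Y_2\rightarrow c'}$, the spanning-tree formula gives numerators $a(q+b')$ for $\mu_{1,c}$ and $ap$ for $\mu_{2,c}$ (with a common denominator), so the ratio you want to sweep is
\[
\frac{\k_{Y_2\rightarrow c'}\,\mu_{2,c}}{\mu_{1,c}+\mu_{2,c}}=\frac{b'p}{p+q+b'},
\]
which is bounded above by $p$: varying the exit constant alone, with limits at $0^+$ and $\infty$, only covers $(0,p)$, never all of $(0,\infty)$. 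One must move the internal constants simultaneously. Moreover, the two ``sides'' do not decouple the way your plan assumes: the ratios $\phi_{c\rightarrow c'}/\alpha_c$ and $\phi_{c'\rightarrow c}/\alpha_{c'}$ depend on the \emph{same} internal (and in general exit) constants, so you face one joint surjectivity problem rather than two independent two-dimensional ones; an argument that hits each target separately does not compose into an argument that hits them simultaneously.

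What the paper actually does for (ii) is the second route you mention, induction on $m$, but carried out with nontrivial bookkeeping that your proposal does not supply. The base case $m=1$ is an explicit four-component rational map, $\big(\tfrac{\k_1\k_3}{\k_2+\k_3},\tfrac{\k_2\k_4}{\k_2+\k_3},\tfrac{\k_1}{\k_2+\k_3},\tfrac{\k_4}{\k_2+\k_3}\big)$, whose surjectivity onto the positive orthant (in all reversibility patterns) is verified by CAD --- note your approach never establishes any base case. The inductive step peels off $Y_1$, views the chain as a one-intermediate extension of the shorter chain, and uses composition identities of the form $\mu_{i,c}=\overline{\mu}_{i,c}(\varphi(\k))$ for $i\geq 2$ together with $\mu_{1,c}=\widetilde{\mu}_{1,Y_2}\,\overline{\mu}_{2,c}+\widetilde{\mu}_{1,c}$ and $\mu_{1,c'}=\widetilde{\mu}_{1,Y_2}\,\overline{\mu}_{2,c'}$; the targets are then split as $\alpha_c=\alpha_{c,1}+\alpha_{c,2}$, $\alpha_{c'}=\alpha_{c',1}+\alpha_{c',2}$ with $\alpha_{c',2}<\alpha_{c,2}$, the first parts being delegated to the induction hypothesis and the second to the $m=1$ case. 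This splitting, and the verification that the constraints it imposes (e.g.\ $\widetilde{\mu}_{1,c}=\alpha_{c,2}-\alpha_{c',2}>0$) are solvable, is where the entire content of (ii) lies; without it your proposal proves only the plain realization condition for the chain, which is not enough for the paper's applications (the $\e$-site phosphorylation results use exactly the generalized condition of (ii)).
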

	\begin{proof}

		\medskip
		\noindent
		(i) The realization condition is equivalent to the  scalar-valued map $\ell_0+\sum_{i=1}^m\ell_i \mu_{i,c}$ being surjective.
			This map is  linear in $\ell_0,\k_{c\rightarrow Y_1},\dots,\k_{c\rightarrow Y_m}$ (some might be zero, but at least one is non-zero).   Hence the statement is clear. 
		
		\medskip
		\noindent
	(ii) We start with the case with only one such block. 
We write
\begin{equation}\label{eq:reversible}
\widetilde{\cN}\colon \quad c\ce{<=>[\kappa_1][\kappa_2]}Y_1\ce{<=>[\kappa_3][\kappa_4]}Y_2\ce{<=>[\kappa_5][\kappa_6]}\dots\ce{<=>[\kappa_{2m-1}][\kappa_{2m}]}Y_m\ce{<=>[\kappa_{2m+1}][\kappa_{2m+2}]}c'.
\end{equation}
If not all reactions are reversible, then we assume that the reaction of the core network is $c\rightarrow c'$. This means that all reactions with label with odd subindex are present, and the reverse reactions might or might not be present. 

We can assume without loss of generality that 
 neither $c\rightarrow c'$ nor $c'\rightarrow c$ belong to $\widetilde{\cN}$ (if a map is surjective between two positive orthants, adding an extra variable that sums to one component preserves surjectivity).

We have that $\phi_{c\rightarrow c'}(\k)= \k_{2m+1}\mu_{m,c}(\kappa)$ and  $\phi_{c'\rightarrow c}(\k)=\k_2\mu_{1,c'}(\kappa)$ (the latter being zero in the irreversible case).
Throughout we assume that the set $C'$ used to define $\phi'$ equals $\{c,c'\}$. This is the worst case scenario. 

We show by induction on $m$ that this network satisfies the generalized realization condition. For $m=1$, if all reverse reactions are present we have that 
\[ \phi'(\k_1,\k_2,\k_3,\k_4)= \big(\phi_{c\rightarrow c'},\phi_{c'\rightarrow c}, \mu_{1,c}, \mu_{1,c'}\big)= \big( \tfrac{\k_1\k_3}{\k_2+\k_3}, \tfrac{\k_2\k_4}{\k_2+\k_3}, \tfrac{\k_1}{\k_2+\k_3}, \tfrac{\k_4}{\k_2+\k_3}  \big).\]
 A missing reverse reaction corresponds to setting the reaction rate constant equal to zero, and projecting $\phi'$ away from the  components that become zero. 
We confirm using CAD that this map is surjective when restricted to the positive orthants, in the four scenarios obtained by considering none, one or both reverse reactions.

Assume now that \eqref{eq:reversible} satisfies the generalized realization condition for $m-1$. 
We view $\widetilde{\cN}$ as an extended network of 
\[\overline{\cN}\colon \quad c\ce{<=>[\overline{\kappa}_1][\overline{\kappa}_2]} Y_2\ce{<=>[\kappa_5][\kappa_6]}\dots\ce{<=>[\kappa_{2m-1}][\kappa_{2m}]}Y_{m} \ce{<=>[\kappa_{2m+1}][\kappa_{2m+2}]}c',\]
via the addition of one intermediate $Y_1$. If we let $\widetilde{\k} =(\k_{1}, \k_2,\k_3,\k_4)$, this gives rise to the following relevant functions
\begin{align}
\widetilde{\mu}_{1,c} (\widetilde{\k} ) & = \tfrac{\k_{1}}{\k_{2}+\k_{3}},&  \widetilde{\mu}_{1,Y_2}  (\widetilde{\k} ) &= \tfrac{\k_{4}}{\k_{2}+\k_{3}}, &
 \overline{\k}_{1} &= \k_{3} \widetilde{\mu}_{1,c} (\widetilde{\k} ) , & \overline{\k}_{2} &= \k_{2} \widetilde{\mu}_{1,Y_2} (\widetilde{\k} ) . \label{eq:kbar}
\end{align}
By the case $m=1$, the right-hand sides of these equalities define a surjective map when restricted to the positive orthant (by omitting the zero components if some reaction rate constants are set to zero). 
In turn, $\overline{\cN}$ is an extended network of $c\ce{<=>} c'$ via the addition of the intermediates $Y_2,\dots,Y_{m}$. By the induction hypothesis, $\overline{\cN}$ satisfies the generalized realization condition. Let $\overline{\mu}_{i,c}, \overline{\mu}_{i,c'}$ for $i=2,\dots,m$ correspond to this extension.

Recall that  $\mu_{i,c}$ and $\mu_{i,c'}$ are the coefficients of $x^c$ and $x^{c'}$ respectively after writing $y_1,\dots,y_m$ in terms of $x$ by solving the steady state equations corresponding to the intermediates. This system can be solved iteratively, by first finding $y_1$ and then $y_{2},\dots,y_{m}$.
If we let 
$\varphi(\k)= \big(\k_{3} \widetilde{\mu}_{1,c} (\widetilde{\k} ),  \k_{2} \widetilde{\mu}_{1,Y_2} (\widetilde{\k} ), \k_5,\k_6,\dots,\k_{2m+2}\big)$, it follows that 
\[ \mu_{i,c}= \overline{\mu}_{i,c} (\varphi(\k)),\qquad \mu_{i,c'}= \overline{\mu}_{i,c'} (\varphi(\k)), \qquad \textrm{ for }i=2,\dots,m.\]
For $i=1$, iterative elimination of $y_1$ and   $y_2= \overline{\mu}_{2,c}(\varphi(\k)) x^{c}+ \overline{\mu}_{2,c'}(\varphi(\k)) x^{c'}$ gives that
\[ y_1=  \widetilde{\mu}_{1,Y_{2}} (\widetilde{\k} ) y_{2} + \widetilde{\mu}_{1,c}  (\widetilde{\k} ) x^{c}
 = \widetilde{\mu}_{1,Y_{2}}(\widetilde{\k} )  \overline{\mu}_{2,c'}(\varphi(\k)) x^{c'}+ \big(   \widetilde{\mu}_{1,Y_{2}} (\widetilde{\k} )  \overline{\mu}_{2,c} (\varphi(\k)) + \widetilde{\mu}_{1,c}  (\widetilde{\k} ) \big)  x^{c}.
     \]
Hence
\[ \mu_{1,c}(\k)= \widetilde{\mu}_{1,Y_{2}} (\widetilde{\k} )  \overline{\mu}_{2,c} (\varphi(\k)) + \widetilde{\mu}_{1,c}  (\widetilde{\k} ) ,\qquad 
\mu_{1,c'}(\k)=   \widetilde{\mu}_{1,Y_{2}}(\widetilde{\k} )  \overline{\mu}_{2,c'}(\varphi(\k)) . \]
Therefore 
$\phi'(\k)= \big( \k_{2m+1}\mu_{m,c}(\kappa), \k_2\mu_{1,c'}(\kappa), \sum_{i=1}^m \mu_{i,c}(\k),  \sum_{i=1}^m \mu_{i,c'}(\k) \big)$ can be written as 
\begin{align*}
\phi'(\k)&=   \big( \k_{2m+1}  \overline{\mu}_{m,c} (\varphi(\k))  , \k_2 \widetilde{\mu}_{1,Y_{2}}(\widetilde{\k} )  \overline{\mu}_{2,c'}(\varphi(\k)) , \sum\nolimits_{i=2}^{m} \overline{\mu}_{i,c}  (\varphi(\k)),  \sum\nolimits_{i=2}^{m} \overline{\mu}_{i,c'} (\varphi(\k))\big) \\& 
+ \big(0,0,  \widetilde{\mu}_{1,Y_{2}} (\widetilde{\k} )  \overline{\mu}_{2,c} (\varphi(\k)) + \widetilde{\mu}_{1,c}  (\widetilde{\k} ) , \, \widetilde{\mu}_{1,Y_{2}}(\widetilde{\k} )  \overline{\mu}_{2,c'}(\varphi(\k)) \big).
\end{align*}
Let $(k_1,k_2,\alpha_c,\alpha_{c'})\in \R^4_{>0}$, with $k_2=0$ in the irreversible case and $\alpha_{c'}=0$ if $c'$ is not an input of any intermediate. Write $\alpha_c=\alpha_{c,1}+\alpha_{c,2}$, $\alpha_{c'} = \alpha_{c',1}+\alpha_{c',2}$ such that $\alpha_{c',2}<\alpha_{c,2}$ and $\alpha_{c,1},\alpha_{c,2},\alpha_{c',1},\alpha_{c',2}>0$ ($= 0$ as appropriate). 
We want to show that $(k_1,k_2,\alpha_c,\alpha_{c'})=\phi'(\k)$ for some $\k$. 
First note that by the induction hypothesis, we can find $\overline{\k}=\big(\overline{\k}_1,\overline{\k}_2,\k_5,\dots,\k_{2m+2} )$ such that 
\[(k_1,k_2,\alpha_{c,1},\alpha_{c',1}) = \Big( \k_{2m+1}   \overline{\mu}_{m,c} (\overline{\k}), \overline{\k}_2\overline{\mu}_{2,c'} (\overline{\k}), \sum\nolimits_{i=2}^{m} \overline{\mu}_{i,c}  (\overline{\k}),  \sum\nolimits_{i=2}^{m} \overline{\mu}_{i,c'} (\overline{\k})\Big).  \]
By the last two equalities in \eqref{eq:kbar},  the decomposition of $\phi'(\k)$ above and the definition of $\varphi$, all we need is to show that there exists $\widetilde{\k} =(\k_{1}, \k_2,\k_3,\k_4)$ such that
\begin{align*}
  \overline{\k}_{1} &= \k_{3} \widetilde{\mu}_{1,c} (\widetilde{\k} ) , & \overline{\k}_{2} &= \k_{2} \widetilde{\mu}_{1,Y_2} (\widetilde{\k} ),  \\
\alpha_{c,2} &= \widetilde{\mu}_{1,Y_{2}} (\widetilde{\k} )  \overline{\mu}_{2,c} (\overline{\k}) + \widetilde{\mu}_{1,c}  (\widetilde{\k} ) ,
& \alpha_{c',2}& =  \widetilde{\mu}_{1,Y_{2}}(\widetilde{\k} )  \overline{\mu}_{2,c'}(\overline{\k}).
\end{align*}
This gives in particular that  $\overline{\k}= \varphi(\k)$.
Since $\overline{\k}$ has now been fixed, we want
\begin{align*}
  \overline{\k}_{1} &= \k_{3} \widetilde{\mu}_{1,c} (\widetilde{\k} ) , & \overline{\k}_{2} &= \k_{2} \widetilde{\mu}_{1,Y_2} (\widetilde{\k} ),   &
  \widetilde{\mu}_{1,Y_{2}}(\widetilde{\k} ) &  = \frac{ \alpha_{c',2}  }{ \overline{\mu}_{2,c'}(\overline{\k})}>0 , &    \widetilde{\mu}_{1,c}(\widetilde{\k}) &= \alpha_{c,2}- \alpha_{c',2}>0.
\end{align*}
Since the  generalized realization condition holds for $m=1$, there exist  $\k_{1},\dots,\k_{4}$ such that this system holds (or the equivalent system if some reactions are irreversible). 
This finishes the proof for the case where  there is only one block.

If there are several blocks with the same structure as \eqref{eq:reversible}, then we simply need to notice that $\phi'$ can be written as the Cartesian product of the corresponding map for each block, and $\sum_{i=1}^m \mu_{i,c}$ can be split as a sum of the $\mu_{i,c}$'s of each block. Since the generalized realization condition holds for each block, it also holds for the whole network by splitting $\alpha_c$ accordingly for each complex $c$.

		\medskip
		\noindent
		(iii) The core network has $p$ reactions $c_0\rightarrow c_1,\dots,c_0\rightarrow c_p$.  We have
		$\phi_{c_0\rightarrow c_i}(\k)=\ell_i\mu_{m,c_0}$.   The denominator of $\mu_{m,c_0}$ is a multiple of $\sum_{i=1}^p\ell_i$ and $\mu_0=(\sum_{i=1}^p\ell_i)\mu_{m,c_0}$ does not depend on any $\ell_i$. Note that the scalar-valued function $\mu_0$ is positive and linear in   $(\k_{c_0\rightarrow Y_1},\dots,\k_{c_0\rightarrow Y_m})$. Hence by varying the reaction rate constants different from $\ell_i$, 
$\mu_0$ covers $\R_{>0}$.
With this we have that given $k_1,\dots,k_p>0$, we define $\ell_i= k_i$ and choose the rest of reaction rate constants such that $\mu_0=  \sum_{i=1}^p k_i$.
Then   $\phi_{c_0\rightarrow c_i}(\k)=k_i$, showing that \blue{$\phi^*$} is surjective. 
	\end{proof}

\smallskip
\paragraph{Acknowledgements. }  This work has been supported by the Independent Research Fund of Denmark.  We thank Alicia Dickenstein, Martin Helmer and Ang\'elica Torres for comments on a preliminary version of this manuscript.

	\small


\begin{thebibliography}{10}

\bibitem{basu2007algorithms}
S.~Basu, R.~Pollack, and M.~F. Coste-Roy.
\newblock {\em Algorithms in real algebraic geometry}, volume~10.
\newblock Springer Science \& Business Media, 2007.

\bibitem{Dickenstein-Bihan-Magali}
F.~Bihan, A.~Dickenstein, and M.~Giaroli.
\newblock Lower bounds for positive roots and regions of multistationarity in
  chemical reaction networks.
\newblock {\em arXiv}, 1807.05157, 2019.

\bibitem{MAPK-Multistationarity}
R.~Bradford, J.~H. Davenport, M.~England, H.~Errami, V.~Gerdt, D.~Grigoriev,
  C.~Hoyt, M.~Ko\v{s}ta, O.~Radulescu, T.~Sturm, and A.~Weber.
\newblock A case study on the parametric occurrence of multiple steady states.
\newblock In {\em Proceedings of the International Symposium on Symbolic and
  Algebraic Computation, ISSAC}, pages 45--52. Association for Computing
  Machinery, 2017.

\bibitem{Dickenstein-Structured}
A.~Dickenstein, M.~P\'{e}rez~Mill\'{a}n, A.~Shiu, and X.~Tang.
\newblock Multistationarity in structured reaction networks.
\newblock {\em B. Math. Biol.}, https://doi.org/10.1007/s11538-019-00572-6,
  2019.

\bibitem{Eisenbud-Binomial}
D.~Eisenbud and B.~Sturmfels.
\newblock Binomial ideals.
\newblock {\em Duke Math. J.}, 84(1):1--45, 1996.

\bibitem{NumberOfCellsCAD}
M.~England, R.~Bradford, and J.~H. Davenport.
\newblock Improving the use of equational constraints in cylindrical algebraic
  decomposition.
\newblock In {\em Proceedings of the International Symposium on Symbolic and
  Algebraic Computation, ISSAC}, pages 165--172. Association for Computing
  Machinery, 2015.

\bibitem{feinbergnotes}
M.~Feinberg.
\newblock Lectures on chemical reaction networks.
\newblock Available online at
  \url{http://www.crnt.osu.edu/LecturesOnReactionNetworks}, 1980.

\bibitem{FeinbergExistenceAndUniquness}
M.~Feinberg.
\newblock The existence and uniqueness of steady states for a class of chemical
  reaction networks.
\newblock {\em Arch. Ration. Mech. Anal.}, 132(4):311--370, 1995.

\bibitem{Feliu-EnzymeSharing}
E.~Feliu and C.~Wiuf.
\newblock Enzyme-sharing as a cause of multi-stationarity in signalling
  systems.
\newblock {\em J. R. Soc. Interface}, 9(71):1224--1232, 2012.

\bibitem{Feliu-Simplifying}
E.~Feliu and C.~Wiuf.
\newblock Simplifying biochemical models with intermediate species.
\newblock {\em J. R. Soc. Interface}, 10(87):20130484, 2013.

\bibitem{MaplePackageCAD}
J.~Gerhard, D.~Jeffrey, and G.~Moroz.
\newblock A package for solving parametric polynomial systems.
\newblock {\em ACM Commun. Comput. Algebra}, 43(3/4):61--72, 2010.

\bibitem{gunawardena-notes}
J.~Gunawardena.
\newblock Chemical reaction network theory for in-silico biologists.
\newblock Available online at \url{http://vcp.med.harvard.edu/papers/crnt.pdf},
  2003.

\bibitem{SolvingParametricPolynomialSystemsFabrice}
D.~Lazard and F.~Rouillier.
\newblock Solving parametric polynomial systems.
\newblock {\em J. Symb. Comput.}, 42(6):636--667, 2007.

\bibitem{Feliu-Sign}
S.~M\"{u}ller, E.~Feliu, G.~Regensburger, C.~Conradi, A.~Shiu, and
  A.~Dickenstein.
\newblock Sign conditions for injectivity of generalized polynomial maps with
  applications to chemical reaction networks and real algebraic geometry.
\newblock {\em Found. Comput. Math.}, 16(1):69--97, 2016.

\bibitem{Dickenstein-MESSI}
M.~P\'{e}rez~Mill\'{a}n and A.~Dickenstein.
\newblock The structure of {MESSI} biological systems.
\newblock {\em {SIAM} J. Appl. Dyn. Syst.}, 17(2):1650--1682, 2018.

\bibitem{Dickenstein-Toric}
M.~P\'{e}rez~Mill\'{a}n, A.~Dickenstein, A.~Shiu, and C.~Conradi.
\newblock Chemical reaction systems with toric steady states.
\newblock {\em Bull. Math. Biol.}, 74(5):1027--1065, 2012.

\bibitem{Grobner}
A.~H. Sadeghimanesh and E.~Feliu.
\newblock Gr\"obner bases of reaction networks with intermediate species.
\newblock {\em Adv. Appl. Math.}, 107(2):74--101, 2019.

\bibitem{Saez:reduction}
M.~S{\'a}ez, C.~Wiuf, and E.~Feliu.
\newblock {Graphical reduction of reaction networks by linear elimination of
  species}.
\newblock {\em J. Math. Biol.}, 74(1):195--237, 2017.

\bibitem{ExtremePathways}
C.~H. Schilling, D.~Letscher, and Palsson~B. \O{}.
\newblock Theory for the systemic definition of metabolic pathways and their
  use in interpreting metabolic function from a pathway-oriented prespective.
\newblock {\em J. Theor. Biol.}, 203(3):229--248, 2000.

\bibitem{Wang:2008dc}
L.~Wang and E.~D. Sontag.
\newblock {On the number of steady states in a multiple futile cycle}.
\newblock {\em J. Math. Biol.}, 57(1):29--52, 2008.

\end{thebibliography}
\end{document}